\newtheorem{theorem}{Theorem}[section]
\newtheorem{lemma}[theorem]{Lemma}
\theoremstyle{definition}
\renewenvironment{proof}{\vspace{-0.05in}\noindent{\bf Proof:}}%
       {\hspace*{\fill}$\Box$\par}
        {\hspace*{\fill}$\Box$\par}
        {\hspace*{\fill}$\Box$\par}
\newcommand{\eps}{\epsilon}
\newcommand{\E}{\mathbb{E}}
\newcommand{\R}{{\mathbb{R}}}
\def\script#1{\mathcal{#1}}
\def\card#1{|#1|}
\def\set#1{\left\{#1\right\}}
\def\etal{{\em et al.}\@\xspace}
\DeclareMathOperator{\Ex}{\mathbf{E}}
\DeclareMathOperator{\poly}{poly}
\DeclareMathOperator{\argmax}{argmax}
\def\bx{\mathbf{x}}
\def\by{\mathbf{y}}
\def\bz{\mathbf{z}}
\def\one{\mathbf{1}}
\def\zero{\mathbf{0}}
\def\bp{\mathbf{p}}
\def\bb{\mathbf{b}}
\def\hC{\widehat{C}}
\def\mypar#1{\medskip\noindent \textbf{#1}}
\def\opt{\ensuremath{\mathrm{OPT}}}
\def\D{\script{D}}
\def\I{\script{I}}
\def\Greedy{\ensuremath{\mathsf{Greedy}}\xspace}
\def\DTreshGreedy{\ensuremath{\mathsf{DThreshGreedy}}\xspace}
\def\DCGreedy{\ensuremath{\mathsf{DCGreedy}}\xspace}
\def\DCGreedyRel{\ensuremath{\mathsf{DCGreedyRel}}\xspace}
\def\DCGreedySol{\ensuremath{\mathsf{DCGreedySol}}\xspace}
\def\GreedyStep{\ensuremath{\mathsf{GreedyStep}}\xspace}
\def\Alg{\ensuremath{\mathsf{Alg}}\xspace}
\def\ParallelAlg{\ensuremath{\mathsf{ParallelAlg}}\xspace}
\def\AlgSol{\ensuremath{\mathsf{AlgSol}}\xspace}
\def\AlgRel{\ensuremath{\mathsf{AlgRel}}\xspace}
\def\Sbest{\ensuremath{S_{\mathrm{best}}}}
\renewcommand{\colon}{\nobreak\mskip2mu\mathpunct{}\nonscript
  \mkern-\thinmuskip{:}\mskip6muplus1mu\relax}
\newenvironment{intro-theorem}[2][]{\begingroup
\begin{theorem}[#1]}
{\end{theorem}
\addtocounter{theorem}{-1}
\endgroup
}
\newenvironment{appendix-theorem}[1]{\begingroup
\begin{theorem}}
{\end{theorem}
\addtocounter{theorem}{-1}
\endgroup
}
\newenvironment{appendix-lemma}[1]{\begingroup
\begin{lemma}}
{\end{lemma}
\addtocounter{theorem}{-1}
\endgroup
}
\title{A New Framework for Distributed Submodular Maximization}
\author{
Rafael da Ponte Barbosa\thanks{Department of Computer Science and DIMAP, University of Warwick. {\tt rafael@dcs.warwick.ac.uk}.}
\and
Alina Ene\thanks{Department of Computer Science, Boston University. {\tt a.ene@bu.edu}.}
\and
Huy L. Nguy\~{\^{e}}n\thanks{College of Computer and Information Science, Northeastern University. {\tt hlnguyen@cs.princeton.edu}.}
\and
Justin Ward\thanks{School of Computer and Communication Sciences, EPFL, Lausanne, Switzerland. {\tt justin.ward@epfl.ch}.}
}
\begin{document}
\maketitle


\begin{abstract}
A wide variety of problems in machine learning, including exemplar clustering, document summarization, and sensor placement, can be cast as constrained submodular maximization problems. A lot of recent effort has been devoted to developing distributed algorithms for these problems. However, these results suffer from high number of rounds, suboptimal approximation ratios, or both. We develop a framework for bringing existing algorithms in the sequential setting to the distributed setting, achieving near optimal approximation ratios for many settings in only a constant number of MapReduce rounds. Our techniques also give a fast sequential algorithm for non-monotone maximization subject to a matroid constraint.
\end{abstract}

\section{Introduction}

The general problem of maximizing a submodular function appears in a variety of contexts, both in theory and practice.  From a theoretical perspective, the class of submodular functions is extremely rich, including examples as varied as cut functions of graphs and digraphs, the Shannon entropy function, weighted coverage functions, and log-determinants.
Recently, there has been a great deal of interest in practical applications of submodular optimization, as well.  Variants of facility location, sampling, sensor selection, clustering, influence maximization in social networks, and welfare maximization problems are all instances of submodular maximization.  In practice, many of these applications involve processing enormous datasets requiring \emph{efficient, distributed algorithms}.

In contrast, most successful approaches for submodular maximization have been based on \emph{sequential} greedy algorithms, including the standard greedy algorithm \cite{Nemhauser1978a,Fisher1978}, the continuous greedy algorithm \cite{Calinescu2011,Feldman2011a}, and the double greedy algorithm \cite{Buchbinder2012}.   Indeed, such approaches attain the best-possible, tight approximation guarantees in a variety of settings \cite{Feige1998,Vondrak2009,Dobzinski2012}, but unfortunately they all share a common limitation, inherited from the standard greedy algorithm: they are inherently sequential. This presents a seemingly fundamental barrier to obtaining efficient, highly parallel variants of these algorithms.

\subsection{Our Contributions}
\label{sec:our-results}

As demonstrated by the extensive prior works on submodular maximization, the community has a good understanding of the problem under remarkably general types of constraints, which are handled by a small collection of general algorithms. In contrast, the existing works in the distributed setting are either tailored to special cases, giving approximation factors far from optimal or requiring a large number of distributed rounds.  One cannot help but wonder if, instead of retracing the individual advances made in the sequential setting over the last few decades, it may be possible to obtain a generic technique to carry over the algorithms in the sequential setting to the parallel world.

In this work, we present a significant step toward resolving the above question. Our main contribution is a \emph{generic parallel algorithm} that allows us to parallelize a broad class of sequential algorithm with almost no loss in performance. The crux of our approach is a \emph{common abstraction} that allows us to capture and parallelize both the standard and continuous greedy algorithms, and it provides a novel unifying perspective for these algorithmic paradigms. Our framework leads to the first distributed algorithms that nearly match the state of the art approximation guarantees for the sequential setting in only a constant number of rounds. In the following, we summarize our main contributions.

\mypar{A parallel greedy algorithm.}
We obtain the following general result by parallelizing the standard greedy algorithm:
\begin{intro-theorem}{thm:parallel-greedy-main}
Let $f : 2^V \to \R_+$ be a submodular function, and $\I \subseteq 2^V$ be a hereditary set system\footnote{A set system is hereditary if for any $S \in \I$, all subsets of $S$ are also in $\I$.}. For any $\epsilon > 0$ there is a randomized distributed  $O(1/\epsilon)$-round  algorithm that can be implemented in the MapReduce framework\footnote{We define the MapReduce model in Section~\ref{sec:model}.}. The algorithm is an $(\alpha - O(\eps))$-approximation with constant probability for the problem $\max_{S \in \I}f(S)$, where $\alpha$ is the approximation ratio of  the standard, sequential greedy algorithm for the same problem.
\end{intro-theorem}
Our constant number of rounds is a significant improvement over the sample and prune technique of \cite{Kumar2013}, which requires a number of rounds depending logarithmically on the \emph{value} of the single best element.  Remarkably, even for the especially simple case of a cardinality constraint, no previous work could get close to the approximation ratio of the simple sequential greedy algorithm in a constant number of rounds. Our framework nearly matches the approximation ratio of greedy in all situations in a constant number of rounds and immediately resolves this problem.
\begin{table}[t!]
\let\oldarraystretch=\arraystretch
\renewcommand{\arraystretch}{1.23}
\setlength{\tabcolsep}{3.75pt}
\centering
\begin{tabular}{|c|c|c|c|}
\hline
\multicolumn{4}{|c|}{{\bf Monotone functions}}\\ 
\hline
{\small Constraint} & {\small Rounds} & {\small Approx.} & {\small Citation}\\ \hline
\multirow{3}{*}{{\small cardinality}} & $O\big({\log \Delta \over \eps} \big)$ & $1-\frac{1}{e}-\eps$ &\cite{Kumar2013}\\ \cline{2-4}
& 2 & 0.545 &\cite{Mirrokni2015}\\ \cline{2-4}
& $O\big({1 \over \eps} \big)$ & $1 - \frac{1}{e}-\eps$ & {\small Theorem \ref{thm:parallel-greedy-main}}\\ \hline
\multirow{3}{*}{{\small matroid}} & $O\big({\log \Delta \over \eps} \big)$ & ${1 \over 2} - \eps$ & \cite{Kumar2013}\\ \cline{2-4}
& $2$ & $\frac{1}{4}$ & \cite{Barbosa2015}\\ \cline{2-4}
& $O\big({1 \over \eps} \big)$ & $1-\frac{1}{e}-\eps$ & {\small Theorem \ref{thm:parallel-cg-main}}\\ \hline
\multirow{3}{*}{{\small $p$-system}} & $O\big({\log \Delta \over \eps} \big)$ & $\frac{1}{p+1} -\eps$ & \cite{Kumar2013}\\ \cline{2-4}
& $2$ & $\frac{1}{2(p+1)}$ & \cite{Barbosa2015}\\ \cline{2-4}
& $O\big({1 \over \eps} \big)$ & $\frac{1}{p+1} -\eps$ & {\small Theorem \ref{thm:parallel-greedy-main}} \\ \hline
\end{tabular}
\renewcommand{\arraystretch}{1.5}
\setlength{\tabcolsep}{3.5pt}
\begin{tabular}{|c|c|c|c|}
\hline
\multicolumn{4}{|c|}{{\bf Non-monotone functions}}\\
\hline
{\small Constraint} & {\small Rounds} & {\small Approx.} & {\small Citation}\\
\hline
\multirow{2}{*}{{\small cardinality}} & 2 & ${1 - {1 \over m} \over  2 + e }$ & \cite{Mirrokni2015}\\ \cline{2-4}
& 2 & $(1 - {1 \over m}) {1 \over e}(1 - {1 \over e})$ & {\small Theorem~\ref{thm:two-round-cardinality}}  \\ \hline
\multirow{3}{*}{{\small matroid}} & 2 & ${1 \over 10}$ & \cite{Barbosa2015} \\ \cline{2-4}
& 2 & ${1 - {1 \over m} \over  2 + e }$ & {\small Theorem \ref{thm:two-round}} \\ \cline{2-4}
& $O\big( {1 \over \eps} \big)$ & ${1 \over e} - \eps$ & {\small Theorem \ref{thm:parallel-cg-main}} \\ \hline 
\multirow{2}{*}{{\small $p$-system}} & 2 & ${1 \over 2 + 4(p + 1)}$ & \cite{Barbosa2015} \\ \cline{2-4}
& 2 & ${3(1 - {1 \over m}) \over 5p + 7 + {2 \over p}}$ & {\small Theorem \ref{thm:two-round}}\\ \hline
\end{tabular}
\let\arraystretch=\oldarraystretch
\caption{New results for distributed submodular maximization. Here $\Delta = \max_{i \in V} f(\{i\})$ and $m$ is the number of machines. In the results of \cite{Kumar2013}, in the number of rounds, $\Delta$ can be replaced by the maximum size of a solution. All algorithms in previous works and ours are randomized and the approximation guarantees stated hold in expectation, and they can be strengthened to hold with high probability by repeating the algorithms in parallel.}
\label{tab:results}
\end{table}

\mypar{A parallel continuous greedy algorithm.}
We obtain new distributed approximation results for maximization over matroids, by using a heavily discretized variant of the measured continuous greedy algorithm, obtaining approximation guarantees nearly matching those attained by the continuous greedy in the sequential setting.
\begin{intro-theorem}{thm:parallel-cg-main}
Let $f : 2^V \to \R_+$ be a submodular function, and $\I \subseteq 2^V$ be a matroid.  For any $\epsilon > 0$ there is a randomized distributed  $O(1/\epsilon)$-round  algorithm that can be implemented in the MapReduce framework.  The algorithm is an $(\alpha - O(\eps))$-approximation with constant probability for the problem $\max_{S \in \I}f(S)$, where $\alpha$ is $(1 - 1/e)$ for monotone $f$ and $1/e$ for general $f$.
\end{intro-theorem}

\mypar{Improved two-round algorithms and fast sequential algorithms.}
We also give improved two-round approximations for non-monotone submodular maximization under hereditary constraints.  We make use of the same ``strong greedy property'' utilized in \cite{Barbosa2015} but attain approximation guarantees strictly better than were given there.  Our algorithm is based on a combination of the standard greedy algorithm \Greedy and an additional, arbitrary algorithm \Alg.  Again, we suppose that $f$ is a (not necessarily monotone) submodular function and $\I$ is any hereditary constraint.  In the following theorems and throughout the paper, $n \coloneqq |V|$ is the size of the ground set, $k \coloneqq \max_{S \in \script{I}} |S| $ is the maximum size of a solution, and $m$ is the number of machines employed by the distributed algorithm.
\begin{intro-theorem}{thm:two-round}
Suppose that \Greedy satisfies the strong greedy property with constant $\gamma$ and that \Alg is a $\beta$-approximation for the problem $\max_{S \in \I}f(S)$. Then there is a randomized, two-round distributed algorithm that achieves a $( 1 - {1 \over m}){\beta \gamma  \over \beta + \gamma}$ approximation in expectation for $\max_{S \in \I}f(S)$.
\end{intro-theorem}
We show that by simulating the machines in this last distributed algorithm, we also obtain a fast, sequential algorithm for maximizing a non-monotone submodular function subject to a matroid constraint.  Our algorithm shows that one can preprocess the instance in $O({n \over \eps} \log{n})$ time and obtain a set $X$ of size $O(k/\eps)$ so that \emph{it suffices to solve the problem on $X$}.  By using a variant of the continuous greedy algorithm on the resulting set $X$, we obtain the following result.

\begin{intro-theorem}{thm:fast-matroid}
There is a sequential, randomized $(\frac{1}{2+e} -\eps)$-approximation algorithm for the problem $\max_{S \in \I}f(S)$, where $\I$ is any matroid constraint, running in time $O({n \over \eps} \log{n}) + \poly({k \over \eps})$.
\end{intro-theorem}

As a final application of our techniques, we obtain a very simple two-round distributed algorithm for monotone maximization subject to a cardinality constraint.

\begin{intro-theorem}{thm:half-cardinality}
There is a randomized, two-round, distributed algorithm achieving a ${1 \over 2} - \eps$ approximation in expectation for $\max_{S : |S| \le k}f(S)$, where $f$ is a monotone~function.
\end{intro-theorem}

\subsection{Techniques}

In contrast with the previous framework by~\cite{Kumar2013} which is based on repeatedly eliminating bad elements, our framework is more in line with the greedy approach of identifying good elements. The algorithm maintains a pool of good elements that is grown over several rounds. In each round, the elements are partitioned randomly into groups. Each group selects the best among its elements and the good pool using the sequential algorithm. Finally, the best elements from all groups are added to the good pool. The best solution among the ones found in the execution of the algorithm is returned at the end. The previous works based on 2 rounds of MapReduce such as~\cite{Barbosa2015} can be viewed as a single phase of our algorithm.  The first phase can already identify a constant fraction of the weight of the solution, thus obtaining a constant factor approximation. However, it is not clear how to obtain the best approximation factor from such an approach. Our main insight is that, with a right measure of progress, we can grow the solution iteratively and obtain solutions that are arbitrarily close to those of sequential algorithms.  We show that after only $O(\frac{1}{\epsilon})$ rounds, the pool of good elements already contains a good solution with constant probability.

\subsection{Related Work}
\label{sec:related-work}
There has been a recent push toward obtaining fast, practical algorithms for submodular maximization problems arising in a variety of applied settings.  Research in this direction has yielded a variety of techniques for speeding up the continuous greedy algorithm for monotone maximization~\cite{Badanidiyuru2014,Mirzasoleiman2015}, as well as new approaches for non-monotone maximization based on insights from both the continuous greedy and double greedy algorithms~\cite{Buchbinder2014a,Buchbinder2014}.  Of particular relevance to our results is the case of maximization under a matroid constraint.  Here, for monotone functions the fastest current sequential algorithm gives a $1 - 1/e - \epsilon$ approximation using $O(\frac{\sqrt{k} n}{\epsilon^5}\ln^2(\frac{n}{\epsilon}) + \frac{k^2}{\epsilon})$ value queries. For non-monotone functions, Buchbinder \etal~\cite{Buchbinder2014} give an $\frac{1+e^{-2}}{4} > 0.283$-approximation in time $O(kn\log n + Mk)$, where $M$ is the time required to compute a perfect matching on bipartite graph with $k$ vertices per side.  They also give a simple, combinatorial $1/4$-approximation in time $O(kn\log n)$.  In comparison, the sequential algorithm we present here is faster by a factor of $\Omega(k)$, at the cost of a slightly-weaker $\frac{1}{2+e} > 0.211$-approximation.

Work on parallel and distributed algorithms for submodular maximization has been comparatively limited.  Early results considered the special case of maximum $k$-coverage, and attained an $O(1 - 1/e - \epsilon)$-approximation \cite{Chierichetti2010,Blelloch2011}. Later, Kumar \etal \cite{Kumar2013} considered the more general problem of maximizing an arbitrary monotone submodular function subject to a matroid, knapsack, or $p$-system constraint.  Their approach attains a $\frac{1}{2+\epsilon}$ approximation for matroids, and requires $O(\frac{1}{\epsilon}\log \Delta)$ MapReduce rounds, where $\Delta$ is the value of the best single element.  More generally, they obtain a $\frac{1}{p+1+\epsilon}$ approximation for $p$-systems in $O(\frac{1}{\epsilon}\log \Delta)$ rounds. The factor of $\log \Delta$ in the number of rounds is inherent in their approach: they adapt the threshold greedy algorithm, which sequentially picks elements in $\log \Delta$ different thresholds. In another line of work, Mirzasoleiman \etal\cite{Mirzasoleiman2013} introduced a simple, two-round distributed greedy algorithm for submodular maximization.  While their algorithm is only an $O(\frac{1}{m})$-approximation in the worst case, it performs very well in practice, and attains provable constant-factor guarantees for submodular functions exhibiting certain additional structure.  Barbosa \etal \cite{Barbosa2015} recently gave a more sophisticated analysis of this approach and showed that, if the initial distribution of elements is performed randomly, the algorithm indeed gives an expected, constant-factor guarantee for a variety of problems. Finally, Mirrokni and Zadimoghaddam~\cite{Mirrokni2015} gave the currently-best $0.545$-approximation for the cardinality constraint case using only 2 rounds of MapReduce.

\section{The model}
\label{sec:model}
We adopt the most stringent MapReduce-style model among~\cite{KSV10,GSZ11,BKS13,ANOY14}, the Massively Parallel Communication (MPC) model from~\cite{BKS13} as specified by \cite{ANOY14}. Let $N$ be the size of the input. In this model, there are $M$ machines each with space $S$. The total memory of the system is $M\cdot S = O(N)$, which is at most a constant factor more than the input size. Computation proceeds in synchronous rounds. In each round, each machine can perform local computation and at the end, it can send at most a total of $O(S)$ words to other machines. These $O(S)$ words could form a single message of size $S$, $S$ messages of size $1$, or any other combination whose sum is at most $O(S)$. Following~\cite{KSV10}, we restrict both $M, S < N^{1-\Omega(1)}$. The typical main complexity measure is the number of rounds.

Note that not all previous works on MapReduce-style algorithms for submodular maximization satisfy the strict requirements of the MPC model. For instance, as stated, the previous work by Kumar et al.~\cite{Kumar2013} uses $\Theta(N\log N)$ total memory and thus it does not fit in this model (though it might be possible to modify their algorithms to satisfy this).

We assume that the size of the solution is at most $N^{1-2c}$ for some constant $0 < c < 1/2$. Thus, an entire solution can be stored on a single machine in the model. This assumption is also used in previous work such as~\cite{Mirrokni2015}.

\section{Preliminaries}
\label{sec:prelim}

A function $f : 2^V \to \R_+$ is submodular if and only if $f(A \cup \{e\}) - f(A) \ge f(B\cup \{e\}) - f(B)$ for all $A \subseteq B$ and $e \not\in B$.  If $f(A \cup \{e\}) - f(A) \ge 0$ for all $A$ and $e \not\in A$ we say that $f$ is \emph{monotone}.  Here we consider the general problem $\max \{f(S) : S \subseteq V, S \in \I\}$, where $\I$ is any hereditary constraint (i.e., a downward-closed family of subsets of $V$).

Throughout the paper, $n \coloneqq |V|$ is the size of the ground set, $k \coloneqq \max_{S \in \script{I}} |S| $ is the maximum size of a solution, and $m$ is the number of machines employed by the distributed algorithm.

We shall consider both monotone and non-monotone submodular functions. However, the following simple observation shows that even non-monotone submodular functions are monotone when restricted to the optimal solution of a problem of the sort we consider.

\begin{lemma}\label{thm:monotonicity-opt}
Let $f$ be a submodular function and $\opt = \arg\max_{S \in \I} f(S)$ for some hereditary constraint $\I$. Then, $f(A \cap \opt) \le f(B \cap \opt)$ for all $A \subseteq B$.
\end{lemma}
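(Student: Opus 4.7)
The plan is to leverage submodularity together with the optimality and heredity of $\opt$ to force the desired inequality. Writing $C \coloneqq A \cap \opt$ and $D \coloneqq B \cap \opt$, the hypothesis $A \subseteq B$ gives $C \subseteq D \subseteq \opt$, and the goal is $f(C) \le f(D)$.

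The key construction is the auxiliary set $E \coloneqq (\opt \setminus D) \cup C = \opt \setminus (D \setminus C)$. First I would verify the two set identities that make this set useful: because $C \subseteq D$, the sets $C$ and $\opt \setminus D$ are disjoint from $D \setminus C$, and a short set-theoretic check shows $D \cup E = \opt$ and $D \cap E = C$. Next I would invoke the submodular inequality on the pair $D, E$ to obtain
\begin{equation*}
f(D) + f(E) \;\ge\; f(D \cup E) + f(D \cap E) \;=\; f(\opt) + f(C).
\end{equation*}

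Finally, since $E \subseteq \opt$ and $\I$ is hereditary, $E \in \I$, so optimality of $\opt$ gives $f(E) \le f(\opt)$. Subtracting this from the previous inequality yields $f(D) \ge f(C)$, which is exactly the claim.

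The only potentially tricky point is guessing the auxiliary set $E$: one needs a set that is simultaneously contained in $\opt$ (to invoke optimality through heredity) and whose union and intersection with $D$ are, respectively, $\opt$ and $C$ (so that submodularity produces the right two terms). Once $E$ is identified, the argument is just the standard submodular exchange identity together with the optimality of $\opt$, and the monotonicity-under-$\opt$ property follows in a couple of lines.
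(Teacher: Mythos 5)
Your proof is correct. You apply the lattice form of submodularity once, to the pair $D = B\cap\opt$ and the cleverly chosen auxiliary set $E = \opt\setminus(D\setminus C)$ (with $C = A\cap\opt$), obtaining $f(D)+f(E)\ge f(\opt)+f(C)$, and then use heredity plus optimality to bound $f(E)\le f(\opt)$, giving $f(D)\ge f(C)$ in one stroke. The paper instead works element-by-element: it shows for any $X\subseteq\opt$ and $e\in\opt\setminus X$ that $f(X\cup\{e\})-f(X)\ge f(\opt)-f(\opt\setminus\{e\})\ge 0$ (the first inequality by submodularity, the second because $\opt\setminus\{e\}$ is feasible by heredity), establishing nonnegative marginals inside $\opt$, and leaves the telescoping from $A\cap\opt$ to $B\cap\opt$ implicit. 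Both arguments use exactly the same ingredients --- submodularity, heredity, optimality --- but yours is a single ``diamond inequality'' application that proves the stated inequality directly without an inductive/telescoping step, whereas the paper's formulation more transparently exhibits the stronger fact that $f$ has nonnegative marginals on every subset of $\opt$, which is the form actually invoked elsewhere (e.g.\ in Lemma~\ref{lem:lovasz-1}). Either route is a fine proof of the lemma as stated.
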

\begin{proof}
Consider $X \subseteq \opt$ and $e\in \opt\setminus X$. By submodularity, $f(X\cup\{e\}) - f(X) \ge f(\opt) - f(\opt\setminus\{e\})$. On the other hand, because $\mathcal{I}$ is hereditary, $\opt\setminus\{e\}$ is feasible and thus $f(\opt) \ge f(\opt\setminus\{e\})$. Therefore $f(X\cup\{e\}) - f(X)\ge 0$ for all $X$ and $e \in \opt \setminus X$.
\end{proof}

\mypar{Continuous extensions.}
In this paper, we work with two standard continuous extensions of submodular functions, the multilinear extension and the Lov\'asz extension. The \emph{multilinear extension} of $f$ is the function $F: [0, 1]^V \to \R_+$ such that $F(\bx) = \Ex[f(R(\bx))]$, where $R(\bx)$ is a random subset of $V$ in which each element $e$ appears independently with probability $x_e$. 

The \emph{Lov\'asz extension} of $f$ is the function $f^- : [0,1]^V \to \R_+$ such that $f^-(\bx) = \Ex_{\theta \in \script{U}(0,1)}[f(\{e : x_e \ge \theta\})]$, where $\script{U}(0,1)$ is the uniform distribution on [0,1].
For any submodular function $f$, the Lov\'asz extension $f^-$ satisfies: $f^-(\one_{S}) = f(S)$ for all $S \subseteq V$; $f^-$ is convex; and the restricted scale invariance property $f^-(c \cdot \bx) \ge c \cdot f^-(\bx)$ for any $c \in [0,1]$.  We shall make use of the following lemmas.
\begin{lemma}[\cite{Barbosa2015}, Lemma 1]
Let $S$ be a random set with $\Ex[\one_{S}] = c \cdot \bp$ (for $c \in [0,1]$).  Then, $\Ex[f(S)] \ge c \cdot f^-(\bp)$.
\label{lem:lovasz}
\end{lemma}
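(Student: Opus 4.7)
The plan is to chain together the three properties of the Lovász extension that were collected just above the lemma: agreement with $f$ on indicator vectors, convexity, and restricted scale invariance. All three are already stated in the paper, so the proof should be a short calculation with no heavy lifting.

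First, I would rewrite the left-hand side in terms of $f^-$. Since $f^-(\one_S) = f(S)$ deterministically for every set $S$, we have $\Ex[f(S)] = \Ex[f^-(\one_S)]$. This changes nothing numerically, but moves us into the continuous world where we can use the structural properties of $f^-$.

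Next, I would apply Jensen's inequality. Because $f^-$ is convex and $\one_S$ is a random vector with mean $\Ex[\one_S] = c \cdot \bp$, Jensen gives
\[
\Ex[f^-(\one_S)] \;\ge\; f^-\!\bigl(\Ex[\one_S]\bigr) \;=\; f^-(c \cdot \bp).
\]
Finally, the restricted scale invariance property $f^-(c \cdot \bx) \ge c \cdot f^-(\bx)$ with $\bx = \bp$ and the given $c \in [0,1]$ yields $f^-(c \cdot \bp) \ge c \cdot f^-(\bp)$. Combining the three steps gives $\Ex[f(S)] \ge c \cdot f^-(\bp)$, which is exactly the claim.

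There is no real obstacle here: the only thing worth double-checking is that Jensen applies in the correct direction (we need the convex, not concave, inequality), and that $c \in [0,1]$ so that the restricted scale invariance is valid in the form stated. Both are immediate from the hypotheses of the lemma and the properties listed in the preliminaries, so the proof is essentially a three-line display.
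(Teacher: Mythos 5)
Your proof is correct and is the standard argument: rewrite $\Ex[f(S)]$ as $\Ex[f^-(\one_S)]$, apply Jensen via convexity of $f^-$, then use restricted scale invariance. The paper itself does not reprove this lemma (it is cited from \cite{Barbosa2015}), but the three properties you chain together are exactly the ones the paper lists immediately before the lemma, and the same convexity-plus-evaluation pattern is used in the paper's own proof of Lemma~\ref{lem:lovasz-1}, so this is the intended route.
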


\begin{lemma}\label{lem:lovasz-1}
Let $f : 2^V \to \R_+$ be a submodular function that is monotone when restricted to $X \subseteq V$.  Further, let $T, S \subseteq X$, and let $R$ be a random subset of $T$ in which every element occurs with probability at least $p$.  Then, $\Ex[f(R \cup S)] \ge p\cdot f(T \cup S) + (1 - p)f(S)$.
\end{lemma}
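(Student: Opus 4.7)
The plan is to reduce the claim to a direct application of Lemma \ref{lem:lovasz} via a thinning argument. First, define $g : 2^T \to \R_+$ by $g(A) = f(A \cup S)$. Since $f$ is submodular, so is $g$, and since $T \subseteq X$ and $f$ is monotone when restricted to $X$, $g$ is monotone on subsets of $T$. In particular, $g(\emptyset) = f(S)$ and $g(T) = f(T \cup S)$.

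Next, I would couple $R$ with a subset $R' \subseteq R$ whose marginals are exactly $p$. Writing $q_e \coloneqq \Pr[e \in R] \ge p$ for each $e \in T$, independently for each $e \in R$ keep $e$ in $R'$ with probability $p/q_e$ (and drop it otherwise). Then $\Pr[e \in R'] = q_e \cdot (p/q_e) = p$ for every $e \in T$, so $\Ex[\one_{R'}] = p \cdot \one_T$; moreover $R' \subseteq R$ pointwise. Since $g$ is monotone on subsets of $T$, this gives $\Ex[g(R)] \ge \Ex[g(R')]$. Note that the dependence structure of $R$ is irrelevant here, because the thinning is done conditionally on $R$.

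Now I would apply Lemma \ref{lem:lovasz} to $g$ (which is nonnegative and submodular) with $c = 1$ and $\bp = p \cdot \one_T$, to obtain $\Ex[g(R')] \ge g^-(p \cdot \one_T)$. Expanding the Lov\'asz extension directly from the definition,
\[
g^-(p \cdot \one_T) \;=\; \Ex_{\theta \sim \script{U}(0,1)}\bigl[g(\{e \in T : p \ge \theta\})\bigr] \;=\; p \cdot g(T) + (1-p)\cdot g(\emptyset) \;=\; p \cdot f(T \cup S) + (1-p)\cdot f(S),
\]
since the level set is $T$ when $\theta \le p$ and $\emptyset$ otherwise. Chaining the three inequalities $\Ex[f(R \cup S)] = \Ex[g(R)] \ge \Ex[g(R')] \ge g^-(p \cdot \one_T)$ yields the claim.

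The only mildly subtle step is establishing monotonicity of $g$ from the hypothesis that $f$ is monotone when restricted to $X$ (so that the thinning inequality goes in the right direction); everything else is a short unpacking of Lemma \ref{lem:lovasz} at the vector $p \cdot \one_T$. I do not anticipate serious obstacles.
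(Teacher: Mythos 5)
Your proof is correct. The core ingredients are the same as the paper's---convexity of the Lov\'asz extension and an exact evaluation at a step-function vector, together with some use of the monotonicity of $f$ on $X$ to pass from ``probability at least $p$'' to ``probability exactly $p$''---but the packaging differs in two ways. First, you absorb $S$ into the function via the contraction $g(A) = f(A \cup S)$, whereas the paper works with $f^-$ directly and decomposes $\one_{R \cup S} = \one_{R \setminus S} + \one_S$. Second, and more substantively, you realize the monotonicity step at the \emph{set} level, by thinning $R$ down to a coupled $R' \subseteq R$ with marginals exactly $p$ and then comparing $g(R) \ge g(R')$ pointwise before applying convexity; the paper instead first applies convexity and then invokes a monotonicity property of the Lov\'asz extension $f^-$ on the coordinate-wise order of vectors supported in $X$ (comparing $f^-\bigl(\Ex[\one_{R\setminus S}] + \one_S\bigr) \ge f^-\bigl(p\cdot\one_{T\setminus S} + \one_S\bigr)$). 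Your thinning coupling is a more elementary and self-contained way to handle that step, since it only needs monotonicity on sets rather than a (true but unproved-in-the-paper) fact about the Lov\'asz extension; the paper's route is shorter on the page but leans on a stated-without-proof property of $f^-$. Both are fine.

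One small point worth making explicit if you write this up: the thinning requires $p/q_e \le 1$, which is exactly the hypothesis $q_e \ge p$, and the independence of the coin flips in the thinning is irrelevant---only the marginals $\Pr[e \in R'] = p$ matter for Lemma~\ref{lem:lovasz} (indeed, with $c=1$ that lemma is just Jensen's inequality applied to the convex $g^-$).
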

\begin{proof}
Recall that $f^-$ is the  Lov\'{a}sz extension of $f$. Since $f^-$ is convex,
\[
\Ex[f(R \cup S)] = \Ex[f^-(\one_{R \cup S})] \ge f^-(\Ex[\one_{R \cup S}]) =
f^-(\Ex[\one_{R\setminus S}] + \one_S).
\]
Since every element of $T$ occurs in $R$ with probability at least $p$, we have $\Ex[\one_{R \setminus S}] \ge p \cdot \one_{T \setminus S}$.  Then, since $f$ is monotone with respect to $X \supseteq S \cup T$, we must have:
\[ f^-(\Ex[\one_{R\setminus S}] + \one_S) \ge
f^-(p \cdot \one_{T \setminus S} + \one_S). \]
Finally, from the definition of $f^-$, we have
\[ f^-(p \cdot \one_{T \setminus S} + \one_S) = p \cdot f(T \cup S) + (1 - p)f(S).\]
\end{proof}

\section{Generic Parallel Algorithm for Submodular Maximization}
\label{sec:parallel-framework}

In this section, we give a generic approach for parallelizing \emph{any sequential algorithm $\Alg$} for the problem $\max_{S \subseteq V \colon S \in \I} f(S)$, where $f: 2^V \rightarrow \R_+$ is a submodular function and $\I \subseteq 2^V$ is a hereditary constraint.

As a starting point, we need a common abstract description of existing sequential algorithms. Towards that end, we turn to the standard Greedy and Continuous Greedy algorithms for inspiration. The Greedy algorithm directly constructs a solution, whereas the Continuous Greedy algorithm first constructs a fractional solution $\bx$ which is then rounded to get an integral solution. In the common abstraction, we will need both the integral solution \emph{and} the support of the fractional solution $\bx$. To account for this, we will have the algorithm $\Alg$ return a pair of sets, $(\AlgSol(V), \AlgRel(V))$, where $\AlgSol(V) \in I$ is a feasible solution for the problem and $\AlgRel(V)$ is a set providing additional information. When using the standard Greedy algorithm for $\Alg$, $\AlgSol(V)$ and $\AlgRel(V)$ will both be equal to the Greedy solution. When using the Continuous Greedy algorithm for $\Alg$, $\AlgSol(V)$ will be the integral solution and $\AlgRel(V)$ will be the support of the fractional solution constructed by the Continuous Greedy algorithm.

More importantly, we will need an abstraction that captures the greedy behavior of these algorithms. We encapsulate the crucial properties of greedy-like algorithms in the following definition. We believe that this framework is one of the most valuable and insightful contributions of this work, and it provides a general abstraction for a broader class of algorithms.

We assume that the algorithm $\Alg$ satisfies the following properties.

\begin{enumerate}
\item ($\alpha$-Approximation) For every input $N \subseteq V$, $\AlgSol(N)$ is an $\alpha$-approximate solution to $\max_{S \subseteq N \colon S \in \I} f(S)$.   

\item (Consistency) Let $A$ and $B$ be two disjoint subsets of $V$. Suppose that, for each element $e \in B$, we have $\AlgRel(A \cup \set{e}) = \AlgRel(A)$. Then $\AlgSol(A \cup B) = \AlgSol(A)$.
\end{enumerate}

Armed with this definition, we can now describe our approach for parallelizing an abstract sequential algorithm $\Alg$ with almost no loss in the approximation guarantee.

\mypar{Parallel algorithm $\ParallelAlg$ based on $\Alg$.}
As before, let $\alpha$ be the approximation guarantee of the sequential algorithm $\Alg$. Let $s \coloneqq \max_{N \subseteq V} |\AlgSol(N) \cup \AlgRel(N)|$ be the maximum size of the sets returned by $\Alg$. Let $\eps > 0$ be the desired accuracy, i.e., we will aim that $\ParallelAlg$ achieves an $(\alpha - \eps)$ approximation.

The algorithm uses $g \coloneqq \Theta(1/(\alpha \eps))$ groups of machines with $m$ machines in each group (and thus the total number of machines is $gm$). The number $m$ of machines can be chosen arbitrarily and it will determine the amount of space needed on each machine, since the dataset is divided roughly equally among each of the $m$ machines in each group. An optimal setting is $gm \coloneqq O(\sqrt{n/s})$.

The algorithms performs $\Theta(1/\eps)$ \emph{runs}. Throughout the process, we maintain two quantities: an \emph{incumbent} solution $\Sbest$, which is the best solution produced on any single machine so far in the process, and a \emph{pool} of elements $C \subseteq V$ (we assume that the incumbent solution is stored on one designated machine).

Each run of the algorithm proceeds as follows. Amongst each group of $m$ machines, we partition $V$ uniformly at random; each element $e$ chooses an index $i \in [m]$ uniformly and independently at random and is assigned to the $i$th machine in the group.  We do this separately for each group of machines, i.e., each element appears on exactly one machine in each group.  For an individual machine $i \in [gm]$, let $X_{i,r}$ denote the set of elements that are assigned to $i$ in run $r$ by this procedure.  Additionally, we place on each machine the same pool of elements $C_{r-1}$, constructed at the end of run $r-1$.

Once the elements have been distributed as described above, on each machine $i$, we run the algorithm $\Alg$ on the input $X_{i, r} \cup C_{r - 1}$ on the machine to obtain $(\AlgSol(X_{i, r} \cup C_{r - 1}), \AlgRel(X_{i, r} \cup C_{r - 1}))$. We update the incumbent solution $\Sbest$ to be the better of the current solution $\Sbest$ and the solutions $\AlgSol(X_{i, r} \cup C_{r - 1})$ constructed on each of the machines; this is achieved by having each machine send $\AlgSol(X_{i, r} \cup C_{r - 1})$ to some designated machine maintaining $\Sbest$, and this machine will update $\Sbest$ in the next round. We update the pool by setting $C_r \coloneqq C_{r - 1} \bigcup_{i} \AlgRel(X_{i, r} \cup C_{r - 1})$; this is achieved by having each machine send $\AlgRel(X_{i, r} \cup C_{r - 1})$ to every other machine, and thus ensuring that the pool $C_r$ is available on each machine during the next round.

At the end of the $\Theta(1/\eps)$ runs, the algorithm returns the incumbent solution $\Sbest$. This completes the description of our algorithm.

\mypar{Avoiding duplicating the dataset.}
The algorithm above partitions the dataset over $\Theta(1/\eps)$ groups of machines and thus it duplicates the dataset $\Theta(1/\eps)$ times (this problem also applies to previous work~\cite{Mirrokni2015}). This is done in order to achieve the best theoretical guarantee on the number of runs, but in practice it is undesirable to duplicate the data. Instead, we can use a single group of $m$ machines and perform the computation of a single run sequentially over $\Theta(1/\eps)$ sub-run, where each sub-run performs the computation of one of the group of machines. This will lead to an algorithm that performs $\Theta(1/\eps^2)$ runs using $m$ machines and it does not duplicate the dataset. 

\mypar{The analysis.}
We devote the rest of this section to the analysis of the algorithm $\ParallelAlg$. We start by noting that, if we choose $g$ and $m$ so that $gm = O(\sqrt{n/s})$, the algorithm uses the following resources and thus it satisfies the requirements of the model in Section~\ref{sec:model}.
\begin{lemma}
	$\ParallelAlg$ can be implemented in the parallel model in Section~\ref{sec:model} using the following resources.
	\begin{compactitem}
		\item The number of rounds is $O(1/\eps)$.
		\item The number of machines is $O(\sqrt{n/s})$.
		\item The amount of space used on each machine is $O(\sqrt{n s} / (\eps \alpha))$ with high probability.
		\item In each round, the total amount of communication from a machine to all other machines is $O(\sqrt{n s}/(\eps \alpha))$ with high probability. The total amount of communication over all machines in a given round is $O(n/(\eps\alpha))$.
	\end{compactitem}
\end{lemma}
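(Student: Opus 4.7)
The plan is a direct resource accounting under the parameter choices $g = \Theta(1/(\alpha\eps))$ and $gm = \Theta(\sqrt{n/s})$ (so $m = \Theta(\alpha\eps\sqrt{n/s})$) fixed in the algorithm description. I would handle the four bullets in turn, and the only truly probabilistic ingredient is a Chernoff bound on the per-machine partition sizes; everything else is bookkeeping from the algorithm description.

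First I would dispense with the number of rounds and the number of machines. Each of the $\Theta(1/\eps)$ runs is implemented in $O(1)$ MapReduce rounds---one round to broadcast the current pool $C_{r-1}$ together with the random partition $\{X_{i,r}\}$, one round in which each machine executes $\Alg$ locally on $X_{i,r}\cup C_{r-1}$, and one round to aggregate $\AlgSol(X_{i,r}\cup C_{r-1})$ at the designated machine holding $\Sbest$ and to propagate $\AlgRel(X_{i,r}\cup C_{r-1})$ to all machines---giving the claimed $O(1/\eps)$ total. The total machine count is $gm = O(\sqrt{n/s})$ by construction.

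Next I would bound the space on a single machine. A machine stores its random partition block $X_{i,r}$, the current pool $C_{r-1}$, and the $\AlgSol, \AlgRel$ sets produced by $\Alg$. Since each element selects its machine within a group uniformly and independently, $\Ex[|X_{i,r}|] = n/m = \Theta(\sqrt{ns}/(\eps\alpha))$, and a standard Chernoff bound gives $|X_{i,r}| = O(n/m)$ with probability $1 - \exp(-\Omega(n/m))$, a failure probability that survives a union bound over the $gm\cdot \Theta(1/\eps) = \poly(n)$ (machine, run) pairs as long as $n/m = \Omega(\log n)$. For the pool, each run adds at most $s$ elements per machine, so after $\Theta(1/\eps)$ runs $|C_r| \le gm\cdot s\cdot O(1/\eps) = O(\sqrt{ns}/\eps)$; summing the two contributions yields the stated $O(\sqrt{ns}/(\eps\alpha))$ space bound since $\alpha \le 1$.

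Finally, for communication, in each round a machine sends its $\AlgRel$ set of size at most $s$ to every other machine (to keep the pool synchronized) and its $\AlgSol$ to the coordinator, for per-machine outgoing traffic $O(s\cdot gm) = O(\sqrt{ns})$ per round, which is comfortably absorbed into the stated per-machine per-round bound; summing over the $gm$ machines gives a per-round total of $O((gm)^2 s) = O(n)$, again within the stated bound. The main step that requires any real argument is the Chernoff concentration for $|X_{i,r}|$ together with its union bound over machines and runs; the remaining claims follow by plugging in the parameter settings.
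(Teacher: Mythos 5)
Your accounting for the rounds, number of machines, and per-machine space essentially matches the paper's (the paper folds each ``run'' into a single round of communication rather than three, but both give $O(1/\eps)$). However, there is a genuine gap in your communication bound. You count only the $\AlgRel$/$\AlgSol$ sets each machine sends, giving $O(s\cdot gm)=O(\sqrt{ns})$ outgoing per machine and $O(n)$ total per round, and observe these sit under the stated bounds. What you never account for is the cost of re-partitioning the ground set $V$ each run: the samples $X_{i,r}$ live on the machines, and in your ``broadcast'' round \emph{someone} has to ship those $n$ elements to their new destinations. No single machine can hold $V$ (that would blow the $N^{1-\Omega(1)}$ space cap), so the only consistent scheme is the paper's: each machine draws new random assignments for the elements of its current sample and sends them out, at cost $O(|X_{i,r}|)=O(n/m)=O(\sqrt{ns}/(\eps\alpha))$ per machine and $O(n\cdot g)=O(n/(\eps\alpha))$ in total. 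That term is precisely the dominant one, and it is why the lemma's communication bounds carry the extra $1/(\eps\alpha)$ factor that your estimate lacks. Your conclusion happens to be numerically compatible with the stated bounds only because you undercounted; the argument as written does not actually establish that the redistribution fits in the model, which is the part that actually needs proving. Explicitly, the missing piece is: bound the size of $X_{i,r}$ (which you do), \emph{and} charge its transmission to the per-machine and total communication budgets.
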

\begin{proof}
	We will choose $gm \coloneqq \sqrt{n/s}$ as our number of machines. Using this choice, we can provide the guarantees stated in the lemma.

	Note that we can combine the update step of the incumbent solution and the pool of a given run with the next run's distribution of elements into a single round of communication. Specifically, each machine computes a new random assignment for each element of its sample $X_{i, r}$, assigns all of its new pool elements to \emph{all} machines, and sends its solution to the designated machine. Thus each run corresponds to a round of communication. In each round, a machine communicates its sample $X_{i, r}$, which has size $O(n/m) = O(\sqrt{n s}/(\eps \alpha))$ with high probability, and the sets $\AlgSol(X_{i, r} \cup C_{r - 1})$ and $\AlgRel(X_{i, r} \cup C_{r - 1})$ that have size $O(s)$ to all other machines. Thus the total amount that a machine communicates is $O(\sqrt{n s}/(\eps \alpha) + s\cdot gm) = O(\sqrt{n s}/(\eps \alpha))$ with high probability, and the total amount that all machines communicate is $O(n + n/m \cdot gm)  = O(n/(\eps\alpha))$.
	
	In every round, the space used on a given machine is the size of its sample $X_{i, r}$, which is $O(n/m) = O(\sqrt{n s}/(\eps \alpha))$ with high probability; the size of the incumbent solution, which is $O(s)$; and the size of the pool, which is $O(gm \cdot s / \eps) = O(\sqrt{n s} / \eps)$. Therefore the total amount of space used on each machine is $O(\sqrt{n s} / (\eps \alpha))$ with high probability.
\end{proof}

\smallskip
Thus it remains to analyze the quality of the solution constructed by the algorithm. In the remainder of this section, we show that, if $\Alg$ satisfies the $\alpha$-approximation and consistency properties defined above, the parallel algorithm $\ParallelAlg$ achieves an $(\alpha - O(\eps))$ approximation. For simplicity, in this section we assume that $\Alg$ is deterministic; in Section~\ref{sec:parallel-framework-rand}, we extend our approach to the setting in which \Alg is randomized. 
We start by introducing some notation. Let $\mathcal{V}(1/m)$ denote the distribution over random subsets of $V$ where each element is included independently with probability $1/m$. Let $\opt$ be an optimal solution. Recall that $X_{i, r} \sim \mathcal{V}(1/m)$ is the random sample placed on machine $i$ at the beginning of run $r$ and $C_{r - 1}$ is the pool of elements at the beginning of run $r$. The following theorem is the crux of our analysis.

\begin{theorem} \label{thm:parallel-main}
  Consider a run $r \ge 1$ of the algorithm. Let $\hC_{r - 1} \subseteq V$. Then one of the following must hold:
	\begin{enumerate}[$(1)$]
    \item $\Ex_{X_{1,r}}[f(\AlgSol(C_{r - 1} \cup X_{1, r})) \mid C_{r - 1} = \hC_{r - 1}] \geq (1 -~\eps)^2 \alpha \cdot f(\opt)$, or
    \item $\Ex[f(C_r \cap \opt) \mid C_{r - 1} = \hC_{r - 1}] - f(\hC_{r-1} \cap\opt) \geq {\eps \over 2} \cdot f(\opt)$.
	\end{enumerate}
\end{theorem}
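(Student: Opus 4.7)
My plan is to set up a dichotomy based on which elements of $\opt$ are ``lucky'' for the current pool. For each $e \in \opt \setminus \hC_{r-1}$, I define
\[ q_e := \Pr_{X \sim \mathcal{V}(1/m)}\bigl[\AlgRel(\hC_{r-1} \cup X \cup \set{e}) \neq \AlgRel(\hC_{r-1} \cup X)\bigr], \]
and let $U := \set{e \in \opt \setminus \hC_{r-1} : q_e \geq \eps}$ be the ``unlucky'' elements. The case split will hinge on whether $f(\opt \setminus U) \geq (1-\eps) f(\opt)$ or not.

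First, in the case $f(\opt \setminus U) \geq (1-\eps) f(\opt)$, I aim to prove conclusion (1). Introduce the set of lucky samples $L(X) := \set{e \in \opt \setminus \hC_{r-1} : e \in X \text{ or } \AlgRel(\hC_{r-1} \cup X \cup \set{e}) = \AlgRel(\hC_{r-1} \cup X)}$, and note that $\Pr[e \in L(X)] = 1 - q_e > 1 - \eps$ for every $e \in (\opt \setminus \hC_{r-1}) \setminus U$. Consistency applied with $A := \hC_{r-1} \cup X$ and $B := (L(X) \cap (\opt \setminus U)) \setminus X$ gives $\AlgSol(\hC_{r-1} \cup X \cup (L(X) \cap (\opt \setminus U))) = \AlgSol(\hC_{r-1} \cup X)$. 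Combining this with the $\alpha$-approximation property applied to the feasible set $W(X) := (L(X) \cap (\opt \setminus U)) \cup (\hC_{r-1} \cap \opt) \subseteq \opt$ (which lies inside the input), we get $f(\AlgSol(\hC_{r-1} \cup X_{1,r})) \geq \alpha \cdot f(W(X_{1,r}))$. Then, using that $f$ is monotone on $\opt$ (Lemma~\ref{thm:monotonicity-opt}), Lemma~\ref{lem:lovasz-1} applied with base set $\hC_{r-1} \cap \opt$ and random subset $L(X_{1,r}) \cap ((\opt \setminus \hC_{r-1}) \setminus U)$ (each element included with probability $\geq 1 - \eps$) yields $\Ex[f(W(X_{1,r}))] \geq (1-\eps) f(\opt \setminus U) \geq (1-\eps)^2 f(\opt)$, which is exactly (1).

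If instead $f(\opt \setminus U) < (1-\eps) f(\opt)$, I aim to prove conclusion (2). Submodularity applied to $f$ on $\opt$ gives
\[ f((\hC_{r-1} \cap \opt) \cup U) - f(\hC_{r-1} \cap \opt) \geq f(\opt) - f(\opt \setminus U) > \eps f(\opt). \]
Within each of the $g = \Theta(1/\eps)$ groups, every $e \in U$ lies on one uniformly random machine $i$, and conditional on this event the rest of $X_{i,r}$ is distributed as $\mathcal{V}(1/m)$ on $V \setminus \set{e}$, so the probability that $e$ is captured into the pool from that group equals $\Pr_{X}[e \in \AlgRel(\hC_{r-1} \cup X \cup \set{e})]$, which I will argue is at least $q_e \geq \eps$ (see below). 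Independence across the $g$ groups then gives $\Pr[e \in C_r \mid C_{r-1} = \hC_{r-1}] \geq 1 - (1-\eps)^g \geq 1 - 1/e$ for each $e \in U$, so a second invocation of Lemma~\ref{lem:lovasz-1} with random set $C_r \cap U$ and $p := 1 - 1/e$ yields $\Ex[f(C_r \cap \opt) \mid \hC_{r-1}] - f(\hC_{r-1} \cap \opt) \geq (1 - 1/e) \cdot \eps f(\opt) > (\eps/2) f(\opt)$, which is (2).

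The main obstacle I foresee is the bound $\Pr[e \in \AlgRel(\hC_{r-1} \cup X \cup \set{e})] \geq q_e$ used in the second case: $q_e$ measures how often $\AlgRel$ \emph{changes} upon adding $e$, while capturing $e$ into the pool requires that $e$ itself end up in the new $\AlgRel$. Bridging this gap needs the structural fact that $\AlgRel(\hC_{r-1} \cup X \cup \set{e}) \neq \AlgRel(\hC_{r-1} \cup X)$ forces $e \in \AlgRel(\hC_{r-1} \cup X \cup \set{e})$, which holds whenever an element that is never selected leaves the algorithm's trajectory unchanged, and is easily verified for both the standard Greedy and the measured Continuous Greedy instantiations of $\Alg$; I would treat it as an implicit companion to the consistency axiom in order to close the argument.
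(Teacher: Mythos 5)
Your proof is correct and follows the same overall structure as the paper's: partition $\opt \setminus \hC_{r-1}$ by the magnitude of a per-element probability, apply consistency plus the $\alpha$-approximation property to lower-bound $\Ex[f(\AlgSol(\cdot))]$ via Lemma~\ref{lem:lovasz-1} when the low-probability part of $\opt$ is already heavy, and apply Lemma~\ref{lem:probabilities} and Lemma~\ref{lem:lovasz-1} again to show the pool grows otherwise. The only substantive difference is parameterization: your $q_e = \Pr[\AlgRel(\hC_{r-1}\cup X\cup\{e\}) \neq \AlgRel(\hC_{r-1}\cup X)]$ is the ``$\AlgRel$-changes'' probability, whereas the paper uses $p_r(e) = \Pr[e \in \AlgRel(\hC_{r-1}\cup X\cup\{e\})]$, the ``$e$-is-captured'' probability. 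Under the structural fact you flag at the end --- that $\AlgRel(A\cup\{e\}) \neq \AlgRel(A)$ forces $e \in \AlgRel(A\cup\{e\})$, equivalently $e \notin \AlgRel(A\cup\{e\})$ forces $\AlgRel(A\cup\{e\}) = \AlgRel(A)$ --- the two quantities coincide. You need this fact to bridge Case~2; the paper needs the (contrapositive) same fact to bridge Case~1, since it applies the consistency property to $P'_r = \{e \in P_r : e \notin \AlgRel(\hC_{r-1}\cup X_{1,r}\cup\{e\})\}$, yet the consistency hypothesis as stated requires $\AlgRel(A\cup\{e\}) = \AlgRel(A)$, not merely $e \notin \AlgRel(A\cup\{e\})$. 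So the structural assumption you identify is not a gap peculiar to your route; it is an unstated companion to the consistency axiom that the paper's proof also uses (and that indeed holds for both \Greedy and \DCGreedy). Your flagging it explicitly is a fair and useful clarification, not a flaw.
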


Intuitively, Theorem~\ref{thm:parallel-main} shows that, in expectation, if we have not found a good solution on some machine after $O(1/\epsilon)$ runs, then the current pool $C$, available to every machine, must satisfy $f(C \cap \opt) = f(\opt)$, and so each machine in the next run will in fact return a solution of quality at least $\alpha f(\opt)$. The following theorem, whose proof we give in Section~\ref{sec:omitted}, makes this formal.

\begin{theorem}\label{thm:parallel-approx}
  \ParallelAlg achieves an $(1 - \eps)^3 \alpha$ approximation with constant probability.
\end{theorem}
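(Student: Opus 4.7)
The plan is to combine a stopping-time analysis of the potential $\Phi_r \coloneqq f(C_r \cap \opt)$ with the conditional independence of the $g = \Theta(1/(\alpha \eps))$ groups of machines, using Theorem~\ref{thm:parallel-main} as the engine. By Lemma~\ref{thm:monotonicity-opt} together with $C_r \supseteq C_{r-1}$, $\Phi_r$ is non-decreasing in $r$ and bounded above by $f(\opt)$.

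First, I would analyze the stopping time $\tau \coloneqq \min\{r \geq 1 : \text{case (1) of Theorem~\ref{thm:parallel-main} holds at run } r\}$, which is indeed a stopping time with respect to the natural filtration $\{\mathcal{F}_r\}$ since whether case (1) holds at run $r$ is determined by $C_{r-1}$. On the event $\{\tau > s\}$, case (2) applies at run $s$, so $\Ex[\Phi_s - \Phi_{s-1} \mid \mathcal{F}_{s-1}] \geq (\eps/2) f(\opt)$. Combining the almost-sure bound $\sum_{s=1}^R (\Phi_s - \Phi_{s-1}) \mathbf{1}_{\{\tau > s\}} \leq \Phi_{\tau \wedge R} \leq f(\opt)$ with the $\mathcal{F}_{s-1}$-measurability of $\mathbf{1}_{\{\tau > s\}}$ yields $\sum_{s=1}^R P[\tau > s] \leq 2/\eps$. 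Since $P[\tau > R] \leq P[\tau > s]$ for all $s \leq R$, this gives $R \cdot P[\tau > R] \leq 2/\eps$, so choosing $R = 4/\eps$ yields $P[\tau \leq R] \geq 1/2$.

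Next, I would amplify the in-expectation guarantee of case (1) into a constant-probability one using the $g$ independent groups. Conditioned on $\mathcal{F}_{\tau-1}$, each of the $g$ groups independently samples a fresh random partition of $V$ at run $\tau$; for each group $j$, let $Y^{(j)}$ denote the solution returned by the first machine in group $j$ at run $\tau$. The $Y^{(j)}$ are conditionally iid across $j$, and each satisfies $\Ex[Y^{(j)} \mid \mathcal{F}_{\tau-1}] \geq (1-\eps)^2 \alpha f(\opt)$ by case (1). Applying Markov's inequality to the non-negative variable $f(\opt) - Y^{(j)}$ gives
\[
p \coloneqq P\bigl[Y^{(j)} \geq (1-\eps)^3 \alpha f(\opt) \,\bigm|\, \mathcal{F}_{\tau-1}\bigr] \;\geq\; \frac{\eps (1-\eps)^2 \alpha}{1 - (1-\eps)^3 \alpha}.
\]
With $g = \Theta(1/(\alpha \eps))$, a direct calculation shows $pg \geq \Omega((1-\eps)^2) = \Omega(1)$ uniformly in $\alpha \in (0, 1]$, so conditional independence across groups yields $P[\max_j Y^{(j)} \geq (1-\eps)^3 \alpha f(\opt) \mid \mathcal{F}_{\tau-1}] \geq 1 - (1-p)^g \geq 1 - e^{-pg} = \Omega(1)$.

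Putting the two steps together, and using $\Sbest \geq \max_{r, j} Y^{(j)}_r$ (taking the maximum over runs $r \leq R$ and groups $j \leq g$), we obtain $P[f(\Sbest) \geq (1-\eps)^3 \alpha f(\opt)] \geq \tfrac{1}{2} \cdot \Omega(1) = \Omega(1)$, as required. The main obstacle is the boosting step: a single application of Markov to the case (1) bound yields a success probability of only $\Theta(\eps)$, which is insufficient on its own; the $g = \Theta(1/(\alpha \eps))$ independent groups are essential to amplify this into a positive constant, and the factor $1/\alpha$ in the group count is precisely what ensures $pg = \Omega(1)$ even for small $\alpha$. A secondary subtlety is that $\tau$ is a random stopping time, but this is handled cleanly by its $\mathcal{F}_{\tau-1}$-measurability together with the fact that the groups' random partitions within any given run are mutually independent.
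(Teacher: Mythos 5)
Your argument is correct and follows essentially the same strategy as the paper: use Theorem~\ref{thm:parallel-main} to show that case (1) must hold within $O(1/\eps)$ rounds with constant probability (because $f(C_r \cap \opt)$ can increase by $\Omega(\eps)\,f(\opt)$ only $O(1/\eps)$ times), then boost the in-expectation guarantee of case (1) to a constant-probability one via Markov's inequality and the conditional independence of the $g = \Theta(1/(\alpha\eps))$ groups. The only cosmetic difference is in the bookkeeping step: you make the stopping time $\tau$ explicit and bound $\sum_s \Pr[\tau > s]$ via a telescoping argument, whereas the paper packages the indicator $I_r$ and the potential increment into a single per-round quantity $\Phi_r$ and bounds $\sum_r \Ex[I_r]$ directly---both yield the same conclusion.
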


We devote the rest of this section to the proof of Theorem~\ref{thm:parallel-main}.  Consider a run $r$ of the algorithm. Let $\hC_{r - 1} \subseteq V$. In the following, we condition on the event that $C_{r - 1} = \hC_{r - 1}$.

For each element $e \in V$, let $p_r(e) = \Pr_{X \sim \mathcal{V}(1/m)}[e \in \AlgRel(\hC_{r - 1} \cup X \cup \set{e})]$ if $e \in \opt \setminus \hC_{r - 1}$, and $0$ otherwise.
As shown in the following lemma, the probability $p_r(e)$ gives us a handle on the probability that $e$ is in the union of the relevant sets.

\begin{lemma} \label{lem:probabilities}
  For each element $e \in \opt \setminus \hC_{r - 1}$, \[\Pr[e \in \cup_{1 \leq i \leq gm} \AlgRel(\hC_{r - 1} \cup X_{i, r})] = 1 - (1 - p_r(e))^g,\] where $g$ is the number of groups into which the machines are partitioned.
\end{lemma}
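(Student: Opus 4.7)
The plan is to factor the event into two sources of independence: across the $g$ groups of machines, and, within a single group, across the randomness of which machine $e$ lands on. Since the random partitions used by different groups are mutually independent, if I let $q$ denote the probability that $e$ ends up in some relevant set among the $m$ machines of a single group, then by independence across groups the desired probability equals $1 - (1-q)^g$. It therefore suffices to prove $q = p_r(e)$.

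To compute $q$, I will condition on the index $i^\star$ of the unique machine (within a fixed group) to which $e$ is assigned. The key preliminary observation is that $\AlgRel(N) \subseteq N$ for both intended instantiations of $\Alg$: in the Greedy case $\AlgRel$ is the greedy output, and in the Continuous Greedy case it is the support of a fractional solution constructed on $N$. Combined with the hypothesis $e \notin \hC_{r-1}$, this means that $e \in \AlgRel(\hC_{r-1} \cup X_{i,r})$ can hold only for $i = i^\star$, so the union over $i$ in the group reduces to a single event on machine $i^\star$ and no inclusion--exclusion is needed.

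Next, conditional on $i^\star = i$, the assignments of the remaining elements $V \setminus \set{e}$ are unaffected because each element picks its machine independently. Hence $X_{i,r} \setminus \set{e}$ is distributed as a random subset of $V \setminus \set{e}$ in which every element is included independently with probability $1/m$, which matches the distribution used in the definition of $p_r(e)$: the set $\hC_{r-1} \cup X_{i,r}$ has the same law as $\hC_{r-1} \cup X \cup \set{e}$ for $X \sim \mathcal{V}(1/m)$. So the conditional probability that $e \in \AlgRel(\hC_{r-1} \cup X_{i^\star, r})$ given $i^\star = i$ equals $p_r(e)$; averaging over the $m$ equally likely values of $i^\star$ yields $q = p_r(e)$, and combining with the first step gives
\[
\Pr\!\left[e \in \bigcup_{1 \le i \le gm} \AlgRel(\hC_{r-1} \cup X_{i,r})\right] = 1 - (1 - p_r(e))^g.
\]

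I do not anticipate a serious obstacle here. The only care needed is to verify that conditioning on $e$'s machine choice does not perturb the distribution of the other elements' assignments, which is immediate from the per-element independence of the partition step. This is precisely what allows the per-group probability to match the definition of $p_r(e)$ on the nose rather than only approximately.
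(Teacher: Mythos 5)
Your proof is correct and follows essentially the same approach as the paper: decompose by independence across groups, then within each group condition on $e$'s machine assignment and observe that $X_{i,r} \cup \set{e}$ has the same law as the sample in the definition of $p_r(e)$. The one place you are slightly more careful than the paper is in explicitly invoking $\AlgRel(N) \subseteq N$ to justify that $e$ can only appear in the relevant set of the single machine it lands on; the paper's proof uses this implicitly in its second equality without stating it.
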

\begin{proof}
  For each group $G_j$, we can show that $e$ is \emph{not} in the union of the relevant sets for that group with probability $1 - p_r(e)$. Since different groups have independent partitions, $e$ is not in the union of the relevant sets for all machines with probability $(1 - p_r(e))^g$, and the lemma follows.  More precisely, for each group $G_j$, let $\overline{Y}_j$ be the event that $e \notin \bigcup_{i \in G_j} \AlgRel(\hC_{r - 1} \cup X_{i, r})$.  Let $G_{j,\ell}$ denote the $\ell$th machine in $G_j$.  We have
  \begin{multline*}
    \Pr\left[\overline{Y}_j \right]
    = {1 \over m} \sum_{\ell = 1}^m \Pr[\overline{Y}_j \mid \text{$e$ is on $G_{j,\ell}$}]
    = {1 \over m} \sum_{\ell = 1}^m \Pr_{X_{\ell, r}} [e \notin \AlgRel(\hC_{r - 1} \cup X_{\ell, r}) \mid e \in X_{\ell, r}]\\
    = {1 \over m} \sum_{\ell = 1}^m \Pr_{X \sim \mathcal{V}(1/m)} [e \notin \AlgRel(\hC_{r - 1} \cup X \cup \set{e})]
    = 1 - p_r(e),
  \end{multline*}
where the first equality follows from the fact that $e$ assigned to a machine in $G_j$ chosen independently and uniformly at random, and the third from the fact that the distribution of $X_{\ell, r} \sim \mathcal{V}(1/m)$ conditioned on $e \in X_{\ell, r}$ is identical to the distribution of $X \cup \set{e}$ with $X \sim \mathcal{V}(1/m)$.  Since the events $\set{Y_j \colon 1 \leq j \leq g}$ are mutually independent, $\Pr[ \underset{1 \leq j \leq N}{\wedge} \overline{Y}_j ] = \prod_{j = 1}^g   \Pr[\overline{Y}_j ] = (1 - p_r(e))^g$.
\end{proof}

Returning to the proof of Theorem \ref{thm:parallel-main}, we define a partition $(P_r, Q_r)$ of $\opt \setminus \hC_{r - 1}$ as follows:
\begin{align*}
  P_r &= \{e \in \opt \setminus \hC_{r - 1} \colon p_r(e) < \eps\} &
  Q_r &= \{e \in \opt \setminus \hC_{r - 1} \colon p_r(e) \geq \eps\}
\end{align*}
The following subsets of $P_r$ and $Q_r$ are key to our analysis (recall that $X_{i, r}$ is the random sample placed on machine $i$ at the beginning of the run $r$):
\begin{align*}
  P'_r &= \big\{e \in P_r \colon e \notin \AlgRel(\hC_{r - 1} \cup X_{1, r} \cup \set{e})\big\} &
  Q'_r &= Q_r \cap \big(\cup_{i = 1}^{gm} \AlgRel(\hC_{r - 1} \cup X_{i, r}) \big).
\end{align*}
Note that each element $e \in P_r$ is in $P'_r$ with probability $1 - p_r(e) \geq 1 - \eps$.  Further, by Lemma~\ref{lem:probabilities}, each element $e \in Q_r$ is in $Q'_r$ with probability $1 - (1 - p_r(e))^g \geq 1 - {1 \over e} \geq {1 \over 2}$.

It follows from the definition of $P'_r$ and the consistency property of $\Alg$ that
  \[ \AlgSol(\hC_{r - 1} \cup X_{1, r}) = \AlgSol(\hC_{r - 1} \cup X_{1, r} \cup P'_r).\]
Let $\opt_{r-1} = \hC_{r-1} \cap \opt$ be the part of $\opt$ in this iteration's pool.  Then, since \Alg is an $\alpha$ approximation and $P'_r \cup \opt_{r-1} \subseteq \opt$ is a feasible solution, we have
   \[ f(\AlgSol(\hC_{r - 1} \cup X_{1, r})) \geq \alpha \cdot f(P'_r \cup \opt_{r-1}). \]
Taking expectations on both sides, we have:
\begin{equation}
\Ex[f(\AlgSol(\hC_{r - 1} \cup X_{1, r}))] \geq \alpha \cdot \Ex[f(P'_r \cup \opt_{r-1})] \geq (1 - \eps) \alpha \cdot f(P_r \cup \opt_{r-1}), \label{eq:greedy}
\end{equation}
where the final inequality follows from Lemma \ref{lem:lovasz-1}, since $f$ is monotone when restricted to $\opt \supseteq P_r \cup \opt_{r-1}$, and $P'_r$ contains every element of $P_r$ with probability at least $(1 - \epsilon)$.

Note that $Q_r' \subseteq (\opt \cap C_r) \setminus \opt_{r-1}$.  As before, $f$ is monotone when restricted to $\opt$. Additionally, $Q'_r$ contains every element of $Q_r$ with probability at least $1/2$.  Thus,
\begin{equation*}
  \Ex[f(C_r \cap \opt) \mid C_{r - 1} = \hC_{r - 1}] \geq \Ex[f(Q'_r \cup \opt_{r-1})] \geq {1 \over 2} \cdot f(Q_r\cup \opt_{r-1}) + \frac{1}{2} \cdot f(\opt_{r-1}), 
\end{equation*}
where the final inequality follows from Lemma \ref{lem:lovasz-1}.  Rearranging this inequality using the condition $C_{r-1} = \hC_{r-1}$ and the definition $\opt_{r-1} = \hC_{r-1} \cap \opt$ we obtain:
\begin{multline}
  \Ex[f(C_r \cap \opt) - f(C_{r - 1} \cap \opt) \mid C_{r - 1} = \hC_{r - 1}] \geq {1 \over 2} \left(f(Q_r \cup \opt_{r-1}) - f(\opt_{r-1}) \right) \\
  \geq {1 \over 2} \left(f(P_r \cup Q_r \cup \opt_{r-1}) - f(P_r \cup \opt_{r-1}) \right)
  = {1 \over 2} \left(f(\opt) - f(P_r \cup \opt_{r-1}) \right), \label{eq:cr-increase}
\end{multline}
where the second inequality follows from submodularity.

Now, if $f(P_r \cup (\hC_{r - 1} \cap \opt)) \geq (1 - \eps) \cdot f(\opt)$ then this fact together with (\ref{eq:greedy}) imply that the first property in the statement of Theorem~\ref{thm:parallel-main} must hold. Otherwise, $f(\opt) - f(P_r \cup (\hC_{r - 1} \cap \opt)) \geq \eps \cdot f(\opt)$; this fact together with (\ref{eq:cr-increase}) implies that the second property must hold. 

\medskip
This completes the description of our generic approach. In the following sections, we instantiate the algorithm $\Alg$ with the standard Greedy algorithm and a heavily discretized Continuous Greedy algorithm, and obtain our main results stated in the introduction.

\section{A Parallel Greedy Algorithm}
\label{sec:parallel-greedy}

In this section, we combine the generic approach from Section~\ref{sec:parallel-framework} with the standard greedy algorithm, and give our results for monotone maximization stated in Theorem~\ref{thm:parallel-greedy-main}.

We let $\Alg$ be the standard $\Greedy$ algorithm. We let $\AlgRel(N) = \AlgSol(N) = \Greedy(N)$. It was shown in previous work that the \Greedy algorithm satisfies the consistency property.
\begin{lemma}[\cite{Barbosa2015}, Lemma 2] \label{lem:greedy-rejected-elements}
	Let $A \subseteq V$ and $B \subseteq V$ be two disjoint subsets
	of $V$. Suppose that, for each element $e \in B$, we have
	$\Greedy(A \cup \set{e}) = \Greedy(A)$. Then $\Greedy(A \cup B) =
	\Greedy(A)$.	
\end{lemma}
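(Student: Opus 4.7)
The plan is to fix a consistent tie-breaking rule so that $\Greedy(N)$ is deterministic for every $N \subseteq V$, and then prove by induction on the step index that the execution of \Greedy on $A \cup B$ makes exactly the same sequence of picks as on $A$. A preliminary sub-claim will be useful: whenever $\Greedy(A \cup \set{e}) = \Greedy(A)$, the two runs in fact agree pick-by-pick, not merely as final sets. Indeed, at any step $i$ where the first $i$ picks coincide (call the running set $S_i$), the Greedy rule on $A \cup \set{e}$ picks either the same element as on $A$ or $e$ itself; the latter would place $e$ in $\Greedy(A \cup \set{e})$, contradicting $e \notin \Greedy(A)$.

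With the sub-claim in hand, the main induction is immediate. The base case ($i = 0$) is trivial, as both runs start from $\emptyset$. For the inductive step, assume the first $i$ picks of $\Greedy(A \cup B)$ coincide with those of $\Greedy(A)$, yielding the partial solution $S_i$, and let $a^\star \in A$ be the next element picked by $\Greedy(A)$, i.e., the feasibility-respecting argmax of $f(S_i \cup \set{x}) - f(S_i)$ under the tie-breaking rule. If $\Greedy(A \cup B)$ also picks $a^\star$, the induction continues. Otherwise it picks some $b \in B$, which must be the argmax over all of $A \cup B$; restricting the candidate set to $A \cup \set{b} \subseteq A \cup B$ preserves $b$ as the argmax. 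Hence $\Greedy(A \cup \set{b})$, whose first $i$ picks match those of $\Greedy(A)$ by the sub-claim applied to $e = b$, would pick $b$ at step $i+1$. But then $b \in \Greedy(A \cup \set{b})$, and since $b \notin A$ this contradicts $\Greedy(A \cup \set{b}) = \Greedy(A)$. The induction then extends to the termination step, when neither run has any feasible candidate remaining.

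The main obstacle is the gap between the hypothesis, which is stated as set-equality of outputs, and the step-by-step equality needed for the inductive comparison; the sub-claim bridges this gap. Once it is in place, the rest is a short case analysis that only uses the hereditary structure of $\script{I}$ to ensure that feasibility at step $i+1$ depends only on $S_i$, so argmax computations behave monotonically as the candidate set shrinks from $A \cup B$ to $A \cup \set{b}$.
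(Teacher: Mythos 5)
Your proof is correct and matches the paper's approach: the paper cites this lemma from prior work without reproving it here, but its proof of the analogous consistency lemma for \DCGreedy (Lemma~\ref{lem:nonmonotone-rejected-elements}) uses exactly this first-divergence-to-contradiction argument. Your explicit sub-claim that $\Greedy(A \cup \{e\}) = \Greedy(A)$ forces pick-by-pick agreement is the step the paper's version leaves implicit when it asserts ``$e$ will be added to $W$ on input $(A \cup \{e\}, \bx)$,'' so making it explicit is a reasonable tightening rather than a different route.
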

Informally, this simply means that if \Greedy rejects some element $e$ when presented with input $A \cup \{e\}$, then adding other similarly rejected elements to $A \cup \{e\}$ cannot cause $e$ to be accepted. This allows us to immediately apply the result from Section~\ref{sec:parallel-framework} and obtain the following result.

\begin{theorem}\label{thm:parallel-greedy-main}
Let $f : 2^V \to \R_+$ be a submodular function, and $\I \subseteq 2^V$ be a hereditary set system. For any $\epsilon > 0$ there is a randomized distributed  $O(1/\epsilon)$-round  algorithm that can be implemented in the model described in Section~\ref{sec:model}. The algorithm is an $(\alpha - O(\eps))$-approximation with constant probability for the problem $\max_{S \in \I}f(S)$, where $\alpha$ is the approximation ratio of  the standard, sequential greedy algorithm for the same problem.
\end{theorem}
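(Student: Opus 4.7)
The plan is to derive this theorem as a direct instantiation of the generic framework developed in Section~\ref{sec:parallel-framework}, using the standard sequential \Greedy algorithm as the subroutine $\Alg$. Specifically, I would set $\AlgSol(N) \coloneqq \Greedy(N)$ and $\AlgRel(N) \coloneqq \Greedy(N)$ for every input $N \subseteq V$, so that $s = O(k)$. With this choice, the task reduces to verifying the two hypotheses required by Theorem~\ref{thm:parallel-approx}, namely the $\alpha$-approximation property and the consistency property.

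The $\alpha$-approximation property is immediate: by hypothesis, $\Greedy$ is an $\alpha$-approximation for $\max_{S \subseteq N \colon S \in \I} f(S)$ on every input $N$, so $\AlgSol(N) = \Greedy(N)$ achieves the approximation factor $\alpha$ by definition. The consistency property is precisely the content of Lemma~\ref{lem:greedy-rejected-elements} (restated from \cite{Barbosa2015}): if $A$ and $B$ are disjoint and $\Greedy(A \cup \{e\}) = \Greedy(A)$ for every $e \in B$, then $\Greedy(A \cup B) = \Greedy(A)$. Since $\AlgRel = \AlgSol = \Greedy$, the condition $\AlgRel(A \cup \{e\}) = \AlgRel(A)$ from the framework is exactly the hypothesis of Lemma~\ref{lem:greedy-rejected-elements}, so consistency follows.

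With both hypotheses established, I would invoke Theorem~\ref{thm:parallel-approx} directly on $\ParallelAlg$ instantiated with $\Alg = \Greedy$. This yields, with constant probability, a solution of value at least $(1-\eps)^3 \alpha \cdot f(\opt) = (\alpha - O(\eps)) \cdot f(\opt)$, using $O(1/\eps)$ runs (equivalently, $O(1/\eps)$ rounds in the MPC model of Section~\ref{sec:model}). Finally, I would note that with $s = O(k) \le n^{1-2c}$ by the assumption on solution size from Section~\ref{sec:model}, the resource guarantees from the lemma preceding Theorem~\ref{thm:parallel-main} certify that the algorithm fits inside the MPC model, completing the proof.

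There is essentially no hard step here: all the work has already been done in Section~\ref{sec:parallel-framework}. The only subtlety is choosing $\AlgSol$ and $\AlgRel$ correctly so that the consistency property of \cite{Barbosa2015} translates verbatim into the consistency property demanded by the framework, and remarking that the greedy algorithm's native approximation guarantee on each ``local'' instance $X_{i,r} \cup C_{r-1}$ is what Theorem~\ref{thm:parallel-approx} needs.
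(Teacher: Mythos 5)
Your proposal matches the paper's proof almost verbatim: the paper instantiates the generic framework of Section~\ref{sec:parallel-framework} with $\AlgSol = \AlgRel = \Greedy$, notes that the $\alpha$-approximation property holds by definition of $\Greedy$ and that the consistency property is exactly Lemma~\ref{lem:greedy-rejected-elements}, and then invokes the framework's guarantee. Your extra remark about $s = O(k) \le n^{1-2c}$ ensuring the MPC resource bounds hold is a small but sensible addition that the paper leaves implicit.
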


\section{A Parallel Continuous Greedy Algorithm}
\label{sec:parallel-dcgreedy}

\begin{figure}[!t]
\begin{center}
\begin{minipage}{0.38\linewidth}
\begin{algorithm}[H]
\KwIn{$N \subseteq V$}
$\bx(0) \gets \zero$\;
\For{$t \gets 1$ to $1/\eps$}{
  $\by(t) \gets \GreedyStep(N, \bx(t))$\;
  $\bx(t) \gets \bx(t-1) + \by(t)$\;
}
$S \gets \mathsf{SwapRounding}(\bx(1/\eps), \I)$\;
Let $T$ be the support of $\bx(1/\eps)$ \;
\Return $(S, T)$ \;
\caption{Discretized Continuous Greedy (\DCGreedy).}
\label{alg:dcgreedy}
\end{algorithm}
\end{minipage}
\hspace{0.1in}
\begin{minipage}{0.57\linewidth}
\begin{algorithm}[H]
\KwIn{$N \subseteq V$, $\bx \in [0, 1]^N$}
$W \gets \emptyset$, $\by \gets \zero$\;
\InfLoop{}{
  $D \gets \{e \in N \setminus W : W \cup \{e\} \in \I$\}\;
  \ForEach{$e \in D$}
  {
    $w_e \gets \Ex[f(R(\bx + \by) \cup \set{e}) - f(R(\bx + \by))]$
  }
  Let $e^* = \arg\max_{e \in D} w_e$\;
  \lIf{$D = \emptyset$ or $w_{e^*} < 0$}{\Return $\by$}
  \Else{$y_{e^*} \gets y_{e^*} + \eps(1 - x_{e^*})$\;
    $W \gets W \cup \{e^*\}$\;
  }
}
\caption{Greedy Update Step (\GreedyStep).}
\label{alg:standard-greedy}
\end{algorithm}
\end{minipage}
\end{center}
\caption{The discretized continuous greedy algorithm.  On line 5 of Algorithm~\ref{alg:standard-greedy}, for a vector $\bz \in [0, 1]^N$, we use $R(\bz)$ to denote a random subset of $N$ that contains each element $e$ independently with probability $z_e$. The weights on line 5 cannot be computed exactly in polynomial time, but they can be efficiently approximated using random samples.}
\end{figure}

For monotone maximization subject to a matroid constraint, Theorem~\ref{thm:parallel-greedy-main} guarantees only a $(1/2 - \eps)$ approximation, due to the limitations of the standard greedy algorithm. We obtain a nearly optimal $(1 - 1/e - \eps)$ approximation by instantiating the framework in Section~\ref{sec:parallel-framework} with the \DCGreedy algorithm shown in Algorithm~\ref{alg:dcgreedy}.

The \DCGreedy algorithm is a heavily discretized version of the measured continuous greedy approach of \cite{Feldman2011a}, and it first constructs an approximate fractional solution to the problem $\max_{\bx \in P(\I)} F(\bx)$ of maximizing the multilinear extension $F$ of $f$ subject to the constraint that $\bx$ is in the matroid polytope $P(\I)$, and then rounds the fractional solution without loss using pipage rounding or swap rounding \cite{Ageev2004,Chekuri2010}.  

In this section, we combine the generic approach from Section~\ref{sec:parallel-framework} with the \DCGreedy algorithm. We use \DCGreedy as \Alg; the relevant set $\AlgRel(N)$ is the set of elements in the support of the fractional solution $\bx(1/\eps)$, and $\AlgSol(N)$ is the integral solution obtained by rounding $\bx(1/\eps)$.

Note that it is necessary to ensure that the fractional solution has small support so that the size of $\AlgRel(N)$ is small. We achieve this by heavily discretizing the continuous greedy algorithm, thereby limiting the number of support updates performed in lines 3 and 4 of \DCGreedy.  Unfortunately, performing this discretization naively introduces an error in the approximation that is too large.  Thus, we make use of a key idea from \cite{Badanidiyuru2014}, which can be applied in the case of a matroid constraint. This allows us to show the following lemma whose proof is deferred to Section~\ref{sec:dcgreedy-analysis}.

\begin{lemma}
\label{lem:dcgreedy-analysis}
  The \DCGreedy algorithm achieves an $(1 - 1/e - O(\eps))$ approximation for monotone functions and an $(1/e - O(\eps))$ approximation for non-monotone functions. 
\end{lemma}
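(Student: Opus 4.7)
The plan is to analyze $\DCGreedy$ as a heavily discretized version of the measured continuous greedy of Feldman-Naor-Schwartz, with step size $\eps$ and $1/\eps$ outer iterations. Since swap rounding preserves the value of the multilinear extension $F$ in expectation under a matroid constraint, it suffices to lower bound $F(\bx(1/\eps))$: I would show $F(\bx(1/\eps)) \geq (1-1/e - O(\eps))\, f(\opt)$ in the monotone case and $F(\bx(1/\eps)) \geq (1/e - O(\eps))\, f(\opt)$ in general, after which the guarantee $\Ex[f(S)] \geq F(\bx(1/\eps))$ of swap rounding yields the lemma.

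The workhorse is a per-iteration lower bound of the form
\begin{equation*}
F(\bx(t)) - F(\bx(t-1)) \geq \eps \bigl( F(\bx(t-1) \vee \one_{\opt}) - F(\bx(t-1)) \bigr) - O(\eps^2)\, f(\opt),
\end{equation*}
which I would prove as follows. Inside the $\GreedyStep$ call with input $\bx \coloneqq \bx(t-1)$, elements $e_1, \ldots, e_q$ are selected greedily by marginal $w_{e_j}$ evaluated at the intermediate fractional point $\bx + \by_{j-1}$; the resulting $W_t$ is a basis of the matroid restricted to the positive-marginal elements. Using Brualdi's matroid-exchange theorem, I would construct a bijection $\pi \colon W_t \to \opt$ (after padding both sides with dummies) such that $W_{j-1} \cup \{\pi(e_j)\} \in \I$, so that $\pi(e_j)$ was a feasible alternative to $e_j$ at the moment of greedy selection; the greedy rule then yields $w_{e_j} \geq w_{\pi(e_j)}$ at $\bx + \by_{j-1}$. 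Telescoping this across the inner loop and combining with submodularity (which controls both the drift of $w_{\pi(e_j)}$ from $\bx$ to $\bx + \by_{j-1}$ and the mismatch between the $(1 - x_{e_j})$ and $(1 - x_{\pi(e_j)})$ scaling factors) gives the displayed bound.

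For monotone $f$, we have $F(\bx \vee \one_{\opt}) \geq f(\opt)$ trivially, and iterating the linear recursion over $1/\eps$ outer rounds yields $F(\bx(1/\eps)) \geq (1 - (1-\eps)^{1/\eps})\, f(\opt) - O(\eps)\, f(\opt) \geq (1 - 1/e - O(\eps))\, f(\opt)$. For non-monotone $f$, I would additionally observe that the measured update $y_e \leftarrow y_e + \eps(1 - x_e)$ enforces $x_e(t) \leq 1 - (1-\eps)^t$, and use the Feldman-Naor-Schwartz lemma $F(\bx \vee \one_{\opt}) \geq (1 - \|\bx\|_\infty)\, f(\opt)$ to conclude $F(\bx(t-1) \vee \one_{\opt}) \geq (1-\eps)^{t-1}\, f(\opt)$; iterating the resulting recursion yields $F(\bx(1/\eps)) \geq (1/e - O(\eps))\, f(\opt)$. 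Lemma~\ref{thm:monotonicity-opt} can be invoked to justify that marginals along the way behave monotonically when restricted to $\opt$, which is needed to pass from the greedy comparison to the $F(\cdot \vee \one_{\opt})$ quantity.

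The main obstacle is controlling the cumulative within-step discretization error in $\GreedyStep$: because one outer iteration adds multiple elements sequentially (each raising its own coordinate by $\eps(1 - x_e)$ before the next marginal is computed), the marginals $w_{e_j}$ are evaluated at different intermediate points rather than at a single reference point, and these marginals drift as $j$ grows. I would invoke the lazy-evaluation technique of Badanidiyuru-Vondrak referenced in the excerpt, which exploits the matroid constraint $|W_t| \leq k$ together with the bound $\eps(1 - x_e) \leq \eps$ on each increment to absorb the drift into the $O(\eps^2)\, f(\opt)$ error term in the per-iteration bound; summed over the $1/\eps$ outer iterations, the total loss is $O(\eps)\, f(\opt)$, as needed. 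The matroid structure is essential throughout, both for the Brualdi-exchange bijection and for bounding the size of each $W_t$.
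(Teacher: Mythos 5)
Your proposal follows essentially the same route as the paper (whose proof appears in the appendix as Lemma~\ref{lem:cg-onestep} and Lemma~\ref{lem:cg-allsteps}, together with Lemma~\ref{lem:matroid-exchange} and Lemma~\ref{lem:multilinear-lb}): swap rounding preserves $F$ in expectation, a per-iteration gain bound is established via a matroid-exchange bijection between the selected basis $W$ and $\opt$, the greedy selection rule gives $w_{e_i} \ge w_{\pi(e_i)}$, submodularity controls the drift of marginals across the inner loop, and the measured update together with the Feldman-Naor-Schwartz lemma handles the non-monotone case via $x_e(t) \le 1-(1-\eps)^t$. One small technical correction: your per-iteration bound is stated against $F(\bx(t-1)\vee\one_{\opt})-F(\bx(t-1))$, but submodularity lets you shift the $o_i$-marginals from the intermediate point $\bx+\by(i-1)$ only to the \emph{larger} point $\bx(t)$, not back to $\bx(t-1)$; the paper therefore obtains the bound against $F(\bx(t)\vee\one_{N\cap\opt})-F(\bx(t))$ and rearranges the recursion accordingly. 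Your formulation would require the opposite submodularity inequality, which does not hold; this is easily fixed by matching the paper's reference point, and the final recursion yields the same $O(\eps)$ loss. The invocation of Lemma~\ref{thm:monotonicity-opt} is unnecessary here — plain submodularity suffices to compare marginals at different fractional points.
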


The lemma above provides us with the desired approximation guarantees for \DCGreedy, and thus it remains to show the consistency property. Before doing so, we must address how the weights are computed on line 5 of the \GreedyStep algorithm (see Algorithm~\ref{alg:standard-greedy}). Computing the weights exactly requires exponential time, but they can be approximated in polynomial time using random samples. In order to illustrate the main ideas behind the proof of consistency, \emph{we assume that the weights are computed exactly}, since this will keep the algorithm deterministic. In the Appendix, we remove this assumption and we analyze the resulting randomized algorithm using an extension of our framework.

\begin{lemma} \label{lem:nonmonotone-rejected-elements}
	Let $A$ and $B$ be two disjoint subsets
	of $V$. Suppose that, for each element $e \in B$, we have
	$\DCGreedyRel(A \cup \set{e}) = \DCGreedyRel(A)$. Then $\DCGreedySol(A \cup B) =
	\DCGreedySol(A)$.	
\end{lemma}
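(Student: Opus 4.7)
The plan is to prove by strong induction on the sequence of inner selection steps of \DCGreedy (a selection step being one iteration of the inner loop of \GreedyStep, across all $1/\eps$ outer rounds) that the executions on inputs $A$ and $A \cup B$ produce identical state and identical chosen elements at every step. In each selection step the algorithm picks $e^* = \argmax_{e \in D} w_e$ where the weights $w_e = \Ex[f(R(\bx+\by) \cup \{e\}) - f(R(\bx+\by))]$ depend only on the current fractional vector $\bx + \by$ and not on which other candidates sit in the feasible set $D$. I will fix a canonical tie-breaking rule (say, a total priority order on $V$), and treat the random bits used by $\mathsf{SwapRounding}$ as a fixed shared tape, so that the entire procedure is deterministic given its input.

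The heart of the proof is a single-element consistency claim: for each $e \in B$, the hypothesis $\DCGreedyRel(A \cup \{e\}) = \DCGreedyRel(A)$ forces the executions on $A$ and $A \cup \{e\}$ to be identical step by step. To establish this, first observe that $e \notin \DCGreedyRel(A \cup \{e\})$, and that every time \GreedyStep selects an element $e^*$ it adds the strictly positive amount $\eps(1 - x_{e^*})$ to $y_{e^*}$, so $e^*$ ends up in the support of the final $\bx$. Hence $e$ is never selected in the execution on $A \cup \{e\}$. A step-by-step induction then follows: if the two executions coincide through step $j - 1$, at step $j$ they share the same state, the feasible set of the larger execution is the feasible set of the smaller one possibly augmented by $\{e\}$, and the weights on the common elements coincide; since the $A \cup \{e\}$ execution does not pick $e$, the fixed tie-breaking rule forces it to pick the same $e^*$ that the $A$ execution picks.

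With this claim in hand, the induction comparing $A$ and $A \cup B$ becomes straightforward. Assume both executions agree through step $j - 1$. At step $j$ the shared state produces the same weights for every element common to both feasible sets, so the only way the $A \cup B$ execution could diverge is by selecting some $e' \in B \cap D$ that strictly beats $e^*$ in weight, or ties and has higher priority. But applying the single-element consistency claim to $e'$ shows that on input $A \cup \{e'\}$ the algorithm at step $j$ still picks $e^*$, not $e'$; since the weights $w_{e^*}$ and $w_{e'}$ are functions of the shared state alone, this contradicts any domination of $e^*$ by $e'$. Therefore step $j$ also picks $e^*$ in the $A \cup B$ execution.

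Iterating over all selection steps shows that the final fractional vector $\bx(1/\eps)$ is the same on inputs $A$ and $A \cup B$, whence swap rounding with the fixed random tape produces $\DCGreedySol(A \cup B) = \DCGreedySol(A)$. The main obstacle is the logical order of the argument: one must first distill a clean per-element consistency statement out of the support-equality hypothesis, and only then use it to compare $A$ against the full superset $A \cup B$. This parallels the proof of Lemma~\ref{lem:greedy-rejected-elements} for the standard \Greedy algorithm, with the additional bookkeeping that selections in \DCGreedy correspond to fractional updates and the outer algorithm makes multiple calls to \GreedyStep rather than a single pass.
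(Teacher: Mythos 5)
Your proof is correct and follows essentially the same route as the paper's: both hinge on examining the first selection step at which the executions on $A$ and $A \cup B$ could diverge, observing that the divergent element must be some $e' \in B$ which would then also be selected on input $A \cup \{e'\}$, and deriving a contradiction because that would put $e' \in \DCGreedyRel(A \cup \{e'\}) = \DCGreedyRel(A) \subseteq A$, violating disjointness. The paper packages this as a single first-divergence contradiction, whereas you factor out a single-element consistency sub-claim and then iterate across $B$; the underlying argument is the same.
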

\begin{proof}
  We will show that the \GreedyStep algorithm picks the same set $W$ on input $(A, \bx)$ and $(A \cup B, \bx)$, which implies the lemma. Suppose for contradiction that the algorithm makes different choices on input $(A, \bx)$ and $(A \cup B, \bx)$. Consider the first iteration where the two runs differ, and let $e$ be the element added to $W$ in that iteration on input $(A \cup B, \bx)$. Note that $e \notin A$ and thus we have $e \in B$. But then $e$ will be added to $W$ on input $(A \cup \set{e}, \bx)$. Thus $e \in \DCGreedyRel(A \cup \set{e})$, which contradicts the fact that $e \in B$.
\end{proof}

Thus we can apply the result from Section~\ref{sec:parallel-framework} and obtain the following result.

\begin{theorem} \label{thm:parallel-cg-main}
Let $f : 2^V \to \R_+$ be a submodular function, and $\I \subseteq 2^V$ be a matroid.  For any $\epsilon > 0$ there is a randomized distributed  $O(1/\epsilon)$-round  algorithm that can be implemented in the model described in Section~\ref{sec:model}.  The algorithm is an $(\alpha - O(\eps))$-approximation with constant probability for the problem $\max_{S \in \I}f(S)$, where $\alpha$ is $(1 - 1/e)$ for monotone $f$ and $1/e$ for general $f$.
\end{theorem}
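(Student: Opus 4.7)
The plan is to instantiate the generic framework from Section~\ref{sec:parallel-framework} with \DCGreedy serving as the inner algorithm \Alg. Specifically, I take $\AlgRel(N)$ to be the support of the fractional solution $\bx(1/\eps)$ produced by \DCGreedy and $\AlgSol(N)$ to be the integral set $S$ obtained by swap-rounding $\bx(1/\eps)$ against the matroid $\I$. To invoke Theorem~\ref{thm:parallel-approx}, I need \DCGreedy to satisfy both the $\alpha$-approximation property and the consistency property of Section~\ref{sec:parallel-framework}.

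The approximation property is exactly Lemma~\ref{lem:dcgreedy-analysis}: $\DCGreedy$ is an $(1 - 1/e - O(\eps))$-approximation in the monotone case and an $(1/e - O(\eps))$-approximation for general submodular $f$; swap rounding on the matroid polytope preserves the multilinear value in expectation, so no additional loss is incurred in passing from $\bx(1/\eps)$ to $\AlgSol(N)$. The consistency property is Lemma~\ref{lem:nonmonotone-rejected-elements}: if every $e \in B$ fails to enter the support when tested alongside $A$, then an inductive argument on outer iterations shows that \GreedyStep makes identical choices on $(A, \bx)$ and on $(A \cup B, \bx)$, so $\DCGreedySol(A \cup B) = \DCGreedySol(A)$. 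Applying Theorem~\ref{thm:parallel-approx} then yields a $(1-\eps)^3 \alpha$-approximation with constant probability in $O(1/\eps)$ rounds; rescaling $\eps$ absorbs the $(1-\eps)^3$ factor into the $O(\eps)$ loss.

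It remains to check that the algorithm fits the MPC model of Section~\ref{sec:model}. The crucial parameter is $s = \max_N |\AlgSol(N) \cup \AlgRel(N)|$. Because \GreedyStep maintains a set $W$ with $W \in \I$ at all times (each new $e^*$ satisfies $W \cup \{e^*\} \in \I$), at most $k$ elements can be added during a single call to \GreedyStep; summed over the $1/\eps$ outer iterations of \DCGreedy, the support of $\bx(1/\eps)$ has size $O(k/\eps)$, and the rounded solution has size at most $k$. Under the standing assumption $k \le n^{1-2c}$, this is polynomially below $n$, so the round, space, and communication bounds derived for $\ParallelAlg$ carry over unchanged.

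The main obstacle is that the weights $w_e$ on line~5 of \GreedyStep are expectations under the product distribution $R(\bx+\by)$ and cannot be computed exactly in polynomial time; one must approximate them by independent random sampling, which both perturbs the greedy choice and, more seriously, makes \Alg randomized. The consistency argument in Lemma~\ref{lem:nonmonotone-rejected-elements} implicitly uses determinism, since the parallel runs on $A$ and on $A \cup \{e\}$ must follow the same script. Handling this cleanly requires extending the framework to randomized \Alg and coupling the sample draws across parallel executions of $\Alg$ so that an analogue of Lemma~\ref{lem:nonmonotone-rejected-elements} holds with high probability; as indicated by the authors, this extension is carried out in the appendix, and together with the deterministic reasoning above it completes the proof.
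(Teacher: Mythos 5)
Your proposal is correct and follows essentially the same route as the paper: instantiate the generic framework with \DCGreedy, invoke Lemma~\ref{lem:dcgreedy-analysis} for the $\alpha$-approximation property and Lemma~\ref{lem:nonmonotone-rejected-elements} for consistency, bound $s=O(k/\eps)$ via the matroid constraint on $W$ in \GreedyStep, and defer the issue of sampling-based weight estimation to the randomized framework in the appendix (where the coupling is achieved precisely by passing the shared random string $\bb$ as an explicit input). The one explicit point you add---verifying the $s$-bound against the MPC model---is implicit in the paper but handled correctly.
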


\section{Faster Algorithms}
\label{sec:two-round}
In this section, we build on the techniques from the previous sections to give a distributed algorithm for non-monotone maximization that requires only two rounds, rather than $O(1/\eps)$ rounds, and achieves an improved approximation guarantee over the two-round algorithm proposed in \cite{Barbosa2015}. In the case of non-monotone maximization over a matroid, we show that our techniques can be used to obtain a new, fast \emph{sequential} algorithm as well.

\subsection{Two-Round Algorithms For Non-Monotone Maximization}
\label{sec:two-round-algorithms}

We first give an improved two-round algorithm for non-monotone maximization subject to a any hereditary constraint.  The algorithm is similar to that of \cite{Barbosa2015} for \emph{monotone} maximization; perhaps surprisingly, we show that this approach achieves a good approximation even for non-monotone functions. We randomly partition the elements onto the $m$ machines, and run $\Greedy$ on the elements $V_i$ on machine $i$ to pick a set $S_i$. We place the sets $S_i$ on a single machine and we run any algorithm \Alg on $B \coloneqq \bigcup_i S_i$ to find a solution $T$. We return the best solution amongst $S_1, \dots, S_m, T$.

We analyze the algorithm for any hereditary constraint $\I$ for which the \Greedy algorithm satisfies the following property (for some $\gamma$), which we refer to as the strong greedy property:
\begin{equation}
	\forall S \in \I: f(\Greedy(V)) \ge \gamma \cdot f(\Greedy(V) \cup S)
	\tag{\ensuremath{\mathrm{GP}}}
	\label{eq:greedy-strong-property}
\end{equation}

By the standard \Greedy analysis, we have $\gamma = 1/2$ for a matroid constraint and $\gamma = 1/(p + 1)$ for a $p$-system constraint.

\begin{theorem}
\label{thm:two-round}
Suppose that \Greedy satisfies the strong greedy property with constant $\gamma$ and let \Alg be any $\beta$-approximation for the problem $\max_{S \in \I}f(S)$. Then there is a randomized, two-round distributed algorithm that achieves a $( 1 - {1 \over m}){\beta \gamma  \over \beta + \gamma}$ approximation in expectation for $\max_{S \in \I}f(S)$.
\end{theorem}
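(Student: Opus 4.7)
My plan is to combine two complementary bounds via a weighted maximum, focusing on an arbitrary fixed machine (say machine $1$) by symmetry. First, by Lemma \ref{lem:greedy-rejected-elements} I may extend the input of machine $1$ by the set of individually-rejected outside elements $R_1 \coloneqq \set{e \in V \setminus V_1 : \Greedy(V_1 \cup \set{e}) = \Greedy(V_1)}$ without changing the output $S_1$. Applying the strong greedy property to $X_1 \coloneqq \opt \cap (V_1 \cup R_1) \in \I$ then gives $f(S_1) \geq \gamma \cdot f(S_1 \cup X_1)$. A symmetry calculation over the random partition shows that each $e \in \opt$ lies in $X_1$ with probability $1 - (1 - 1/m)\, p_e$, where $p_e \coloneqq \Pr[e \in B]$ and $B \coloneqq \bigcup_i S_i$. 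Separately, since $\opt \cap B \in \I$ and $\opt \cap B \subseteq B$, the $\beta$-approximation property of $\Alg$ yields $f(T) \geq \beta \cdot f(\opt \cap B)$.

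Combining via $\max(a, b) \geq \alpha a + (1-\alpha) b$ with $\alpha = \beta/(\beta+\gamma)$, so that both resulting coefficients equal $\lambda \coloneqq \beta\gamma/(\beta+\gamma)$, and taking expectations, the theorem reduces to the key inequality
\[
\Ex\bigl[f(S_1 \cup X_1) + f(\opt \cap B)\bigr] \;\geq\; \bigl(1 - \tfrac{1}{m}\bigr)\, f(\opt). \qquad (\star)
\]

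To prove $(\star)$ I would apply the sharp submodular inequality $f(A) + f(C) \geq f(A \cup C) + f(A \cap C)$ with $A = S_1 \cup X_1$ and $C = \opt \cap B$. Since $S_1 \cap \opt \subseteq X_1$ and $S_1 \subseteq B$, the intersection simplifies to $X_1 \cap B \subseteq \opt$, and the $\opt$-restriction of the union is $Y \coloneqq X_1 \cup (\opt \cap B) \subseteq \opt$. A short inclusion-exclusion gives, for each $e \in \opt$, that the marginal probabilities $\Pr[e \in Y]$ and $\Pr[e \in X_1 \cap B]$ sum to $1 + p_e / m \geq 1$. Combining Lemma \ref{lem:lovasz} (Jensen's inequality via convexity of $f^-$), the monotonicity of $f^-$ on $[0,1]^{\opt}$ implied by Lemma \ref{thm:monotonicity-opt}, scale invariance, and midpoint convexity of $f^-$ applied to these two marginal vectors then gives $\Ex[f(Y) + f(X_1 \cap B)] \geq f(\opt)$.

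The main obstacle is the non-monotonicity of $f$: after the submodular inequality, the union side becomes $f((S_1 \cup X_1) \cup (\opt \cap B)) = f(S_1 \cup Y)$ rather than $f(Y)$, and $f(S_1 \cup Y) \geq f(Y)$ need not hold when $S_1 \setminus \opt$ is nonempty. I would overcome this by carefully accounting for the ``non-$\opt$ leakage'' from $S_1 \setminus \opt$, using the exchangeability of the random partition and the fact that, conditioned on an element $e \in \opt$ not lying on machine $1$ (an event of probability $1 - 1/m$), the sample $V_1$ plays a role symmetric to the other machines' samples; this loses at most a factor of $(1 - 1/m)$ relative to the monotone bound, which is precisely the factor appearing in $(\star)$.
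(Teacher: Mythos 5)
Your setup mirrors the paper's proof: your $X_1$ satisfies $S_1 \cup X_1 = S_1 \cup O$, where $O = \{e \in \opt : e \notin \Greedy(V_1 \cup \{e\})\}$ is the set the paper works with, your $p_e$ coincides with the paper's, and the reduction of the theorem to $(\star)$ via the weighted maximum with weight $\beta/(\beta+\gamma)$ is correct. The gap is in your proof of $(\star)$. After applying $f(A) + f(C) \geq f(A \cup C) + f(A \cap C)$ you are left with $\Ex[f(S_1 \cup Y) + f(X_1 \cap B)]$, and, as you note, you cannot replace $f(S_1 \cup Y)$ by $f(Y)$ in the non-monotone case. The proposed repair---that exchangeability of the random partition, or conditioning on $e \notin V_1$, ``loses at most a $(1-1/m)$ factor''---is not an argument: $f(S_1 \cup Y)$ can be far smaller than $f(Y)$ whenever $S_1 \setminus \opt$ is non-empty, and nothing about the exchangeability of the partition controls this deficit, let alone bounds it by exactly the constant you need.

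What works (and is what the paper does) is to never try to peel off $S_1 \setminus \opt$; in fact you do not need the submodular inequality at all. Apply Lemma~\ref{lem:lovasz} and convexity of $f^-$ directly to the two terms of $(\star)$: $\Ex[f(S_1 \cup X_1)] + \Ex[f(\opt \cap B)] \geq f^-(\Ex[\one_{S_1 \cup X_1}]) + f^-(\Ex[\one_{\opt \cap B}]) \geq 2\, f^-\bigl(\tfrac{1}{2}(\Ex[\one_{S_1}] + \one_\opt)\bigr)$, using $\Ex[\one_{S_1 \cup X_1}] = \Ex[\one_{S_1}] + \one_\opt - \bp$ and $\Ex[\one_{\opt \cap B}] = \bp$. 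Since each element lands on machine $1$ with probability $1/m$, $\Ex[\one_{S_1}] \leq \tfrac{1}{m}\one_V$, so the vector $\bz \coloneqq \tfrac{1}{2}(\Ex[\one_{S_1}] + \one_\opt)$ has $z_e \geq 1/2$ for $e \in \opt$ and $z_e \leq \tfrac{1}{2m}$ for $e \notin \opt$. In the threshold integral defining $f^-$, every $\theta \in (\tfrac{1}{2m}, \tfrac{1}{2}]$ gives the level set $\{e : z_e \geq \theta\} = \opt$ exactly, and the remaining thresholds contribute $\geq 0$ by non-negativity of $f$; hence $f^-(\bz) \geq (\tfrac{1}{2} - \tfrac{1}{2m})f(\opt)$, which is $(\star)$. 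The $(1-1/m)$ comes from this threshold-isolation step, i.e., from the $\tfrac{1}{m}$-mass of $\Ex[\one_{S_1}]$ outside $\opt$, not from any ``leakage control''; monotonicity outside $\opt$ is never invoked.
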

\begin{proof}
For each element $e$, we let probability $p_e = \Pr_{X \sim \mathcal{V}(1/m)}[e \in \Greedy(X \cup \set{e})]$, if $e \in \opt$,
and $0$ otherwise.  Then, let $\bp \in [0, 1]^V$ denote the vector whose entries are given by the probabilities $p_e$.

We first analyze the expected value of the \Greedy solution $S_1$. Let 
\[O = \set{e \in \opt \colon e \notin \Greedy(V_1 \cup \{e \})}.\]
By Lemma~\ref{lem:greedy-rejected-elements}, $\Greedy(V_1 \cup O) = \Greedy(V_1) = S_1$, and by \eqref{eq:greedy-strong-property}, $f(S_1) \geq \gamma \cdot f(S_1 \cup O)$.
Therefore
\begin{align}
	\Ex[f(S_1)] &\geq \gamma \cdot \Ex[f(S_1 \cup O)] \nonumber\\
	&= \gamma \cdot \Ex[f^-(\one_{S_1 \cup O})] \nonumber\\
	&\geq \gamma \cdot f^{-}(\Ex[\one_{S_1 \cup O}]) \nonumber\\
	&= \gamma \cdot f^-(\Ex[\one_{S_1}] + (\one_{\opt} - \bp)).
\label{eq:first-machine}
\end{align}
On line three, we have used the fact that $f^-$ is convex and on line four we have used the fact that $\Ex[\one_{S_1 \cup O}] = \Ex[\one_{S_1}] + (\one_{\opt} - \bp)$.

Now consider the solution $T$.  Since \Alg is a $\beta$-approximation, we have
\begin{align}
	\Ex[f(T)] &\geq \beta \cdot \Ex[f(B \cap \opt)] \nonumber\\
	&= \beta \cdot \Ex[f^-(\one_{B \cap \opt})] \nonumber\\
	&\geq \beta \cdot f^-(\Ex[\one_{B \cap \opt}]) \nonumber\\
	&= \beta \cdot f^-(\bp). \label{eq:last-machine}
\end{align}
Similarly to above, we have used the convexity of $f^-$ and the fact that $\Ex[\one_{B \cap \opt}] = \bp$.  

By combining (\ref{eq:first-machine}) and (\ref{eq:last-machine}), and using convexity of $f^-$, we obtain
\begin{equation*}
	{1 \over \gamma} \Ex[f(S_1)] + {1 \over \beta} \Ex[f(T)] \geq f^-(\Ex[\one_{S_1}] + (\one_{\opt} - \bp)) + f^-(\bp)
	\geq 2 \cdot f^-\left({\Ex[\one_{S_1}] + \one_{\opt} \over 2}\right).
\end{equation*}
Since $S_1 \subseteq V_1$ and $V_1$ is a $1/m$ sample of $V$, we have $\Ex[\one_{S_1}] \leq {1 \over m} \cdot \one_V$. Therefore, using the definition of $f^-$ and the non-negativity of $f$, we obtain
\[ 2 \cdot f^-\left({\Ex[\one_{S_1}] + \one_{\opt} \over 2} \right) \geq \left(1 - {1 \over m} \right) f(\opt). \]

Thus
\[ \max\{\Ex[f(S_1)], \Ex[f(T)]\} \geq \left(1 - {1 \over m} \right) {\beta \gamma \over \beta + \gamma} \cdot f(\opt). \]
\end{proof}

\mypar{Examples of results.}
We conclude this section with some examples of approximation guarantees that we can obtain using Theorem~\ref{thm:two-round}. For a matroid constraint, we have $\gamma = 1/2$ and, if we use the measured Continuous Greedy algorithm for \Alg, we have $\beta = 1/e$; thus we obtain a $\left(1 - {1 \over m} \right) {1 \over 2 + e}$ approximation. We remark that, for a cardinality constraint, one can strengthen the proof of Theorem~\ref{thm:two-round} slightly and obtain a $\left(1 - {1 \over m} \right) {1 \over e} \left(1 - {1 \over e} \right)$ approximation; we give the details in Section~\ref{sec:two-round-cardinality}.

For a $p$-system constraint, we have $\gamma = {1/ (p + 1)}$. We can use the algorithm of Gupta \etal \cite{Gupta2010} for $\Alg$ that achieves an approximation $\beta = 3 / \left(2p + 4 + {2 \over p}\right)$ when combined with the algorithm of \cite{Buchbinder2012} for unconstrained non-monotone maximization. Thus we obtain a $3\left(1 - {1 \over m}\right)/ \left(5p + 7 + {2 \over p}\right)$ approximation.

\begin{figure}[!t]
\begin{center}
\begin{minipage}{0.5\linewidth}
\begin{algorithm}[H]
\label{alg:dtreshgreedy}
\caption{Descending Thresholds Greedy algorithm (\DTreshGreedy).}
\KwIn{$N\subseteq V$}
$S \gets \emptyset$, $d \gets \max_{e \in N} f(\set{e})$ \;
\For{$w = d$; $w \geq {\eps \over n} d$; $w \gets w (1 - \eps)$}{
  \ForEach{$e \in N$} {
    \If{$S \cup \{e\} \in \I$ \; \hspace{0.4em}\lAnd $f(S \cup \{e\}) - f(S) \geq w$}{
      $S \gets S \cup \set{e}$
    }
  }
}
\textbf{return} $S$ \;
\end{algorithm}
\end{minipage}
\caption{The Descending Thresholds Greedy algorithm of \cite{Kumar2013,Badanidiyuru2014}.}
\label{fig:dtgreedy}
\end{center}
\end{figure}

\subsection{A Fast Sequential Algorithm for Matroid Constraints}
\label{sec:fast-sequ-algor}

We now show how our approach can be used to obtain a fast \emph{sequential} algorithm for non-monotone maximization subject to a matroid constraint. The analysis given in Theorem~\ref{thm:two-round} only relies on the following two properties of the \Greedy algorithm: it satisfies (\ref{eq:greedy-strong-property}) and Lemma~\ref{lem:greedy-rejected-elements}. Thus we can replace the \Greedy algorithm by any algorithm satisfying these two properties. In particular, the Descending Thresholds Greedy (shown in  Figure~\ref{fig:dtgreedy} as \DTreshGreedy) of \cite{Kumar2013,Badanidiyuru2014} satisfies these conditions with $\gamma = 1/2 - \eps$.  

Our algorithm proceeds as follows.  We randomly partition the elements into $m \coloneqq 1/\eps$ samples $V_1, V_2, \dots, V_m$. On each sample, we run the Descending Thresholds Greedy algorithm on $V_i$ to obtain a solution $S_i$. Let $A \coloneqq \argmax_{i \in [m]} f(S_i)$ and $B \coloneqq \bigcup_i S_i$. Then, $|B| \le k/\eps$, where $k$ is the rank of the matroid.  We run any $\beta$-approximation algorithm $\Alg$ on $B$ to find a solution $B'$, and we return the better of $A$ and $B'$.  We obtain the following result.

\begin{theorem}\label{thm:fast-matroid}
There is a sequential, randomized $(\frac{1}{2+e} - \eps)$-approximation algorithm for the problem $\max_{S \in \I}f(S)$, where $\I$ is any matroid constraint, running in time $O({n \over \eps} \log{n}) + \poly({k \over \eps})$.
\end{theorem}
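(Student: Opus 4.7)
The plan is to show that the algorithm described just before the theorem statement is a sequential simulation of the two-round distributed procedure analyzed in Theorem~\ref{thm:two-round}, with \DTreshGreedy substituted for \Greedy and measured continuous greedy used as \Alg, and to verify that the proof of Theorem~\ref{thm:two-round} carries over under this substitution. First I would inspect that proof and confirm that \Greedy is used only through two structural properties: the strong greedy property~(\ref{eq:greedy-strong-property}) (invoked to bound $f(S_1) \ge \gamma \cdot f(S_1 \cup O)$) and the rejection-consistency property of Lemma~\ref{lem:greedy-rejected-elements} (invoked to conclude $\Greedy(V_1 \cup O) = \Greedy(V_1)$). Since both are satisfied by \DTreshGreedy with $\gamma = 1/2 - \eps$, as stated in the text and established in \cite{Kumar2013,Badanidiyuru2014}, the analysis of Theorem~\ref{thm:two-round} goes through verbatim, yielding an expected approximation of $(1 - 1/m)\cdot \beta\gamma/(\beta+\gamma)$. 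Plugging in $m = 1/\eps$, $\gamma = 1/2 - \eps$, and $\beta = 1/e - O(\eps)$ (the guarantee of measured continuous greedy on non-monotone matroids, applied to the restricted matroid on $B$) gives
\[
(1-\eps)\cdot \frac{(1/e - O(\eps))(1/2 - \eps)}{(1/e - O(\eps)) + (1/2 - \eps)} \;=\; \frac{1}{2+e} - O(\eps),
\]
so that rescaling $\eps$ by a constant delivers the claimed $\frac{1}{2+e} - \eps$ approximation in expectation.

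For the running time I would account separately for the two phases. \DTreshGreedy on input $V_i$ iterates through $O(\log(n/\eps)/\eps) = O(\log(n)/\eps)$ threshold values, scanning every element of $V_i$ once per threshold, so it costs $O(|V_i| \log(n)/\eps)$ value-oracle calls. Summing across the $m = 1/\eps$ samples $V_1,\ldots,V_m$, which partition $V$, gives total cost $\sum_i O(|V_i|\log(n)/\eps) = O((n/\eps)\log n)$ for the first phase. For the second phase, each $S_i$ is independent in the matroid of rank $k$ and so $|B| = |\bigcup_i S_i| \le mk = k/\eps$; running measured continuous greedy on the restricted matroid over $B$ therefore takes $\poly(k/\eps)$ time. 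Combining these bounds yields the claimed running time $O((n/\eps)\log n) + \poly(k/\eps)$.

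The main obstacle is verifying the two structural hypotheses for \DTreshGreedy that allow Theorem~\ref{thm:two-round}'s proof to be reused. The strong greedy bound $\gamma = 1/2 - \eps$ follows from the standard analysis of descending thresholds in \cite{Kumar2013,Badanidiyuru2014}, where the $\eps$ loss comes from the $(1-\eps)$-geometric discretization of thresholds and the cutoff at $\eps d/n$. The consistency analogue of Lemma~\ref{lem:greedy-rejected-elements} is proved by an inductive sweep through the outer threshold loop: if every $e \in B$ satisfies $\DTreshGreedy(A \cup \{e\}) = \DTreshGreedy(A)$, then at every threshold level the accepted set on input $A \cup B$ coincides with that on input $A$, since no $B$-element ever meets the current threshold when tested (otherwise it would have been added when presented alone with $A$). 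Once these two ingredients are in place, everything else is a direct plug-in into Theorem~\ref{thm:two-round}'s calculation and a routine bookkeeping of value-oracle calls.
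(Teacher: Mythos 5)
Your proposal follows the same route as the paper: replace \Greedy with \DTreshGreedy (which satisfies the strong greedy property with $\gamma = 1/2 - \eps$ and the consistency property of Lemma~\ref{lem:greedy-rejected-elements}), instantiate Theorem~\ref{thm:two-round} with $m = 1/\eps$ and a $\beta = 1/e$ algorithm on $B$, and bound the running time by $|B| \le k/\eps$ plus the cost of running \DTreshGreedy over the partition. Your writeup is somewhat more explicit than the paper's (you sum $|V_i|$ across the partition rather than invoking a high-probability bound on each sample's size, and you sketch the consistency argument for \DTreshGreedy by an inductive sweep over thresholds, whereas the paper just cites that the property holds), but the decomposition and the final plug-in calculation are identical.
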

\begin{proof}
The running time of the Descending Thresholds Greedy algorithm on a ground set of size $s$ is $O({s \over \eps} \log({s \over \eps}))$. Each random sample has size $O(\eps n)$ with high probability, and thus the total time needed to construct $B$ is $O({n \over \eps} \log{n})$ with high probability. It follows from the analysis in Theorem~\ref{thm:two-round} that the best of the two solutions $A$ and a $\beta$-approximation to $\max_{S \subseteq B : S \in \I} f(S)$ is a ${1 \over 2 + {1 \over \beta}} - \eps$ approximation.  We can then use any $1/e$-approximation algorithm as \Alg.
\end{proof}

\subsection{Monotone Maximization with a Cardinality Constraint}
\label{sec:monot-maxim-with}

Here, we show how the previous techniques give a simple two-round algorithm that achieves a $1/2 - \eps$ approximation for monotone maximization subject to a cardinality constraint.

\begin{theorem} \label{thm:half-cardinality}
There is a randomized, two-round, distributed algorithm achieving a ${1 \over 2} - \eps$ approximation in expectation for $\max_{S : |S| \le k}f(S)$, where $f$ is a monotone function.
\end{theorem}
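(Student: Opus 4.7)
The plan is to analyze a simple two-round distributed algorithm that parallels the one in Theorem~\ref{thm:two-round}, specialized and tightened for the monotone-cardinality setting. The algorithm partitions $V$ uniformly at random across $m$ machines to obtain $V_1,\ldots,V_m$; in round~1 each machine $i$ runs the standard greedy algorithm to produce $S_i = \Greedy(V_i)$; in round~2 the union $B = \bigcup_i S_i$ is collected on a single machine, on which \DTreshGreedy (Figure~\ref{fig:dtgreedy}) is run on $B$ to produce $T$. The output is the better of $T$ and the $S_i$'s.

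For the analysis I would follow the same template as the proof of Theorem~\ref{thm:two-round}, but exploit the monotonicity of $f$ to drop the extra $\Ex[\one_{S_1}]$ term that appears in the general bound there. Define $p_e = \Pr_{X \sim \mathcal{V}(1/m)}[e \in \Greedy(X \cup \{e\})]$ for $e \in \opt$ and $p_e = 0$ otherwise, collecting these into a vector $\bp$. By Lemma~\ref{lem:greedy-rejected-elements} combined with the $(1-1/e)$-approximation guarantee of greedy on a cardinality constraint and Lemma~\ref{lem:lovasz-1}, one gets $\Ex[f(S_1)] \ge (1-1/e-O(\eps))\, f^-(\one_{\opt} - \bp)$; analogously, using the $(1-1/e-\eps)$-approximation of \DTreshGreedy on $B$ together with the analogue of Lemma~\ref{lem:probabilities}, $\Ex[f(T)] \ge (1-1/e-O(\eps))\, f^-(\bp)$. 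Summing and using convexity and the restricted scale invariance of $f^-$,
\[
  f^-(\one_{\opt} - \bp) + f^-(\bp) \;\ge\; 2\, f^-(\tfrac{1}{2}\one_{\opt}) \;\ge\; f^-(\one_{\opt}) \;=\; f(\opt),
\]
so $\Ex[f(S_1)] + \Ex[f(T)] \ge (1-1/e-O(\eps))\, f(\opt)$.

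The main obstacle is that this naive sum bound yields only $(1-1/e)/2 \approx 0.316\, f(\opt)$, falling short of the claimed $1/2 - \eps$. To close the gap I would carry out a dichotomy on the vector $\bp$, mirroring the one in the proof of Theorem~\ref{thm:parallel-main}. In one branch, $\bp$ is ``large,'' so most of $\opt$ ends up in $B$, and \DTreshGreedy on $B$ directly gives $\Ex[f(T)] \ge (1-1/e-\eps)(1-o(1))\,f(\opt) > (1/2-\eps)\,f(\opt)$. In the other branch, $\bp$ is ``small,'' i.e. most of $\opt$ is rejected by greedy on every machine; one can then invoke the strong greedy property $\gamma=1/2$ to argue that \Greedy's solution on a single machine already has value at least $\tfrac{1}{2}f^-(\one_{\opt}-\bp) \ge (1/2-O(\eps))\,f(\opt)$, and conclude via $\max_i f(S_i)$. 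The crux, and the step on which I would spend the most care, is making this dichotomy quantitative enough that the two branches meet at exactly $1/2 - \eps$, using the monotonicity of $f^-$ and restricted scale invariance to avoid losing the extra $1/e$ factor that plagues the naive sum bound.
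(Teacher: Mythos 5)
Your proposal has a genuine gap, and the core issue is that the algorithm you analyze cannot achieve $1/2-\eps$, so no amount of sharpening the analysis will close it. You output the best of $\max_i f(S_i)$ and a single second-round solution $T$. Consider an instance in which $\opt$ splits into two halves $O_1$, $O_2$ of size $k/2$ each, contributing $f(\opt)/2$ each, where $O_1$ consists of elements with $p_e \approx 0$ (so they reliably survive the first round but barely ever appear in $B$) and $O_2$ consists of elements with $p_e$ around a moderate constant (so they appear in $B$ with constant but not near-$1$ probability). Then $f(B \cap \opt)$ is bounded away from $f(\opt)$, so $\Ex[f(T)]$ is bounded well below $1/2 \cdot f(\opt)$; and $S_1$ is competing against $O'_1 \subseteq O_1$ with $f(O_1) \approx f(\opt)/2$, so even with the $(1-1/e)$ guarantee you only get $\Ex[f(S_1)] \lesssim (1-1/e) f(\opt)/2 \approx 0.316\,f(\opt)$. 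The best-of operation cannot combine these two contributions, so the max stays around $0.316$. Your ``dichotomy'' tries to avoid this by splitting on whether $\bp$ is globally large or small, but the hard case is precisely the mixed one above, which falls into neither branch cleanly. You essentially acknowledge this yourself by deferring the ``crux'' of the argument.

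The paper's proof differs in two ways you miss. First, the second-round step does not just run an algorithm on $B=\bigcup_i S_i$ and compare against the $S_i$'s: for every $a \in \{0,\dots,k\}$ it forms $T_a = T^1_a \cup T^2_a$, where $T^1_a = \Greedy(f,X,a)$ greedily picks $a$ elements from a \emph{fresh} random sample $X\sim\mathcal{V}(1/m)$ (capturing the low-$p_e$ part $O_1$ via consistency and Lemma~\ref{lem:greedy-rejected-elements}), and $T^2_a = \Greedy(g,S,k-a)$ greedily picks $k-a$ elements from $S$ using the \emph{residual} objective $g(\cdot)=f(T^1_a\cup\cdot)-f(T^1_a)$ (capturing the high-$p_e$ part $O_2\cap S$). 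Adding the two strong-greedy inequalities then yields $f(T_a)\ge\tfrac12 f(O'_1 \cup (O_2\cap S))$, i.e.\ a single feasible set that inherits both halves at once, instead of a max over two halves. Second, the first round uses $\Theta(\log(1/\eps)/\eps)$ independent groups of $m$ machines, so an element with $p_e\ge\eps$ is amplified to lie in $S$ with probability at least $1-\eps$; your single-group algorithm lacks this amplification, so elements of $O_2$ are not reliably present in $B$, and even the ``large $\bp$'' branch of your dichotomy does not give what you claim. Both the per-element partition $(O_1,O_2)$ with the enumeration over $a=|O_1|$, and the group amplification, are essential ingredients your proposal does not supply.
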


\mypar{The algorithm.}
Let $\eps > 0$ be a parameter. The algorithm uses $\Theta(\log(1/\eps)/\eps)$ groups of machines with $m$ machines in each group (and thus the total number of machines is $O(m \log(1/\eps)/\eps)$). 

We randomly distribute the ground set $V$ to the machines as follows.
Amongst each group of $m$ machines, we partition $V$ uniformly at random; each element $e$ chooses an index $i \in [m]$ uniformly and independently at random and is assigned to the $i$th machine in the group.  We do this separately for each group of machines, i.e., each element appears on exactly one machine in each group.

We run \Greedy on each of the machines to select a set of $k$ elements. Let $S$ be the union of all of the \Greedy solutions. We place $S$ on a single machine together with a random sample $X \sim \script{V}(1/m)$. On this machine, we pick the final solution as follows. For each value $a \in \set{0, 1, \dots, k}$, we select a solution $T_a$ as follows.  Let $\Greedy(f,X,a)$ denote the first $a$ elements chosen from the random sample $X$ using the greedy algorithm on objective function $f$.

Then, let $T^1_a = \Greedy(f, X, a)$ and define $g(A) = f(T_a^1 \cup A) - f(T_a^1)$ for each $A \subseteq V$.  Note that $g$ is a non-negative, monotone submodular function. Let $T^2_a = \Greedy(g, S, k - a)$; that is, we pick $k - a$ elements from $S$ using the Greedy algorithm with the function $g$ as input. We set $T_a = T^1_a \cup T^2_a$. The final solution $T$ is the better of the $k + 1$ solutions $T_a$, where $a \in \set{0, 1, \dots, k}$.

\mypar{The analysis.}
In the following, we show that the algorithm above is a $1/2 - \eps$ approximation. For each element $e$, we define a probability $p_e=  \Pr_{X \sim \mathcal{V}(1/m)}[e \in \Greedy(X \cup \set{e})]$, if $e \in \opt$ and 0 otherwise.  We define a partition $(O_1, O_2)$ of $\opt$ as follows:
\begin{align*}
	O_1 &= \set{e \in \opt \mid p_e < \eps},&
	O_2 &= \set{e \in \opt \mid p_e \geq \eps}.
\end{align*}
Let $a = \card{O_1}$ and let
	\[ 	O'_1 = \set{e \in O_1 \mid e \notin \Greedy(f, X \cup \set{e}, a)}.\]
By the consistency  property of the greedy algorithm (Lemma~\ref{lem:greedy-rejected-elements}),
  \[ T_a^1 = \Greedy(f, X, a) = \Greedy(f, X \cup O'_1, a). \]
Additionally, for a cardinality constraint, \Greedy satisfies (\ref{eq:greedy-strong-property}) with $\gamma = 1/2$ (see Subsection~\ref{sec:two-round-algorithms}). Therefore
\begin{align}
	f(T_a^1) &\geq {1 \over 2} f(T_a^1 \cup O'_1), \label{eq1}\\
	g(T_a^2) &\geq {1 \over 2} g(T_a^2 \cup (O_2 \cap S)). \label{eq3c}
\end{align}
The inequality (\ref{eq3c}) can be rewritten as
\begin{equation}
	f(T_a^1 \cup T_a^2) - f(T_a^1) \geq {1 \over 2} (f(T_a^1 \cup T_a^2 \cup (O_2 \cap S)) - f(T_a^1)) \label{eq2}.
\end{equation}
Adding (\ref{eq1}) and (\ref{eq2}), we obtain
\begin{align*}
	f(T_a) &\geq {1 \over 2} (f(T_a^1 \cup O'_1) + f(T_a^1 \cup T_a^2 \cup (O_2 \cap S)) - f(T_a^1)) \notag\\
	&\geq {1 \over 2} f(T_a \cup O'_1 \cup (O_2 \cap S)) \notag\\
	&\geq {1 \over 2} f(O'_1 \cup (O_2 \cap S)) \label{eq3d},
\end{align*}
where the last two inequalities follow from submodularity and monotonicity.
Note that each element $e \in O_1$ is in $O'_1$ with probability $1 - p_e \geq 1 - \eps$. Each element $e \in O_2$ is in the union of the \Greedy solutions from a given group of machines with probability $p_e \geq \epsilon$; since there are $\Theta(\log(1/\epsilon)/\epsilon)$ groups of machines and the groups have independent partitions, $e$ is in $S$ with probability at least $1 - \epsilon$. Therefore
\begin{equation*}
	\Ex[\one_{O'_1 \cup (O_2 \cap S)}] \geq (1 - \epsilon) \one_{\opt}. \label{eq4}
\end{equation*}
Thus
  \[ \Ex[f(T_a)] \geq {1 \over 2} f^-(\Ex[\one_{O'_1 \cup (O_2 \cap S)}]) \geq (1 - \epsilon) {1 \over 2} f(\opt).\]
In the last inequality, we have used that if $x \geq y$ component-wise and $f$ is monotone, $f^-(x) \geq f^-(y)$.

\bigskip\noindent
{\bf Acknowledgments.} This work was done in part while A.E. was with the Computer Science department at the University of Warwick and a visitor to the Toyota Technological Institute at Chicago, H.N. was with the Toyota Technological Institute at Chicago, and J.W. was with the Computer Science department at the University of Warwick. J.W. was supported by EPSRC grant EP/J021814/1 and ERC Starting Grant 335288-OptApprox.
 
\bibliographystyle{plain}
\bibliography{references}

\newpage
\appendix

\section{Proof of Theorem~\ref{thm:parallel-approx}}
\label{sec:omitted}

\begin{appendix-theorem}{thm:parallel-approx}
  \ParallelAlg achieves an $(1 - \eps)^3 \alpha$ approximation with constant probability.
\end{appendix-theorem}
\begin{proof}
Let $R = c/\eps$ be the total number of runs, and $\script{C} = (C_0, C_1, \dots, C_R)$.  Let $I_r(C_{r - 1}) \in \set{0, 1}$ be equal to $1$ if and only if
\[ \Ex_{X_{1,r}}[f(\AlgSol(C_{r - 1} \cup X_{1, r}))] \geq (1 - \eps)^2 \alpha \cdot f(\opt). \]
  Let
  \begin{align*}
    \Phi_r(\script{C}) &= I_r(C_{r - 1}) + {2 (f(C_r \cap \opt) - f(C_{r - 1} \cap \opt)) \over \eps f(\opt)},\\
    \Phi(\script{C}) &= \sum_{r = 1}^{R} \Phi_r(C_{r - 1})
	 \leq \sum_{r = 1}^R I_r(C_{r - 1}) + {2 f(C_R \cap \opt) \over \eps f(\opt)}
	 \leq \sum_{r = 1}^R I_r(C_{r - 1}) + {2 \over \eps}.
  \end{align*}
	Taking expectation over the random choices of $\script{C}$, we have
	  $$ \Ex_{\script{C}}[\Phi(\script{C})] \le \sum_{r=1}^R \Ex[I_r(C_{r-1})] + \frac{2}{\eps}$$
	On the other hand, by Theorem~\ref{thm:parallel-main}, $\Ex[\Phi_r(C_{r - 1})] \geq 1$ and therefore $\Ex[\Phi(\script{C})] \geq R$. Thus
  \[ {2 \over \eps} + \sum_{r = 1}^R \Ex[I_r(C_{r - 1})] \geq \Phi(\script{C}) \geq R.\]
  Since $R > 6/\eps$, we have
	\[ \sum_{r = 1}^R \Ex[I_r(\hC_{r - 1})] \geq {2R \over 3}.\]
	Therefore, with probability at least $2/3$, there exists a run $r$ such that $I_r(C_{r-1})=1$. Fix the randomness up to the first such run, i.e., condition on a fixed $C_{r-1} = \hC_{r-1}$ such that $I_r(\hC_{r-1}) = 1$ and $C_r, \ldots, C_R$ remain random. Assume for contradiction that with probability at least $1-\eps \alpha (1-\eps)^2$ over the choices of $X_{1,r}$, 
	$$f(\AlgSol(C_{r - 1} \cup X_{1, r})) < (1 - \eps)^3 \alpha \cdot f(\opt).$$
  Then we have
  \begin{align*}
    \Ex[f(\AlgSol(C_{r - 1} \cup X_{1, r}))]  &< (\eps \alpha (1-\eps)^2+ (1-\eps \alpha(1-\eps)^2)(1-\eps)^3\alpha)f(\opt)\\
    &=\left(\eps + (1-\eps\alpha(1-\eps)^2)(1-\eps)\right)(1-\eps)^2 \alpha f(\opt)\\
    &<  (1-\eps)^2 \alpha f(\opt),
 \end{align*}
contradicting our assumption on $C_{r-1}$. Thus, with probability at least $\eps \alpha (1-\eps)^2$, we have \[f(\AlgSol(C_{r - 1} \cup X_{1, r})) \ge (1 - \eps)^3 \alpha \cdot f(\opt).\] Notice that the above argument applies not only to machine $1$ in run $r$ but also the first machine in each of the $g$ groups in the same run $r$ and their random samples $X_{i, r}$ are independent. Thus, since $g \geq c /(\eps\alpha)$ for a sufficiently large constant $c$, with probability at least $5/6$, we have $\max_i f(\AlgSol(C_{r - 1} \cup X_{i, r})) \ge (1 - \eps)^3 \alpha \cdot f(\opt)$. Overall, the algorithm succeeds with probability at least $2/3 \cdot 5/6 = 5 / 9$.
\end{proof}

\section{A Framework for Parallelizing Randomized Algorithms}
\label{sec:parallel-framework-rand}

In this section, we extend the framework from Section~\ref{sec:parallel-framework} to the setting in which the sequential algorithm \Alg is randomized.

We represent the randomness of \Alg as a vector $\bb \sim \D$ drawn from some distribution $\D$. It is convenient to have the randomness $\bb$ given as input to the algorithm. More precisely, we assume that \Alg takes as input a subset $N \subseteq V$ and a random vector $\bb \sim \D$ and returns a pair of sets, $\AlgSol(N, \bb)$ and $\AlgRel(N, \bb)$. We assume that the size of $\bb$ depends only on the size of $V$, and hence is independent of the size of $N$. Finally, we assume that \Alg has the following properties.

\begin{enumerate}
\item ($(\alpha, \eps, \delta)$-Approximation) Let $\opt = \argmax_{S \in \I, S \subseteq V} f(S)$ be an optimal solution over the entire ground set $V$. Let $A \subseteq V$ and $\bb \sim \D$. Let $B \subseteq \opt$ be a subset such that, for each $e \in B$, $e \notin \AlgRel(A \cup \set{e}, \bb)$. We have
 \[ \Pr_{\bb \sim \D}\Big[f(\AlgSol(A \cup B, \bb)) \geq \alpha \cdot f((A \cup B) \cap \opt) - \eps f(\opt) \Big] \geq 1 - \delta.\]
 \item (Consistency) Let $\bb$ be any fixed vector. Let $A$ and $B$ be two disjoint subsets of $V$.  Suppose that, for each element $e \in B$, we have $\AlgRel(A \cup \set{e}, \bb) = \AlgRel(A, \bb)$. Then $\AlgSol(A \cup B, \bb) = \AlgSol(A, \bb)$.
\end{enumerate}
Note that our assumption that the length of $\bb$ is independent of the size of the input subset allows expressions such as $\Alg(A \cup \set{e}, \bb)$ and $\Alg(A, \bb)$ to both make sense despite the fact that $\card{A\cup \set{e}} \neq \card{A}$.

Our algorithm works exactly as that described in Section \ref{sec:parallel-framework}, with the exception that 
each machine $i$ in round $r$ now additionally samples a random vector $\bb_{i,r} \sim \D$.  Then, on each machine, we run \Alg on the set $V_{i, r} \coloneqq X_{i, r} \cup C$ of elements on the machine and obtain $\AlgSol(V_{i, r}, \bb_{i, r})$ and $\AlgRel(V_{i, r}, \bb_{i, r})$.  As in Section \ref{sec:parallel-framework}, the union $\bigcup_i \AlgRel(V_{i, r}, \bb_{i, r})$ of relevant elements is added to $C$, and the solution $S_{\mathrm{best}}$ is replaced by the best solution among $\set{\AlgSol(V_{i, r}, \bb_{i, r}) \colon 1 \leq i \leq M}$ and $S_{\mathrm{best}}$. 

In the final round we place $C$ on a single machine, sample a random vector $\bb \sim \D$, and run \Alg on $C$, and $\bb$ to obtain the solution $\AlgSol(C, \bb)$. The final solution is the best among $\AlgSol(C, \bb)$ and $S_{\mathrm{best}}$.

\mypar{Analysis.}
The number of rounds, number of machines, space per machine, and amount of communication are the same as in Section~\ref{sec:parallel-framework}.
Thus, we focus on the approximation guarantee of the parallel algorithm. Using Theorem~\ref{thm:randomized-parallel-main} instead of Theorem~\ref{thm:parallel-main}, we can then finish the analysis in almost the same way as the deterministic case. The only difference is that instead of arguing that the algorithm works well with most of the random choices for $X_{i,r}$ as before, the proof now argues that the algorithm works well with most of the random choices for both $X_{i,r}$ and $\bb_{i,r}$. Nonetheless, the same proof except for this substitution works.

\begin{theorem} \label{thm:randomized-parallel-main}
  Consider a run $r > 1$ of the algorithm. Let $\hC_{r - 1} \subseteq V$. Then one of the following must hold: 
	\begin{enumerate}[$(1)$]
    \item $\E_{X_{1,r}, \bb_{1,r}}[f(\AlgSol(C_{r - 1} \cup X_{1, r}, \bb_{1,r})) \mid C_{r - 1} = \hC_{r - 1}] \geq (\alpha-O(\eps)) \cdot f(\opt)$, or
    \item $\E[f(C_r \cap \opt) \mid C_{r - 1} = \hC_{r - 1}] - f(\hC_{r-1} \cap\opt) \geq {\eps \over 2} \cdot f(\opt)$.
	\end{enumerate}
\end{theorem}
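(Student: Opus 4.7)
The proof follows the template of Theorem~\ref{thm:parallel-main}, with the extra randomness $\bb_{1,r} \sim \D$ from \Alg threaded through the argument. Fix a run $r$ and condition on $C_{r-1} = \hat{C}_{r-1}$. The plan is to define, for each $e \in \opt \setminus \hat{C}_{r-1}$, the probability
\[ p_r(e) = \Pr_{X \sim \mathcal{V}(1/m), \bb \sim \D}\bigl[e \in \AlgRel(\hat{C}_{r-1} \cup X \cup \{e\}, \bb)\bigr], \]
and $p_r(e)=0$ otherwise. Exactly as in Lemma~\ref{lem:probabilities}, since each machine has an independent pair $(X_{i,r}, \bb_{i,r})$, the probability that $e$ appears in the union $\bigcup_i \AlgRel(\hat{C}_{r-1} \cup X_{i,r}, \bb_{i,r})$ over the $g$ groups is $1 - (1-p_r(e))^g$. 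Then partition $\opt \setminus \hat{C}_{r-1}$ into $P_r = \{e : p_r(e) < \eps\}$ and $Q_r = \{e : p_r(e) \geq \eps\}$, and define
\[ P'_r = \{e \in P_r : e \notin \AlgRel(\hat{C}_{r-1} \cup X_{1,r} \cup \{e\}, \bb_{1,r})\}, \qquad Q'_r = Q_r \cap \textstyle\bigcup_{i} \AlgRel(\hat{C}_{r-1} \cup X_{i,r}, \bb_{i,r}). \]
Each $e \in P_r$ lies in $P'_r$ with probability at least $1-\eps$, and each $e \in Q_r$ lies in $Q'_r$ with probability at least $1/2$ (for $g = \Theta(1/(\eps\alpha))$).

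For the first case, I would apply consistency to $\bb_{1,r}$ fixed: since every $e \in P'_r$ satisfies $e \notin \AlgRel(\hat{C}_{r-1} \cup X_{1,r} \cup \{e\}, \bb_{1,r})$, we get $\AlgSol(\hat{C}_{r-1} \cup X_{1,r}, \bb_{1,r}) = \AlgSol(\hat{C}_{r-1} \cup X_{1,r} \cup P'_r, \bb_{1,r})$. Now invoke the $(\alpha, \eps, \delta)$-approximation property with $A = \hat{C}_{r-1} \cup X_{1,r}$ and $B = P'_r \subseteq \opt$: with probability at least $1-\delta$ over $\bb_{1,r}$,
\[ f(\AlgSol(\hat{C}_{r-1} \cup X_{1,r}, \bb_{1,r})) \geq \alpha \cdot f((\hat{C}_{r-1} \cup X_{1,r} \cup P'_r) \cap \opt) - \eps \, f(\opt) \geq \alpha \cdot f(P'_r \cup \opt_{r-1}) - \eps\, f(\opt), \]
where $\opt_{r-1} = \hat{C}_{r-1} \cap \opt$. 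Setting $\delta = \eps$ and bounding the contribution of the bad event by $0$, taking expectation over $(X_{1,r}, \bb_{1,r})$ and then applying Lemma~\ref{lem:lovasz-1} to the random set $P'_r$ (which contains each element of $P_r$ with probability at least $1-\eps$, and $f$ is monotone on $\opt$ by Lemma~\ref{thm:monotonicity-opt}) yields
\[ \E[f(\AlgSol(\hat{C}_{r-1} \cup X_{1,r}, \bb_{1,r}))] \geq (1-\eps)^2 \alpha \cdot f(P_r \cup \opt_{r-1}) - 2\eps \, f(\opt). \]

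For the second case, the pool bound proceeds exactly as in Theorem~\ref{thm:parallel-main}: since $Q'_r \subseteq (C_r \cap \opt) \setminus \opt_{r-1}$, each $e \in Q_r$ lies in $Q'_r$ with probability $\geq 1/2$, and $f$ is monotone on $\opt$, Lemma~\ref{lem:lovasz-1} and submodularity give
\[ \E[f(C_r \cap \opt) \mid C_{r-1} = \hat{C}_{r-1}] - f(\opt_{r-1}) \geq \tfrac{1}{2}\bigl(f(\opt) - f(P_r \cup \opt_{r-1})\bigr). \]
The final case split is identical: if $f(P_r \cup \opt_{r-1}) \geq (1-\eps) f(\opt)$, the first inequality gives (1) with an error of $O(\eps) f(\opt)$ absorbed into $\alpha - O(\eps)$; otherwise $f(\opt) - f(P_r \cup \opt_{r-1}) \geq \eps f(\opt)$, which yields (2).

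The main (mild) obstacle is handling the compound error coming from the $(\alpha,\eps,\delta)$-approximation guarantee: the event of failure contributes $\delta \cdot f(\opt)$ to the expectation (since $f$ is bounded by $f(\opt)$ on feasible sets), and the $\eps f(\opt)$ additive slack appears in every invocation. Both are absorbed by taking $\delta = O(\eps)$, which is why the conclusion weakens from $(1-\eps)^2 \alpha$ to $\alpha - O(\eps)$. Everything else, including the independence of the $(X_{i,r}, \bb_{i,r})$ pairs across machines that underpins the boosting in Lemma~\ref{lem:probabilities}, carries over unchanged.
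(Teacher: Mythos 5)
Your proof is correct and follows essentially the same approach as the paper: you thread the randomness $\bb_{i,r}$ through the argument of Theorem~\ref{thm:parallel-main}, invoke the consistency and $(\alpha,\eps,\delta)$-approximation properties in the role of the deterministic consistency and $\alpha$-approximation, and absorb the failure probability $\delta$ into the $O(\eps)$ loss. The only cosmetic difference is in the error accounting: you hard-code $\delta=\eps$ while the paper keeps $\delta$ abstract under the assumption $\alpha\delta\le\eps$, but both yield the same $\alpha-O(\eps)$ conclusion.
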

\begin{proof}
Consider a run $r$ of the algorithm. Let $\hC_{r - 1} \subseteq V$. In the following, we condition on the event that $C_{r - 1} = \hC_{r - 1}$.

For each element $e \in V$, let $p_r(e) = \Pr_{X \sim \mathcal{V}(1/m), \bb \sim \D}[e \in \AlgRel(\hC_{r - 1} \cup X \cup \set{e}, \bb)],$ if $e \in \opt \setminus \hC_{r - 1}$, and 0 otherwise.  The proof of the following lemma is exactly the same as Lemma~\ref{lem:probabilities} and thus is omitted.

\begin{lemma} \label{lem:randomized-probabilities}
  For each element $e \in \opt \setminus \hC_{r - 1}$,
    \[\Pr[e \in \cup_{1 \leq i \leq gm} \AlgRel(\hC_{r - 1} \cup X_{i, r}, \bb_{i, r})] = 1 - (1 - p_r(e))^g,\]
  where $g$ is the number of groups into which the machines are partitioned.
\end{lemma}

We define a partition $(P_r, Q_r)$ of $\opt \setminus \hC_{r - 1}$ as follows:
\begin{align*}
  P_r &= \{e \in \opt \setminus \hC_{r - 1} \colon p_r(e) < \eps\}, & Q_r &= \{e \in \opt \setminus \hC_{r - 1} \colon p_r(e) \geq \eps\}.
\end{align*}

The following subsets of $P_r$ and $Q_r$ are key to our analysis (
recall that $X_{i, r}$ is the random sample placed on machine $i$ at the beginning of the run and $\bb_{i,r}$ is the random vector sampled by machine $i$ in round $r$):
 \begin{align*}
  P'_r &= \{e \in P_r \colon e \notin \AlgRel(\hC_{r - 1} \cup X_{1, r} \cup \set{e}, \bb_{1, r})\},
&
  Q'_r &= Q_r \cap \big(\cup_{i = 1}^{gm} \AlgRel(\hC_{r - 1} \cup X_{i, r}, \bb_{i, r})\big).
\end{align*}
Note that each element $e \in P_r$ is in $P'_r$ with probability $1 - p_r(e) \geq 1 - \eps$.  Further, by Lemma~\ref{lem:randomized-probabilities}, each element $e \in Q_r$ is in $Q'_r$ with probability $1 - (1 - p_r(e))^g \geq 1 - {1 \over e} \geq {1 \over 2}$.

It follows from the definition of $P'_r$ and the consistency property of $\Alg$ that
  \begin{equation}
\AlgSol(\hC_{r - 1} \cup X_{1, r}, \bb_{1, r}) = \AlgSol(\hC_{r - 1} \cup X_{1, r} \cup P'_r, \bb_{1, r}).
\label{eq:consistency-1}
\end{equation}
Let $\opt_{r-1} = \hC_{r-1} \cap \opt$ be the part of $\opt$ in this iteration's pool.  We apply the $(\alpha, \eps, \delta)$-approximation property with $A = \hC_{r - 1} \cup X_{1, r}$, $\bb = \bb_{1, r}$, and $B = P'_r$ to obtain
\[
  \Pr_{\bb_{1, r}}\Big[\AlgSol(\hC_{r - 1} \cup X_{1, r} \cup P'_r, \bb_{1, r}) \geq \alpha \cdot f((\hC_{r - 1} \cup X_{1, r} \cup P'_r) \cap \opt) - \eps f(\opt) \Big] \geq 1 - \delta. 
\]
Since $f$ is monotone when restricted to $\opt$, and $P_r' \cup \opt_{r-1} \subseteq (\hC_{r-1} \cup X_{1,r} \cup P_r') \cap \opt$, this inequality implies that
\[
 \Pr_{\bb_{1, r}}\Big[\AlgSol(\hC_{r - 1} \cup X_{1, r} \cup P'_r, \bb_{1, r}) \geq \alpha \cdot f(P'_r \cup \opt_{r-1}) - \eps f(\opt)\Big] \geq 1 - \delta. 
\]
Therefore, equation \eqref{eq:consistency-1} gives
\[
  \Pr_{\bb_{1, r}}\Big[\AlgSol(\hC_{r - 1} \cup X_{1, r}, \bb_{1, r}) \geq \alpha \cdot f(P'_r \cup \opt_{r-1}) - \eps f(\opt)\Big] \geq 1 - \delta. 
\]
Taking expectation on both sides gives
\begin{align}
  \E_{X_{1,r}, b_{1,r}}[f(\AlgSol(\hC_{r - 1} \cup X_{1, r},\bb_{1,r}))]
   &\geq (1 - \delta) \alpha \cdot \E_{X_{1,r}}[f(P'_r \cup \opt_{r-1})] - \eps f(\opt) \notag\\
  &\geq \alpha \cdot \E_{X_{1,r}}[f(P'_r \cup \opt_{r-1})] - (\eps + \alpha \delta) f(\opt) \notag\\
  &\geq (1 - \eps) \alpha \cdot f(P_r \cup \opt_{r-1}) - (\eps + \alpha \delta) f(\opt) \notag \\
  &\geq (1 - \eps) \alpha \cdot f(P_r \cup \opt_{r-1}) - 2\eps f(\opt). \label{eq:rand-greedy}
\end{align}
Here, the second inequality follows from the fact that $f$ is monotone restricted to $\opt \supseteq (P'_r \cup \opt_{r-1})$, the third from Lemma \ref{lem:lovasz-1} and the fact that every element of $P_r$ appears in $P'_r$ with probability at least $(1 - \eps)$, and the last from our assumption that $\alpha \delta \le \eps$.

Next, note that $Q'_r \subseteq (C_r \cap \opt) \setminus \opt_{r-1}$.  This together with monotonicity of $f$ restricted to $r$ imply:
\begin{align*}
  \Ex[f(C_r \cap \opt) \mid C_{r - 1} = \hC_{r - 1}] & \geq \Ex[f(Q'_r \cup \opt_{r-1})] \notag\\
  & \geq {1 \over 2} \cdot f(Q_r\cup \opt_{r-1}) + \frac{1}{2} \cdot f(\opt_{r-1}),
\end{align*}
where the last inequality follows from Lemma \ref{lem:lovasz-1} and Lemma \ref{lem:randomized-probabilities}.  Rearranging this inequality using the condition $C_{r-1} = \hC_{r-1}$ and the definition $\opt_{r-1} = \hC_{r-1} \cap \opt$ we obtain:
\begin{align}
  \Ex[f(C_r \cap \opt) - f(C_{r - 1} \cap \opt) \mid C_{r - 1} = \hC_{r - 1}]
  & \geq {1 \over 2} \left(f(Q_r \cup \opt_{r-1}) - f(\opt_{r-1})\right) \notag \\
  & \geq {1 \over 2} \left(f(P_r \cup Q_r \cup \opt_{r-1}) - f(P_r \cup \opt_{r-1}) \right)\notag\\
  & = {1 \over 2} \left(f(\opt) - f(P_r \cup (\hC_{r - 1} \cap \opt)) \right), \label{eq:rand-cr-increase}
\end{align}
where the second inequality follows from submodularity.

Now, if $f(P_r \cup (\hC_{r - 1} \cap \opt)) \geq (1 - \eps) \cdot f(\opt)$ then this fact together with (\ref{eq:rand-greedy}) imply the first property in the statement of Theorem~\ref{thm:randomized-parallel-main} must hold. Otherwise, $f(\opt) - f(P_r \cup (\hC_{r - 1} \cap \opt)) \geq \eps \cdot f(\opt)$; this fact together with (\ref{eq:rand-cr-increase}) imply that the second property must hold.
\end{proof}

\section{Analysis of \DCGreedy for the Application of the Randomized Framework}
\label{sec:dcgreedy-analysis}

\begin{figure}[t]
\begin{center}
\begin{minipage}{0.75\linewidth}
\begin{algorithm}[H]
\KwIn{$N \subseteq V$, $\bx \in [0, 1]^N$}
$W \gets \emptyset$\;
$\by \gets 0$\;
\InfLoop{}{
  Let $D \gets \{e \in N \setminus W \colon W \cup \{e\} \in \I$\}\;
  Pick $\ell = \Theta\left(\frac{s\log n}{\eps^2}\right)$ independent random samples for $R(\bx + \by)$\;
  \ForEach{$e \in D$}
  {
    $w_e \gets \textnormal{approximation of } \Ex[f(R(\bx + \by) \cup \set{e}) - f(R(\bx + \by))]$ via above samples\;
  }
  Let $e^* = \arg\max_{e \in D}\{w_e\}$\;
  \uIf{$D = \emptyset$ or $w_{e^*} < 0$}{
    \Return $\by$\;
   }
  \Else{$y_{e^*} \gets y_{e^*} + \eps(1 - x_{e^*})$\;
    $W \gets W \cup \{e\}$\;
  }
}
\caption{Greedy Update Step (\GreedyStep) with randomized approximation of $w_e$'s.}
\label{alg:rand-greedy}
\end{algorithm}
\end{minipage}
\end{center}
\caption{Discretized continuous greedy (\DCGreedy) with approximate evaluation of the $w_e$'s on line 7. The weight $w_e$ is estimated as follows. Given $\ell$ independent random sets $R_1, \dots, R_{\ell}$ (the samples for $R(\bx + \by)$), $w_e$ is set to ${1 \over \ell} \sum_{i = 1}^{\ell} (f(R_i \cup \set{e}) - f(R_i))$.}
\label{alg:dcgreedy-approx}
\end{figure}

In this section, we show that we can instantiate the randomized framework from Section~\ref{sec:parallel-framework-rand} with a modified \DCGreedy algorithm, and obtain the results stated in Section~\ref{sec:parallel-dcgreedy}. Specifically, we extend the \DCGreedy algorithm to the setting in which the weights $w_e$ on line $5$ of \GreedyStep are evaluated approximately via samples.  The resulting \GreedyStep is shown in Algorithm \ref{alg:rand-greedy}. 

We devote the rest of this section to proving the following result.

\begin{theorem} \label{lem:approx-greedy-properties}
  The modified \DCGreedy algorithm with approximate evaluation of $w_e$'s satisfies the consistency property and the $(\alpha, \eps, \delta)$-approximation property with $\delta = 1/n$ and $\alpha = 1/e - O(\eps)$ for non-monotone functions and $\alpha = 1 - 1/e - O(\eps)$ for monotone functions.
\end{theorem}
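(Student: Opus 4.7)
My plan is to handle consistency and the $(\alpha, \eps, \delta)$-approximation property after setting up a careful coupling of the randomness vector $\bb$ across different inputs $N$. I encode $\bb$ as a collection of independent uniform $[0,1]$ random variables $U_{e,t,\tau,j}$ indexed by $e \in V$, outer iteration $t$, inner iteration $\tau$ of \GreedyStep, and sample index $j \in [\ell]$. On input $N$, the $j$-th sample in the $(t,\tau)$ inner iteration includes $e$ iff $e \in N$ and $U_{e,t,\tau,j} < x_e + y_e$ at that moment. The total size is $|V|\cdot\poly(k/\eps)\cdot \log n$, independent of $|N|$ as required by the randomized framework; each $R_j$ has the correct marginal distribution $R(\bx+\by)$; and crucially, any element $e$ with $x_e + y_e = 0$ is excluded from every sample so far, regardless of whether $e \in N$.

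For consistency I mimic the proof of Lemma~\ref{lem:nonmonotone-rejected-elements}. Assuming the runs on $(A,\bb)$ and $(A \cup B,\bb)$ first diverge at some inner iteration, they share identical $(W,\bx,\by)$ up to that iteration, so by the coupling every sample $R_j$ drawn in the present iteration is the same set in both runs and no $e \in B$ appears in any sample. Hence $\hat w_e$ is identical across the two runs for every $e \in A$, and the element selected on $(A\cup B,\bb)$ must lie in $B$, say $e^*$. Rerunning on $(A \cup \{e^*\}, \bb)$ reproduces the same history with $e^*$ still a feasible candidate whose estimated weight is unchanged, so $e^*$ is again selected; this forces $e^* \in \AlgRel(A \cup \{e^*\}, \bb)$, contradicting the hypothesis on $B$.

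For the approximation property I first establish concentration of the weight estimates. Since $s = O(k/\eps)$ bounds $|\AlgSol(N)\cup\AlgRel(N)|$ and each $f(R\cup\{e\})-f(R)$ can be clipped to range $[0, f(\opt)]$ in magnitude, a Hoeffding bound with $\ell = \Theta(s^2\log n/\eps^2)$ samples gives $|\hat w_e - w_e| \le (\eps/s)\, f(\opt)$ for each query with probability at least $1-1/n^3$. Union-bounding over the $O(ns/\eps)$ queries carried out during the entire run yields a global good event $E$ with $\Pr[E] \ge 1 - 1/n = 1 - \delta$. On $E$, each greedy selection is within additive $2(\eps/s) f(\opt)$ of the true argmax, so over the at most $s$ updates the algorithm deviates from the idealized measured continuous greedy by at most $O(\eps)\, f(\opt)$ cumulatively.

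To finish, I invoke consistency in reverse: $\AlgSol(A,\bb) = \AlgSol(A\cup B,\bb)$, so I may analyze the algorithm as if it ran on $A\cup B$, and each greedy step is competitive with the best feasible subset of $A\cup B$, in particular with $T \coloneqq (A\cup B)\cap \opt$. Since $T \subseteq \opt$, Lemma~\ref{thm:monotonicity-opt} gives that $f$ is monotone on subsets of $T$, which lets me plug $T$ into the measured continuous greedy potential argument of Feldman-Naor-Schwartz as the comparison set. This yields $F(\bx(1/\eps)) \ge (1-1/e - O(\eps))\, f(T)$ in the monotone case and $(1/e - O(\eps))\, f(T)$ in general, with the $O(\eps)$ slack absorbing the total sampling error on $E$. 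Swap rounding, together with its standard concentration bound around $F(\bx(1/\eps))$, delivers the required high-probability guarantee on $f(\AlgSol(A\cup B,\bb))$. The main obstacle is making the consistency argument survive randomness: the coupling has to make $\hat w_e$ agree across inputs differing by a single element while still producing an unbiased estimator of $w_e$ in each run. A secondary subtlety is that $T \subsetneq \opt$ in general, so the non-monotone bound only transfers via Lemma~\ref{thm:monotonicity-opt}.
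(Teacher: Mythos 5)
Your proposal follows essentially the same architecture as the paper's Appendix C: encode the randomness as per-sample thresholds so the estimated weights $\hat w_e$ are reproducible across inputs, prove consistency via a first-divergence contradiction (observing that unpicked elements of $B$ have $x_e+y_e=0$ and hence never enter any sample), establish a high-probability ``good randomness'' event, and on that event port the measured-continuous-greedy / matroid-exchange analysis to compare against $(A\cup B)\cap\opt$. Your version actually makes explicit a point the paper leaves implicit --- that the coupling forces $R(\bx+\by)$ to agree across $A$, $A\cup\{e\}$, and $A\cup B$ precisely because rejected elements never acquire positive mass --- which is the right justification for the consistency step.

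One concrete mismatch worth flagging: you use a purely additive Hoeffding bound, $|\hat w_e - w_e|\le (\eps/s)f(\opt)$, which costs $\ell=\Theta(s^2\log n/\eps^2)$ samples per inner iteration. The algorithm as stated (Figure~\ref{alg:dcgreedy-approx}) takes only $\ell=\Theta(s\log n/\eps^2)$ samples; the paper reaches this by allowing a mixed additive-plus-relative error, $|\hat w_e - \mu|\le \frac{\eps}{2s}f(\opt)+\frac{\eps}{2}\mu$, via Lemma~2.3 of \cite{Badanidiyuru2014}, and then absorbs the multiplicative term into a $(1-O(\eps))$ factor inside Lemma~\ref{lem:cg-onestep}. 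Your bound proves the theorem for an $s$-fold more expensive algorithm; to match the algorithm as written you should switch to the two-sided relative Chernoff bound. A second, smaller remark: your invocation of Lemma~\ref{thm:monotonicity-opt} to monotonize $f$ on $T=(A\cup B)\cap\opt$ is not needed for the non-monotone MCG potential argument --- the paper's Lemma~\ref{lem:cg-onestep} uses only submodularity of marginals and the mass bound $F(\bx\lor\one_S)\ge(1-a)f(S)$ from Lemma~\ref{lem:multilinear-lb}, and never appeals to restricted monotonicity; similarly, the ``consistency in reverse'' framing is unnecessary, since the approximation property is a direct statement about the run on $A\cup B$. Neither affects correctness, but both add steps the argument does not require.
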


We begin by verifying that the consistency property holds.  Consider a vector $\bb$ and two subsets $A, B \subseteq V$ such that, for each element $e \in B$, we have $\DCGreedyRel(A \cup \set{e}, \bb) = \DCGreedyRel(A, \bb)$.    We shall show that the approximate \GreedyStep algorithm always picks the same set $W$ on input $(A, \bx, \bb)$ and $(A \cup B, \bx, \bb)$. (Note that, since the two runs have the same randomness $\bb$, they will use the same approximate weights.) Suppose for contradiction that the algorithm makes different choices on input $(A, \bx, \bb)$ and $(A \cup B, \bx, \bb)$. Consider the first iteration where the two runs differ, and let $e$ be the element added to $W$ in that iteration on input $(A \cup B, \bx, \bb)$. Note that $e \notin A$ and thus we have $e \in B$. But then $e$ would be added to $W$ on input $(A \cup \set{e}, \bx, \bb)$, as well. Thus $e \in \DCGreedyRel(A \cup \set{e}, \bb)$, which contradicts the fact that $e \in B$. Thus the consistency property holds.

Now we verify that the $(\alpha, \eps, \delta)$-approximation property holds. The analysis of the modified \DCGreedy algorithm is similar to the analyses in \cite{Feldman2011a,Badanidiyuru2014}.

\begin{lemma}\label{lem:dc-gen-approx}
Let $\I$ be matroid on $V$ and $\opt = \argmax_{S \in \I, S \subseteq V} f(S)$. Let $A \subseteq V$ and $\bb \sim \D$. Let $B \subseteq \opt$ be a subset such that, for each $e \in B$, $e \notin \DCGreedy(A \cup \set{e}, \bb)$. Then, we have
\[ 
\Pr_{\bb \sim \D}[F(\DCGreedy(A \cup B, \bb)) \geq \alpha \cdot f((A \cup B) \cap \opt) - \eps \cdot f(\opt)] \ge 1-1/n,
\]
where $\alpha = (1 - 1/e - O(\eps))$ for monotone $f$ and $(1/e - O(\eps))$ for general $f$.
\end{lemma}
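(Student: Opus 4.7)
\begin{proofsketch}
The plan is to reduce the lemma to the standard approximation guarantee of the discretized measured continuous greedy on the matroid restricted to $N \coloneqq A \cup B$, and then to carry out the analysis of \cite{Feldman2011a,Badanidiyuru2014} while accounting for noise in the sampled weight estimates. Setting $\opt_N \coloneqq \argmax_{T \in \I,\, T \subseteq N} f(T)$, it suffices to show that with probability at least $1 - 1/n$,
\[
  f(\DCGreedySol(N, \bb)) \;\ge\; \alpha \cdot f(\opt_N) - O(\eps) \cdot f(\opt).
\]
The lemma then follows because $(A \cup B) \cap \opt$ is a feasible subset of $N$ (by heredity of $\I$), so $f(\opt_N) \ge f((A \cup B) \cap \opt)$. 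The hypothesis on $B$ plays no role in this reduction; it is only needed for the consistency property, which was already established above in Lemma~\ref{lem:nonmonotone-rejected-elements}.

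The first step is to handle the sampling error. With $\ell = \Theta(s \log n / \eps^2)$ samples per weight query, a Hoeffding bound gives $|\widetilde w_e - w_e| \le O(\eps / \sqrt{s}) \cdot f(\opt)$ with probability $1 - 1/n^{\Omega(1)}$, and a union bound over the $\poly(n)$ queries made across all $1/\eps$ outer iterations ensures simultaneous accuracy with probability at least $1 - 1/n$. On this good event, the greedy choice $e^{\ast} = \argmax \widetilde w_e$ inside each $\GreedyStep$ iteration is near-optimal with respect to the true weights, and the cumulative effect on the analysis is absorbed into the $O(\eps) f(\opt)$ slack.

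Conditioning on accurate weights, I would carry out the telescoping argument. Let $\bz \coloneqq \bx(t-1)$ be the fractional solution entering the $t$-th outer step, and let $W_t$ be the base built inside $\GreedyStep(N, \bz)$, so that $\bx(t) = \bz + \eps \sum_{e \in W_t} (1 - z_e)\, \one_e$. Using the matroid exchange property together with submodularity of $F$, the key per-iteration bound I would establish is
\[
  F(\bx(t)) - F(\bz) \;\ge\; \eps \cdot \bigl( f(\opt_N \cup \supp(\bz)) - F(\bz) \bigr) \;-\; O(\eps^2) \cdot f(\opt).
\]
For monotone $f$, telescoping $1/\eps$ times yields $F(\bx(1/\eps)) \ge (1 - 1/e - O(\eps))\, f(\opt_N)$. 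For general $f$, the slack factor $(1 - z_e)$ in each update keeps every coordinate at most $1 - (1 - \eps)^{1/\eps} \le 1 - 1/e + O(\eps)$; combined with Lemma~\ref{thm:monotonicity-opt} (so that $f$ is monotone on $\opt_N$) and Lemma~\ref{lem:lovasz-1}, one obtains $f(\opt_N \cup \supp(\bx(t))) \ge (1 - \eps)^t f(\opt_N)$ throughout, producing $F(\bx(1/\eps)) \ge (1/e - O(\eps))\, f(\opt_N)$. Swap rounding~\cite{Chekuri2010} finally converts the fractional lower bound into a bound on $f(\DCGreedySol(N, \bb))$ without loss in expectation and, via matroid concentration, with the desired high probability.

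The main obstacle is the per-iteration improvement bound above. Unlike the standard measured continuous greedy, which picks a max-weight base under a fixed gradient, $\GreedyStep$ constructs $W_t$ element by element under \emph{shifting} weights $(1 - z_e - y_e^{(i)}) \cdot \partial_e F(\bz + \by^{(i)})$. Establishing that $W_t$ still has weight comparable to a max-weight base under the static $\bz$-gradient requires a careful exchange argument: submodularity of $F$ ensures the shifting weights only decrease within one call (so the greedy choices are never overly optimistic), and since at most $k$ coordinates shift by at most $\eps$ each, the shifting gradients remain within a $(1 - O(\eps))$ multiplicative factor of the static ones; an exchange with a basis of $\opt_N$ then pairs each element of $W_t$ with an element of that basis and gives the desired lower bound, with the drift folded into the $O(\eps^2)\, f(\opt)$ second-order term.
\end{proofsketch}
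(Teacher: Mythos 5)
Your proof has the right high-level shape (sampling error, then the discretized continuous-greedy telescoping with a matroid exchange), but there are two genuine gaps, the first of which is the crux of why the lemma is non-trivial.

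\textbf{Gap 1: the set $B$ depends on $\bb$, so a naive union bound over the queries of $\DCGreedy(A\cup B,\bb)$ is not available.} You write that ``the hypothesis on $B$ plays no role in this reduction'' and then take a union bound over the $\poly(n)$ weight queries made ``across all $1/\eps$ outer iterations.'' But the lemma is quantified so that $B$ is chosen \emph{after} $\bb$ (it is defined in terms of which elements are rejected by $\DCGreedy(A\cup\set{e},\bb)$), and indeed the lemma is invoked in Theorem~\ref{thm:randomized-parallel-main} with $B=P'_r$, a random set depending on $\bb_{1,r}$. Your union bound yields, for each \emph{fixed} $B$, a $1-1/n$ bound over $\bb$; that is strictly weaker than the required ``with probability $1-1/n$ over $\bb$, the guarantee holds for the $B$ determined by $\bb$,'' and one cannot upgrade it by union-bounding over all exponentially many candidate $B$'s. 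The paper's proof resolves this by defining the ``good'' event for $\bb$ in terms of the execution of $\DCGreedy$ on the \emph{fixed} superset $A\cup\opt$, so the set $Z$ of query points and the $O(ns/\eps)$ weight estimates being union-bounded over are determined by $\bb$ alone, and then uses the consistency property and $A\cup B\subseteq A\cup\opt$ to argue that on a good $\bb$, every weight queried by $\DCGreedy(A\cup B,\bb)$ is also accurate. Without some device of this kind, the ``union bound over queries'' step is unjustified.

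\textbf{Gap 2: the error bound from plain Hoeffding is too weak.} With $\ell=\Theta(s\log n/\eps^2)$ samples, Hoeffding gives additive error roughly $O(\eps/\sqrt{s})\cdot f(\opt)$ per weight estimate, as you say. But $\GreedyStep$ picks $s$ elements per outer iteration and there are $1/\eps$ outer iterations, so these errors accumulate to $\Theta(\sqrt{s})\cdot f(\opt)$, not $O(\eps)f(\opt)$. The paper instead uses the multiplicative-plus-additive Chernoff bound (Lemma~2.3 of \cite{Badanidiyuru2014}, restated as Lemma~\ref{lem:chernoff}) to get per-query error $\tfrac{\eps}{2}\,\Ex[f(R\cup\set{e})-f(R)]+\tfrac{\eps}{2s}f(\opt)$; the multiplicative part is absorbed as a $(1-\eps)$ factor in the approximation, and the $\eps/s$ additive part sums to $\eps f(\opt)$ per outer iteration and $\eps^2\cdot(1/\eps)=\eps$ overall, matching the claimed $O(\eps)f(\opt)$ slack. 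Your per-query $O(\eps/\sqrt{s})f(\opt)$ does not yield this.

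Two smaller remarks: in the per-step inequality the quantity $f(\opt_N\cup\supp(\bz))$ should be $F(\bx'\lor\one_{N\cap\opt})$ (for non-monotone $f$, replacing a fractional coordinate-wise maximum with a union over supports can decrease the value and is not justified by Lemma~\ref{lem:multilinear-lb}); and your appeal to ``shifting gradients remain within $(1-O(\eps))$'' is more complicated than necessary---the paper simply lower-bounds each marginal by submodularity at the end-of-step point $\bx'$, since $\bx+\by(i-1)\le\bx'$, and collects $F(\bx'\lor\one_{N\cap\opt})-F(\bx')$ after the exchange pairing.
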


In the remainder of this section, we prove Lemma \ref{lem:dc-gen-approx}. If $\bx, \by \in [0, 1]^N$, we denote by $\bx \lor \by$ the vector such that $(\bx \lor \by)_i = \max\set{\bx_i, \by_i}$. Similarly, $\bx \land \by$ is the vector such that $(\bx \land \by)_i = \min\set{\bx_i, \by_i}$.
 Let $\opt = \argmax_{S \subseteq V, S \in I} f(S)$ be an optimal solution over the entire ground set $V$, and consider the execution of $\DCGreedy(A \cup \opt, \bb)$.  Let $Z$ be the set of vectors that $\DCGreedy(A \cup \opt, \bb)$ considers when computing the weights of elements, i.e., the set of all vectors $\bz \coloneqq \bx + \by$, where $\bx = \bx(t)$ for some iteration $t$ of $\DCGreedy(A \cup \opt, \bb)$ and $\by$ is the vector constructed by previous iterations of $\GreedyStep(A \cup \opt, \bx, \bb)$.  Formally, we associate the vector $\bz_{j} \in Z$ with the $j$th execution of $\GreedyStep$'s main loop (counted across all the iterations of $\DCGreedy$).  Note that $|Z| \le s/\epsilon$, since \GreedyStep's loop is executed at most $s$ times for each of the $1/\epsilon$ iterations of $\DCGreedy$. For each sample, the random string $\bb$ can simply store $|V|$ random thresholds in $[0,1]$. For a given vector $\bz$, these thresholds can be used to round $\bz$ to an integral indicator vector (a sample of $R(\bz)$) in order to estimate $\Ex[f(R(\bz)\cup \set{e}) - f(R(\bz)]$.

Consider the $j$th time \GreedyStep executes line 7, and suppose that for each element $e \in A \cup \opt$ we compute a weight $w_e(\bz_j, \bb)$, by using $\ell$ random samples encoded by $\bb$ to estimate $R(\bz_j)$, as in \GreedyStep.  We say that $w_e(\bz_j, \bb)$ is a good estimate if
\[
		\card{w_e(\bz_j, \bb) - \Ex[f(R(\bz_j) \cup \set{e}) - f(R(\bz_j))]} \le {\eps \over 2s} f(\opt) + \frac{\eps}{2} \Ex[f(R(\bz_j) \cup \set{e}) - f(R(\bz_j))]. 
\]
		We say that $\bb$ is good if all of the weights $\set{w_e(\bz_j, \bb) \colon \bz_j \in Z, e \in A \cup \opt}$ are good estimates.  
\begin{lemma}
\label{lem:cg-randomness}
The randomness $\bb$ is good with probability at least $1 - 1/n$.
\end{lemma}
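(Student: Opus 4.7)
The plan is to prove this by a standard concentration-of-sampling argument followed by a union bound over all pairs $(\bz_j, e) \in Z \times (A \cup \opt)$ for which a weight is estimated.

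First, I would fix an arbitrary pair $(\bz_j, e)$ and show that its estimate is bad with probability at most $1/n^3$ (or some similar polynomially small quantity). By construction, $w_e(\bz_j, \bb)$ is the empirical mean of $\ell = \Theta(s \log n / \eps^2)$ independent copies of the random variable $X = f(R(\bz_j) \cup \{e\}) - f(R(\bz_j))$, whose expectation is precisely $\mu_e := \Ex[f(R(\bz_j) \cup \{e\}) - f(R(\bz_j))]$. By submodularity and non-negativity of $f$, the marginal satisfies $|X| \leq M$ for $M = O(f(\opt))$, since $f(\{e\}) \leq f(\opt)$ for any element $e$ contained in some feasible solution.

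The next step is to apply a Chernoff–Bernstein-style concentration inequality. I would split into two regimes depending on $\mu_e$: when $\mu_e$ is large (say $\mu_e \geq f(\opt)/s$), a multiplicative Chernoff bound controls the deviation $(\eps/2)\mu_e$ with probability $1 - 2\exp(-\Omega(\ell \mu_e \eps^2 / M)) = 1 - 2\exp(-\Omega(\ell \eps^2 / s))$; when $\mu_e$ is small, the additive slack $(\eps/2s) f(\opt)$ dominates, and a Bernstein-type bound using the variance estimate $\sigma^2 \leq M \mu_e$ gives the same conclusion. In both regimes, choosing the hidden constant in $\ell = \Theta(s\log n/\eps^2)$ sufficiently large ensures that the pair $(\bz_j, e)$ is badly estimated with probability at most $1/n^3$.

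Finally, I would take a union bound. Since the main loop of \GreedyStep runs at most $s$ times and \DCGreedy calls it $1/\eps$ times, we have $|Z| \leq s/\eps$; and there are at most $n$ elements of $A \cup \opt$ whose weights are estimated per loop iteration. Thus the number of pairs is at most $ns/\eps \leq n^2$ in the relevant parameter range, and the union bound yields total failure probability at most $n^2 \cdot (1/n^3) = 1/n$, proving the lemma.

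The main obstacle is calibrating the concentration bound so that $\ell$ depends only linearly on $s$ rather than quadratically. A naive Hoeffding bound with the crude estimate $|X| \leq f(\opt)$ on each sample would demand $\ell = \Omega(s^2 \log n / \eps^2)$ to achieve additive error $\eps f(\opt)/(2s)$; obtaining the claimed $\Theta(s \log n / \eps^2)$ genuinely requires the mixed additive-plus-multiplicative form of the ``good estimate'' definition, exploited via either a two-case split as above or a single Bernstein-type inequality that simultaneously captures both regimes.
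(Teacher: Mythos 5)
Your proof is correct and follows the same strategy as the paper's: a per-estimate concentration bound followed by a union bound over the at most $|Z|\cdot n \le n s/\eps$ weight estimates. The only real difference is in how the concentration step is organized. Rather than splitting into a large-$\mu_e$ regime (multiplicative Chernoff) and a small-$\mu_e$ regime (Bernstein via the variance bound $\sigma^2 \le M\mu_e$), the paper invokes a single Chernoff variant (Lemma~\ref{lem:chernoff}, from \cite{Badanidiyuru2014}) whose tail bound $\Pr[X > (1+\alpha)\mu + \beta] \le \exp(-m\alpha\beta/3)$ already carries both a multiplicative slack $\alpha$ and an additive slack $\beta$: taking $X_i = w_{e,i}/d$, $\alpha = \eps/2$, $\beta = \eps/(2s)$, the exponent $\Theta(\ell\eps^2/s)$ falls out in one step, yielding per-pair failure probability $1/\poly(n)$ with $\ell = \Theta(s\log n/\eps^2)$. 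This is precisely the ``single Bernstein-type inequality that simultaneously captures both regimes'' that you anticipate in your last paragraph as an alternative to the case split, so the two routes are interchangeable. Your observation that a plain Hoeffding bound targeting additive error $\eps f(\opt)/(2s)$ would force $\ell = \Omega(s^2\log n/\eps^2)$ is also correct and is exactly why the mixed additive-plus-multiplicative form of the ``good estimate'' definition is needed.
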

\begin{proof}
Let $d = \max_{e \in V} f(e) \leq f(\opt)$. Consider a weight $w_e$ and let $R_1, \dots, R_{\ell}$ denote the independent random sets used to compute $w_e$ in line 7 of \GreedyStep. For each $i \in [\ell]$, let $w_{e,i} = f(R_i \cup \set{e}) - f(R_i)$. Note that, by submodularity, $w_{e, i} \leq d \leq f(\opt)$. We use the following version of the Chernoff bound.

\begin{lemma}[Lemma~{2.3} in \cite{Badanidiyuru2014}]
\label{lem:chernoff}
  Let $X_1, \dots, X_m$ be independent random variables such that for each $i$, $X_i \in [0, 1]$. Let $X = {1 \over m} \sum_{i = 1}^m X_i$ and $\mu = \Ex[X]$. Then
  \begin{align*}
    \Pr[X > (1 + \alpha) \mu + \beta] &\leq \exp\left( - {m \alpha \beta \over 3} \right),\\
    \Pr[X < (1 - \alpha) \mu - \beta] &\leq \exp\left(- {m \alpha \beta \over 2} \right).
  \end{align*}
\end{lemma}
If we choose an appropriately large constant in the definition of $\ell$, then setting $m = \ell$, $X_i = w_{e, i} / d$, $\alpha = \eps/2$, and $\beta = \eps/2s$ in Lemma \ref{lem:chernoff}, we obtain that $w_e$ is a good estimate with probability at least $1 - 1 / n^4 \geq 1 - \eps/( s n^2)$. The size of $Z$ is at most $s/\eps$ and for each element of $Z$, there are at most $n$ weights to be estimated, so the lemma follows by the union bound.
\end{proof}

Now, note that if some random string $\bb$ is good, then all weights calculated by $\DCGreedy(A \cup B, \bb)$ in are good also, since $A \cup B \subseteq A \cup \opt$, and, as we have noted, the consistency property implies that \GreedyStep picks the same set on inputs $(A \cup B, \bx,\bb)$ and $(A,\bx,\bb)$ in each iteration.  We now fix some good $\bb$, and show that for \emph{any} $B \subseteq \opt \setminus A$, we must have:
\begin{equation}
\label{eq:CG-good-property-conditioned}
F(\DCGreedy(A \cup B,\bb)) \ge \alpha \cdot f((A \cup B) \cap \opt).
\end{equation}
Where $\alpha = 1/e - O(\eps)$ for non-monotone functions and $\alpha = 1 - 1/e - O(\eps)$ for monotone functions.  When $f$ is monotone, this follows from previous work \cite{Badanidiyuru2014}.  Thus we focus on the non-monotone case. This will finish the proof Lemma \ref{lem:dc-gen-approx}.

Let $N = A \cup B$ for some $B \subseteq \opt \setminus A$ and consider the restricted maximization problem $\max_{S \subseteq N, S \in \I}f(S)$.  We will need the following two lemmas from previous work. The first lemma is well-known and it follows from the exchange property of a matroid (see for example \cite{Schrijver2003}).

\begin{lemma}
\label{lem:matroid-exchange}
  Let $\mathcal{M} = (N, \I)$ be a matroid and let $B_1, B_2 \in \I$ be two bases in the matroid. There is a bijection $\pi: B_1 \rightarrow B_2$ such that for every element $e \in B_1$ we have $B_1 \setminus \set{e} \cup \set{\pi(e)} \in \I$. 
\end{lemma}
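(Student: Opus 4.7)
The plan is to prove this classical matroid exchange bijection theorem by applying Hall's theorem to a suitable bipartite graph, a standard strategy from matroid theory. I will handle the elements in $B_1 \cap B_2$ trivially and reduce the real work to matching $B_1 \setminus B_2$ with $B_2 \setminus B_1$.

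First, I would define the bipartite graph $G$ on vertex classes $B_1 \setminus B_2$ and $B_2 \setminus B_1$, placing an edge between $e \in B_1 \setminus B_2$ and $f \in B_2 \setminus B_1$ exactly when $(B_1 \setminus \{e\}) \cup \{f\} \in \I$. Since $|B_1| = |B_2|$ (both are bases), the two sides of $G$ have the same size. If I can produce a perfect matching $\pi'$ in $G$, then extending it by $\pi(e) = e$ for each $e \in B_1 \cap B_2$ yields the required bijection $\pi \colon B_1 \to B_2$: for common elements the swap is trivial, and for $e \in B_1 \setminus B_2$ the edge property is precisely the desired independence of $(B_1 \setminus \{e\}) \cup \{\pi(e)\}$.

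The main obstacle is verifying Hall's condition: for every $S \subseteq B_1 \setminus B_2$, $|N(S)| \geq |S|$. I would argue this via fundamental circuits. Fix $f \in (B_2 \setminus B_1) \setminus N(S)$. By definition of the edge set, $(B_1 \setminus \{e\}) \cup \{f\} \notin \I$ for every $e \in S$, which means the unique circuit $C(f, B_1) \subseteq B_1 \cup \{f\}$ contains no element of $S$, so $C(f, B_1) \subseteq (B_1 \setminus S) \cup \{f\}$. Hence $f$ lies in the span of $B_1 \setminus S$. Every element of $B_2 \cap B_1$ is also in $B_1 \setminus S$ (since $S \subseteq B_1 \setminus B_2$), so trivially in its span. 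Therefore the entire set $B_2 \setminus N(S)$ lies in the span of $B_1 \setminus S$. Because $B_2 \setminus N(S)$ is independent and the span of $B_1 \setminus S$ has rank $|B_1 \setminus S| = |B_1| - |S|$, we obtain $|B_2| - |N(S)| \leq |B_1| - |S|$, and since $|B_1| = |B_2|$ this rearranges to $|N(S)| \geq |S|$.

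With Hall's condition established, the König--Hall theorem yields a perfect matching $\pi'$ between $B_1 \setminus B_2$ and $B_2 \setminus B_1$, and the extension described above finishes the proof. The only delicate step is the fundamental-circuit argument; everything else is bookkeeping about bases and matchings.
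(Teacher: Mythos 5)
Your proof is correct. The paper itself does not prove this lemma; it simply cites it as ``well-known'' and points to \cite{Schrijver2003}, so there is no in-paper argument to compare against. The route you take --- build the bipartite exchange graph on $B_1 \setminus B_2$ and $B_2 \setminus B_1$, verify Hall's condition via fundamental circuits and a rank/span count, then extend the perfect matching by the identity on $B_1 \cap B_2$ --- is precisely the classical proof (due to Brualdi) that Schrijver presents, so you have in effect supplied the standard argument the authors deferred. One small point worth stating explicitly, since it is the hinge of the circuit step: $(B_1 \setminus \set{e}) \cup \set{f} \in \I$ holds if and only if $e$ lies in the fundamental circuit $C(f, B_1)$; your ``contains no element of $S$'' conclusion follows from the ``only if'' direction, and the bound $|B_2 \setminus N(S)| \le |B_1 \setminus S|$ then comes from $B_2 \setminus N(S)$ being an independent subset of $\mathrm{span}(B_1 \setminus S)$, whose rank equals $|B_1 \setminus S|$ because $B_1 \setminus S$ is independent. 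With that spelled out, the argument is complete and airtight.
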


\begin{lemma}[\cite{Feldman2011a}]
\label{lem:multilinear-lb}
Consider a vector $\bx \in [0,1]^N$. Assuming $x_e \le a$ for every $e \in V$, then for every set $S \subseteq N$, $F(\bx \lor \one_{S}) \ge (1 - a)f(S)$.
\end{lemma}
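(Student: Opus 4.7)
The plan is to sandwich $F(\bx \lor \one_S)$ between the multilinear and Lov\'asz extensions, and then evaluate the Lov\'asz extension by its threshold integral. Two facts chain together: (a) $F(\bz) \ge f^-(\bz)$ for every non-negative submodular $f$ and every $\bz \in [0,1]^V$, which is precisely the $c = 1$, $\bp = \bz$ specialization of Lemma~\ref{lem:lovasz} applied to the random set $S = R(\bz)$; and (b) the defining identity $f^-(\bz) = \Ex_{\theta \in \script{U}(0,1)}[f(\{e : z_e \ge \theta\})]$ recalled in the preliminaries.

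Setting $\bz = \bx \lor \one_S$, the coordinates satisfy $z_e = 1$ for $e \in S$ and $z_e = x_e \le a$ for $e \notin S$. Consequently the level set $\{e : z_e \ge \theta\}$ equals $S$ exactly when $\theta \in (a, 1]$ (the only coordinates that clear the threshold are those in $S$), while for $\theta \in [0, a]$ it is some superset of $S$. Splitting the integral in (b) at $\theta = a$ therefore yields
\[
  f^-(\bx \lor \one_S) = \int_0^a f(\{e : z_e \ge \theta\})\, d\theta + \int_a^1 f(S)\, d\theta \ge (1 - a) f(S),
\]
where the inequality uses only $f \ge 0$ to drop the first integral. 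Composing with (a) gives $F(\bx \lor \one_S) \ge (1 - a) f(S)$, as desired.

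There is no substantive obstacle here: the key ingredient $F \ge f^-$ is already packaged in the paper as Lemma~\ref{lem:lovasz}, so the proof reduces to a one-line integral calculation that cleanly isolates the $(1-a) f(S)$ contribution from the regime $\theta > a$ on which the level set collapses exactly to $S$. The only subtlety worth flagging is that the argument must go through the Lov\'asz extension rather than a direct induction on $|V \setminus S|$, since an induction that splits on a coordinate $e_0 \notin S$ runs into the marginal $f(S \cup \{e_0\}) - f(S)$, which may be negative when $f$ is non-monotone.
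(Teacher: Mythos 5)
Your proof is correct. Note, however, that the paper does not supply its own proof of this lemma---it is stated with a citation to Feldman et al.\ and no argument follows---so there is no in-paper proof to compare against. Your route is clean: the inequality $F(\bz)\ge f^-(\bz)$ is indeed the $c=1$ specialization of Lemma~\ref{lem:lovasz} applied to $R(\bz)$, and the threshold-integral representation of $f^-$ gives the bound directly because $(\bx\lor\one_S)_e=1$ on $S$ while $(\bx\lor\one_S)_e\le a$ off $S$, so the level set collapses to exactly $S$ on all of $\theta\in(a,1]$, contributing $(1-a)f(S)$, and the $\theta\in[0,a]$ portion is discarded by non-negativity. Your closing remark about why a coordinate-by-coordinate induction fails for non-monotone $f$ is also well taken: the marginals $f(S\cup\{e_0\})-f(S)$ have no sign, which is precisely why one should route the argument through the Lov\'asz extension (equivalently, through the known bound $\Ex[f(A(p))]\ge(1-p)f(\emptyset)$ for non-negative submodular $f$, which is essentially what the cited reference uses after passing to $g(T)=f(S\cup T)$).
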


Now, we begin by showing that \DCGreedy improves the current solution by a large amount in each step.

\begin{lemma}
\label{lem:cg-onestep}
  Suppose that the randomness $\bb$ is good. In each iteration $t$ of \DCGreedy, $F(\bx(t)) - F(\bx(t-1))\ge \eps (1 - \eps) ((1 - \eps)^tf(N \cap \opt) - F(\bx(t))) - \eps^2 f(\opt)$.
\end{lemma}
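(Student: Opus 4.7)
The plan is to follow the structure of the one-step analysis for the measured continuous greedy algorithm in \cite{Feldman2011a,Badanidiyuru2014}, adapted to our heavily discretized setting and to the approximate-weight evaluations allowed by a good $\bb$. Throughout, set $\bx = \bx(t-1)$ and $\opt' = N \cap \opt$. Let $e_1, \ldots, e_K$ be the elements that \GreedyStep adds to $W$ on input $(N,\bx,\bb)$ in order, and let $\bz_j$ denote the value of $\bx + \by$ just before the $j$th element is chosen, so that $\bz_1 = \bx$ and $\bz_{K+1} = \bx(t)$. The key observation is that $e_j \notin W_j$ implies the coordinate $z_{e_j,j}$ has not yet been touched, so $z_{e_j,j} = x_{e_j}$.

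First I would use multilinearity of $F$ in each coordinate. Since $\bz_{j+1}$ differs from $\bz_j$ only in coordinate $e_j$,
\[ F(\bz_{j+1}) - F(\bz_j) = \eps(1 - x_{e_j})\,\partial_{e_j}F|_{\bz_j}, \]
and a short computation shows $w^{\mathrm{true}}_{e_j}(\bz_j) \coloneqq \Ex[f(R(\bz_j)\cup\{e_j\}) - f(R(\bz_j))] = (1-z_{e_j,j})\,\partial_{e_j}F|_{\bz_j} = (1-x_{e_j})\,\partial_{e_j}F|_{\bz_j}$. Telescoping over $j$ gives $F(\bx(t))-F(\bx(t-1)) = \eps\sum_{j=1}^K w^{\mathrm{true}}_{e_j}(\bz_j)$.

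Next I would compare the picked weights to $\opt'$ via matroid exchange. Using Lemma~\ref{lem:matroid-exchange} applied to a basis of $\I|_N$ extending $\bar W = W_{K+1}$ and a basis of $\I|_N$ containing $\opt'$, I would produce a correspondence $o \mapsto j(o)$ between elements of $\opt'$ and picking steps of \GreedyStep such that $o \in D_{j(o)}$ (or $o = e_{j(o)}$). The greedy rule then gives $w_{e_{j(o)}}(\bz_{j(o)},\bb) \ge w_o(\bz_{j(o)},\bb)$, and applying the ``good'' bound from Lemma~\ref{lem:cg-randomness} to both sides converts this to $w^{\mathrm{true}}_{e_{j(o)}}(\bz_{j(o)}) \ge (1-\eps)\,w^{\mathrm{true}}_o(\bz_{j(o)}) - O(\eps f(\opt)/s)$. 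Summing yields
\[ \sum_{j=1}^K w^{\mathrm{true}}_{e_j}(\bz_j) \ge (1-\eps)\sum_{o\in \opt'} w^{\mathrm{true}}_o(\bz_{j(o)}) - \eps\,f(\opt), \]
since $|\opt'|\le s$. The main obstacle I anticipate is making this bijection clean in the non-monotone case, where \GreedyStep may stop early because all remaining weights are negative; for such $o$ I would argue separately that goodness of $\bb$ forces $w^{\mathrm{true}}_o$ to be at most $O(\eps f(\opt)/s)$, so these elements can simply be absorbed into the same $\eps f(\opt)$ additive error.

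Finally I would use submodularity and Lemma~\ref{lem:multilinear-lb} to lower-bound $\sum_o w^{\mathrm{true}}_o(\bz_{j(o)})$ in terms of $f(\opt')$. The identity $w^{\mathrm{true}}_o(\bz) = F(\bz \vee \one_{\{o\}}) - F(\bz)$ combined with $\bz_{j(o)} \le \bx(t)$ and submodularity of $F$ gives $w^{\mathrm{true}}_o(\bz_{j(o)}) \ge F(\bx(t) \vee \one_{\{o\}}) - F(\bx(t))$; another application of submodularity gives $\sum_o [F(\bx(t) \vee \one_{\{o\}}) - F(\bx(t))] \ge F(\bx(t) \vee \one_{\opt'}) - F(\bx(t))$. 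An easy induction on $t$, using that each update is $x_e \mapsto x_e + \eps(1-x_e)$ and each element is picked at most once per call, shows $x_e(t) \le 1-(1-\eps)^t$, so Lemma~\ref{lem:multilinear-lb} yields $F(\bx(t)\vee\one_{\opt'}) \ge (1-\eps)^t f(\opt')$. Chaining these bounds with the telescoping identity for $\Delta F$ gives the claimed inequality
\[ F(\bx(t)) - F(\bx(t-1)) \ge \eps(1-\eps)\bigl((1-\eps)^t f(\opt') - F(\bx(t))\bigr) - \eps^2 f(\opt). \]
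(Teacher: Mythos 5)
Your proposal follows essentially the same route as the paper's proof: decompose $F(\bx(t)) - F(\bx(t-1))$ by multilinearity into a sum over \GreedyStep's picking steps, pair each element of $N\cap\opt$ with a picking step via matroid exchange, compare approximate weights by the greedy rule and then convert to true weights using the goodness of $\bb$, and finally pass to $F(\bx(t)\vee\one_{N\cap\opt})$ via submodularity and Lemma~\ref{lem:multilinear-lb}. The one cosmetic difference is how the early-stopping case is handled: the paper pads $W$ with $s - |W|$ dummy elements of weight zero and runs the same bijection argument, whereas you propose to treat the unmatched $o\in\opt'$ directly by observing that, at the stopping step, their approximate weight is negative and hence (by goodness) their true weight is at most $O(\eps f(\opt)/s)$; both resolutions are fine and give the same additive loss.
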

\begin{proof}
Fix an iteration $t$, and for brevity denote $\bx = \bx(t-1)$, $\bx' = \bx(t)$. Let $W$ be the set of elements selected by the \GreedyStep for this update, and let $\by$ be the associated update vector. We suppose without loss of generality that $|W| = s$, where $s$ is the rank of the matroid $\I$, since if $|W| < s$ we can simply add $s - |W|$ dummy elements to $W$. Let $e_i$ be the $i$th element added to $W$ by \GreedyStep and let $\by(i)$ be the value of $\by$ after $i$ elements have been added to $W$.

  By Lemma~\ref{lem:matroid-exchange}, there is a bijective mapping $\pi: N \cap \opt \rightarrow W'$ between $N \cap \opt$ and a subset $W' \subseteq W$ of size $\card{N \cap \opt}$ such that, for each element $o \in N \cap \opt$, $W \setminus \set{\pi(o)} \cup \set{o} \in \I$. For each $i \in [s]$, let $o_i \coloneqq \pi^{-1}(e_i)$ if $e_i \in W'$ and $o_i \coloneqq e_i$ otherwise.

  For each $i$, we have $w_{e_i} \geq w_{o_i}$, since $o_i$ is a candidate element during the iteration of \GreedyStep that picked $e_i$. Thus, since all the weights are good estimates, we have
  \begin{multline} \label{eq:rand-marginal}
    \Ex[f(R(\bx + \by(i-1)) \cup \set{e_i}) - f(R(\bx + \by(i-1)))] \\
\geq (1 - \eps) \Ex[f(R(\bx + \by(i-1)) \cup \set{o_i}) - f(R(\bx + \by(i-1)))] - {\eps \over s} f(\opt). 
  \end{multline}
for all $\by$ and $i$.
 Then, we have:
\begin{align}
F &(\bx') - F(\bx) = F(\bx + \by) - F(\bx) \notag\\
&= \sum_{i = 1}^s (F(\bx + \by(i)) - F(\bx + \by(i - 1)))  \notag\\
&= \sum_{i=1}^{s} \eps(1 - x_{e_i}) \frac{\partial F}{\partial x_{e_i}}\Big|_{\bx + \by(i - 1)} \notag\\
&= \sum_{i = 1}^{s}\eps\Ex[f(R(\bx + \by(i - 1)) \cup \set{e_i}) - f(R(\bx + \by(i - 1)))] \notag\\
&\ge \sum_{i = 1}^{s}\eps \left((1 - \eps) \Ex[f(R(\bx + \by(i - 1)) \cup \set{o_i}) - f(R(\bx + \by(i - 1)))] - {\eps \over s} f(\opt) \right) \notag \\
&\ge \sum_{i = 1}^{s} \eps\left((1 - \eps) \Ex[f(R(\bx') \cup \set{o_i}) - f(R(\bx'))] - {\eps \over s} f(\opt) \right)  \notag\\
&\ge \eps (1 - \eps)(F(\bx' \lor \one_{N \cap \opt}) - F(\bx')) - \eps^2 f(\opt),
\label{eq:rand-cg-main-lb}
\end{align}
where the first inequality follows from~\eqref{eq:rand-marginal} and the last two from submodularity.

We relate the value $F(\bx' \lor \one_{N \cap \opt})$ to $f(\opt)$ using Lemma~\ref{lem:multilinear-lb}. At each step, we increase each coordinate $e$ of $\bx$ by at most $\eps(1 - x_e(t))$.  Thus, for any step $0 \le j \le 1/\eps$, we have
  \[  x_e(j+1) - x_e(j) \le (1 - x_e(j))\eps,\]
or, equivalently,
  \[ x_e(j+1) - (1 - \eps)x_e(j) \le \eps. \]
Thus, for each time step $t \leq 1/\eps$, we have
\begin{equation*}
  x_e(t) \le \sum_{j = 1}^t \eps(1-\eps)^{t-1-j} = 1 - (1-\eps)^t
\end{equation*}
By combining the inequality above with Lemma~\ref{lem:multilinear-lb}, we obtain
\[ F(\bx(t) \lor \one_{N \cap \opt}) \ge (1-\eps)^{t}f(N \cap \opt).\]
Plugging this bound into \eqref{eq:rand-cg-main-lb} then completes the proof.
\end{proof}

\begin{lemma}
\label{lem:cg-allsteps}
  Suppose that the randomness $\bb$ is good. The final solution $\bx(1/\eps)$ constructed by $\DCGreedy(N, \bb)$ satisfies $F(\bx(1/\eps)) \ge (1/e - \eps) f(N \cap \opt) - \eps f(\opt)$. Therefore the integral solution $S$ satisfies $f(S) \geq (1/e - \eps) f(N \cap \opt) - \eps f(\opt)$.
\end{lemma}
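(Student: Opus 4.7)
The plan is to solve the discrete recurrence provided by Lemma~\ref{lem:cg-onestep} to obtain the bound on the fractional solution $\bx(1/\eps)$, and then invoke the standard rounding guarantee to transfer the bound to the integral solution $S$.

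First I would rearrange the inequality in Lemma~\ref{lem:cg-onestep} by moving the $F(\bx(t))$ that appears on the right-hand side to the left; setting $\mu := 1 + \eps(1 - \eps)$, this gives
\[
\mu\, F(\bx(t)) \;\geq\; F(\bx(t-1)) + \eps(1-\eps)^{t+1}\, f(N \cap \opt) - \eps^2 f(\opt).
\]
Multiplying through by $\mu^{t-1}$ and defining $G_t := \mu^t F(\bx(t))$ turns this into the telescoping form
\[
G_t - G_{t-1} \;\geq\; \mu^{t-1}\eps(1-\eps)^{t+1}\, f(N \cap \opt) - \mu^{t-1}\eps^2 f(\opt).
\]
Summing from $t = 1$ up to $T = 1/\eps$, using $G_0 = F(\zero) = 0$, and letting $\nu := \mu(1-\eps)$, one obtains
\[
\mu^T F(\bx(T)) \;\geq\; \eps (1-\eps)^2\, f(N \cap \opt) \cdot \frac{1 - \nu^T}{1 - \nu} \;-\; \eps^2 f(\opt) \cdot \frac{\mu^T - 1}{\mu - 1}.
\]

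Next I would plug in the exact values $\mu - 1 = \eps(1-\eps)$ and $1 - \nu = \eps^2(2 - \eps)$ to cancel the leading powers of $\eps$, and then estimate $\mu^T$ and $\nu^T$ at $T = 1/\eps$ asymptotically. The standard inequality $(1 + a/n)^n \leq e^a$ gives $\mu^T \leq e^{1-\eps}$, while a short Taylor expansion gives $\ln \nu = -2\eps^2 + O(\eps^3)$, so $\nu^T \leq e^{-2\eps + O(\eps^2)}$ and thus $1 - \nu^T \geq 2\eps - O(\eps^2)$. Substituting these estimates back yields $F(\bx(1/\eps)) \geq (1/e - O(\eps))\, f(N \cap \opt) - O(\eps)\, f(\opt)$, and absorbing constants into $\eps$ (i.e.\ rescaling by a suitable constant factor at the outset) replaces each $O(\eps)$ by $\eps$, matching the statement. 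The bound on the integral solution then follows from the standard guarantee of swap rounding on a matroid polytope, namely $\E[f(S)] \geq F(\bx(1/\eps))$ for any submodular $f$~(\cite{Chekuri2010}).

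The main obstacle is ensuring that the $-\eps^2 f(\opt)$ error incurred in each iteration accumulates to only $O(\eps)\, f(\opt)$ over the $T = 1/\eps$ iterations, rather than a loss of order $f(\opt)$. The substitution $G_t = \mu^t F(\bx(t))$ is what makes this work cleanly: after dividing by $\mu^T$ at the end, the total error is bounded by $\eps^2 \sum_{t=1}^T \mu^{t-1}/\mu^T \leq \eps^2 T/\mu = O(\eps)$. The rest of the argument is a routine but slightly finicky computation of two geometric sums, where one has to be careful that the ratio $(1-\nu^T)/(1-\nu)$, which is of order $1/\eps$, cancels the overall factor of $\eps$ in the main term to leave a genuine constant fraction of $f(N \cap \opt)$.
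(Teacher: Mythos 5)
Your proposal is mathematically sound, but it takes a genuinely different and more laborious route than the paper. Starting from the rearranged recursion $(1+\eps(1-\eps))\,F(\bx(t)) \ge F(\bx(t-1)) + \eps(1-\eps)^{t+1}f(N\cap\opt) - \eps^2 f(\opt)$, the paper applies the crude bound $\frac{1}{1+\eps} \ge 1-\eps$ to obtain $F(\bx(t)) \ge \eps(1-\eps)^{t+2} f(N\cap\opt) + (1-\eps)F(\bx(t-1)) - \eps^2 f(\opt)$, and then proves by a short induction that $F(\bx(t)) \ge t\eps(1-\eps)^{t+2} f(N\cap\opt) - t\eps^2 f(\opt)$. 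Plugging in $t = 1/\eps$ directly yields $(1-\eps)^{1/\eps+2} f(N\cap\opt) - \eps f(\opt) \ge (1/e - \eps)f(N\cap\opt) - \eps f(\opt)$, with no geometric series or asymptotic estimates. Your telescoping substitution $G_t = \mu^t F(\bx(t))$ with the exact factors $\mu = 1+\eps(1-\eps)$ and $\nu = \mu(1-\eps)$ is correct (your calculations $1-\nu = \eps^2(2-\eps)$, $\mu^T \le e^{1-\eps}$, $1-\nu^T \ge 2\eps - O(\eps^2)$ all check out), but it requires a cascade of Taylor estimates that must be verified carefully, and it only delivers $1/e - O(\eps)$, which you then fix by rescaling $\eps$ — whereas the paper's route gives the stated constant directly. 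One small point to tighten: you invoke swap rounding via $\E[f(S)] \ge F(\bx(1/\eps))$, which gives a guarantee only in expectation, while the lemma's statement is deterministic; the paper sidesteps this by mentioning pipage rounding (which is deterministic) as an alternative, so your last sentence should either appeal to pipage rounding or be phrased in expectation.
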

\begin{proof}
By rearranging the inequality from Lemma \ref{lem:cg-onestep}, we obtain
\begin{align*}
  F(\bx(t)) &\ge \frac{\eps(1-\eps)^{t + 1} f(N \cap \opt) + F(\bx(t-1)) - \eps^2 f(\opt)}{1+\eps}\\
  &\ge \eps(1 - \eps)^{t+2}f(N \cap \opt) + (1 - \eps)F(\bx(t-1)) - \eps^2 f(\opt).
\end{align*}
It follows by induction that
  \[ F(\bx(t)) \ge t \eps (1 -\eps)^{t+2} f(N \cap \opt) - t \eps^2 f(\opt).\]
Thus
\begin{equation*}
F(\bx(1/\eps)) \ge (1 - \eps)^{\frac{1}{\eps} + 2}f(N \cap \opt) - \eps f(\opt) 
\ge \left({1 \over e} - \eps \right) f(N \cap \opt) - \eps f(\opt). \qedhere
\end{equation*}
\end{proof}
Combining Lemmas \ref{lem:cg-randomness} and \ref{lem:cg-allsteps} then complete the proof of Lemma \ref{lem:dc-gen-approx}.

\section{Improved Analysis of the Two-Round Algorithm for Non-monotone Maximization with a Cardinality Constraint}
\label{sec:two-round-cardinality}

In this section, we show that for a cardinality constraint, we can improve the analysis slightly of the algorithm given in Subsection~\ref{sec:two-round-algorithms}.

\begin{theorem}
\label{thm:two-round-cardinality}
If $\I$ is a cardinality constraint, the two-round algorithm from Subsection~\ref{sec:two-round-algorithms} achieves a $\left(1 - {1 \over m} \right) {1 - {1 \over e} \over 1 + {1 \over \beta} \left(1 - {1 \over e} \right)}$  approximation in expectation for non-monotone maximization, where $\beta$ is the approximation guarantee of \Alg. 
\end{theorem}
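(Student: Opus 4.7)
\textbf{Proof Plan for Theorem~\ref{thm:two-round-cardinality}.}
My plan is to follow the proof of Theorem~\ref{thm:two-round} essentially verbatim, replacing only the step that uses the strong greedy property with constant $\gamma = 1/2$ by a sharpened version with effective constant $\gamma = 1 - 1/e$ that is available for cardinality constraints. The algorithm and most of the bookkeeping stay the same; the only substantive change is in the first of the two inequalities that are combined via convexity of $f^-$.

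First, I would set up exactly as in the proof of Theorem~\ref{thm:two-round}: for each $e \in \opt$, define $p_e = \Pr_{X \sim \mathcal{V}(1/m)}[e \in \Greedy(X \cup \{e\})]$ and $p_e = 0$ otherwise, and let $\bp$ be the corresponding vector. Let $V_1$ be the sample on machine $1$, $S_1 = \Greedy(V_1)$, $B = \bigcup_i S_i$, and $T = \Alg(B)$. Define
$O = \{e \in \opt : e \notin \Greedy(V_1 \cup \{e\})\}$.
By Lemma~\ref{lem:greedy-rejected-elements}, $\Greedy(V_1 \cup O) = \Greedy(V_1) = S_1$, and $\Ex[\one_O] = \one_{\opt} - \bp$. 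Crucially for cardinality, $|O| \le |\opt| \le k$, so $O$ is feasible and $O \subseteq \opt$.

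The heart of the plan is to establish
\[
 f(S_1) \;\ge\; \left(1 - \tfrac{1}{e}\right) f(S_1 \cup O),
\]
strengthening the $\gamma = 1/2$ bound from the generic strong greedy property. The intuition is that, since $O \subseteq \opt$ and the restriction of $f$ to subsets of $\opt$ is monotone by Lemma~\ref{thm:monotonicity-opt}, the standard $(1 - 1/e)$ greedy analysis for monotone submodular maximization under a cardinality constraint can be imported, with $O$ (which has size $\le k$) as the comparator. Concretely, since $|O \setminus G_t| \le k$ and Greedy on $V_1 \cup O$ picks the max-marginal element at each step, we have the marginal bound $f(G_{t+1}) - f(G_t) \ge \tfrac{1}{k}[f(G_t \cup O) - f(G_t)]$; the task is then to unroll this recurrence into the desired $(1 - 1/e)$ factor. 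The delicate point is that this unrolling usually relies on $f(G_t \cup O) \ge f(O)$, which does not hold pointwise for non-monotone $f$; I expect that tracking the $\opt$-monotonicity of $f$ (applied to $O \subseteq \opt$), together with submodularity, is enough to control the error and recover the $(1 - 1/e)$ factor specifically for cardinality.

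Given the sharpened inequality, the rest of the proof parallels Theorem~\ref{thm:two-round} word for word. Taking expectations and using convexity of $f^-$ yields $\Ex[f(S_1)] \ge (1 - 1/e)\, f^-(\Ex[\one_{S_1}] + \one_{\opt} - \bp)$, while $\Ex[f(T)] \ge \beta f^-(\bp)$ is unchanged. Combining these via convexity of $f^-$ and the bound $f^-\bigl((\Ex[\one_{S_1}] + \one_{\opt})/2\bigr) \ge \tfrac{1 - 1/m}{2} f(\opt)$ from the proof of Theorem~\ref{thm:two-round} gives $\tfrac{1}{1-1/e}\Ex[f(S_1)] + \tfrac{1}{\beta}\Ex[f(T)] \ge (1 - 1/m)\, f(\opt)$, and hence $\max\{\Ex[f(S_1)], \Ex[f(T)]\} \ge (1 - 1/m)\,\tfrac{1-1/e}{1 + (1-1/e)/\beta}\, f(\opt)$, as desired. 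The main obstacle is the first step: adapting the $(1 - 1/e)$ monotone greedy argument to the non-monotone $f$ using only Lemma~\ref{thm:monotonicity-opt} and the fact that $O \subseteq \opt$; everything else is direct plugging-in.
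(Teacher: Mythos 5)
There is a genuine gap, and it is fatal to the route you propose. The pointwise inequality $f(S_1) \ge (1 - 1/e)\,f(S_1 \cup O)$ you want to establish is false, even for monotone (indeed modular) $f$. Take $V = \{a,b\}$, $k = 1$, and $f$ modular with $f(\{a\}) = f(\{b\}) = 1$, $f(\{a,b\}) = 2$. If the Greedy tie-break picks $a$ and $\opt = \{b\}$, then $O = \{b\}$, $f(S_1) = 1$, but $f(S_1 \cup O) = 2$, so $f(S_1)/f(S_1\cup O) = 1/2 < 1 - 1/e$. The strong greedy property~(\ref{eq:greedy-strong-property}) for a cardinality constraint genuinely has $\gamma = 1/2$ and cannot be improved to $1 - 1/e$; this is exactly why the paper states $\gamma = 1/2$ for cardinality in Subsection~\ref{sec:monot-maxim-with}. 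Moreover, your "delicate point" mis-identifies the obstacle: the standard monotone greedy unrolling compares the greedy value to $f(O)$, not to $f(\Greedy(V)\cup O)$, and switching the comparator to $f(S_1\cup O)$ makes the $(1-1/e)$ factor fail for precisely the reason above, regardless of monotonicity.

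The paper's actual argument sidesteps this by never collapsing the greedy recurrence into a single pointwise inequality. It keeps the unrolled form $f(S_1) \ge \sum_{j=0}^{k-1}\frac{1}{k}(1-\frac{1}{k})^{k-1-j} f(S_1^j \cup O)$, where $S_1^j$ is the set of the first $j$ elements picked by Greedy. It then takes expectations and applies the Lov\'asz-extension convexity bound \emph{to each term separately}, $\Ex[f(S_1^j\cup O)] \ge f^-(\Ex[\one_{S_1^j}] + \one_\opt - \bp)$, and pairs each such term with $\frac{1}{\beta}\Ex[f(T)] \ge f^-(\bp)$. After convexity of $f^-$, each paired term is at least $(1-\frac{1}{m})f(\opt)$ because $\Ex[\one_{S_1^j}] \le \frac{1}{m}\one_V$ for every $j$. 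The improvement over the generic $\gamma=1/2$ then comes from the geometric coefficients summing to $1 - (1-\frac{1}{k})^k \ge 1 - 1/e$, not from a sharper pointwise greedy property. The remainder of your plan (the treatment of $T$, the convexity step, the final algebra) does match the paper and would go through verbatim once this term-by-term structure is adopted.
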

\begin{proof}
The analysis is similar to the one in the proof of Theorem~\ref{thm:two-round}, and we describe the main changes in this section. We define $O$ as before, and modify the analysis of the solution $S_1$ as follows.  Let $S_1^j$ be the subset of $S_1$ consisting of the first $j$ elements picked by \Greedy, with $S_1^0 = \emptyset$. By the standard analysis of the \Greedy algorithm for a cardinality constraint, for each $j \in [k]$, we have
  \[ f(S_1^j) - f(S_1^{j - 1}) \geq {f(S_1^{j - 1} \cup O) - f(S_1^{j - 1}) \over k}, \]
  and therefore
  \[ f(S_1) \geq \sum_{j = 0}^{k - 1} {1 \over k} \left(1 - {1 \over k} \right)^{k - 1 - j} f(S_1^j \cup O).\]
  Now, using $\Ex[\one_{S_1^j \cup O}] = \Ex[\one_{S_1^j}] + \Ex[\one_O] = \Ex[\one_{S_1^j}] + \one_\opt - \bp$, and Lemma \ref{lem:lovasz}, we obtain:
  \begin{equation*}
    \Ex[f(S_1^j \cup O)] \geq f^-(\Ex[\one_{S_1^j}] + \one_{\opt} - \bp).
  \end{equation*}
  Therefore
  \begin{equation}
    \Ex[f(S_1)] \geq \sum_{j = 0}^{k - 1} {1 \over k} \left(1 - {1 \over k} \right)^{k - 1 - j} f^-(\Ex[\one_{S_1^j}] + \one_{\opt} - \bp). \label{eq:first-machine-cardinality}
  \end{equation}
We analyze the expected value of the solution $T$ as before, obtaining \eqref{eq:last-machine} from page \pageref{eq:last-machine}. By combining (\ref{eq:first-machine-cardinality}) and (\ref{eq:last-machine}), we get
  \begin{align*}
    \Ex[f(S_1)] + {1 \over \beta} \left(1 - \left(1 - {1 \over k} \right)^k \right) \Ex[f(T)] &\geq \sum_{j = 0}^{k - 1} {1 \over k} \left(1 - {1 \over k} \right)^{k - 1 - j} \Big(f^-(\Ex[\one_{S_1^j}] + \one_{\opt} - \bp) + f^-(\bp) \Big)\\
   &\geq \sum_{j = 0}^{k - 1} {1 \over k} \left(1 - {1 \over k} \right)^{k - 1 - j} 2 \cdot f^-\left({\Ex[\one_{S_1^j}] + \one_{\opt} \over 2}\right),
  \end{align*}
where the last inequality follows from the convexity of $f^-$.  Since $S_1^j \subseteq V_1$ and $V_1$ is a $1/m$ sample of $V$, we have $\Ex[\one_{S_1^j}] \leq {1 \over m} \cdot \one_V$.  Therefore, using the definition of $f^-$ and the non-negativity of $f$, we obtain
   \[ 2 \cdot f^-\left({\Ex[\one_{S_1^j}] + \one_{\opt} \over 2}\right) \geq \left(1 - {1 \over m} \right) f(\opt).\]
  and thus
  \[ \Ex[f(S_1)] + {1 \over \beta} \left(1 - \left(1 - {1 \over k} \right)^k \right) \Ex[f(T)] \geq \left(1 - \left(1 - {1 \over k} \right)^k \right) \left(1 - {1 \over m} \right) f(\opt).\]
 	It follows that
\begin{align*} \max\set{\Ex[f(S_1)], \Ex[f(T)]} &\geq \left(1 - {1 \over m} \right) {\left(1 - \left(1 - {1 \over k} \right)^k \right) \over 1 + {1 \over \beta} \left(1 - \left(1 - {1 \over k} \right)^k \right)} f(\opt) 
\\ &\ge \left(1 - {1 \over m} \right) {1 - {1 \over e} \over 1 + {1 \over \beta} \left(1 - {1 \over e} \right)} f(\opt).
\end{align*}
\end{proof}

\end{document}